\def\keywords#1{\par\addvspace\medskipamount{\rightskip=0pt plus1cm
\def\and{\ifhmode\unskip\nobreak\fi\ $\cdot$
}\noindent\keywordname\enspace\ignorespaces#1\par}}
\newtheorem{theorem}{Theorem}[section]
\newtheorem{lemma}[theorem]{Lemma}
\newtheorem{definition}[theorem]{Definition}
\newtheorem{corollary}[theorem]{Corollary}
\newtheorem{remark}[theorem]{Remark}
\newtheorem{remarks}[theorem]{Remarks}
\newtheorem{example}[theorem]{Example}
\newtheorem{excont}[theorem]{Example}
\newcommand{\ignore}[1]{}
\newcommand{\END}{\hfill\mbox{\raggedright$\Diamond$}}
\newcommand{\etal}{\textrm{et al.\,}}
\newcommand{\Matrixc}[1]{\ensuremath{\left[\begin{array}{cccccccccccc} #1 \end{array}\right]}}
\numberwithin{equation}{section}
\renewcommand{\epsilon}{\varepsilon}
\newcommand{\R}{{\mathbf R}}
\renewcommand{\AA}{{\mathcal A}}
\newcommand{\BB}{{\mathcal B}}
\newcommand{\CC}{{\mathcal C}}
\newcommand{\GG}{{\mathcal G}}
\newcommand{\II}{{\mathcal I}}
\newcommand{\KK}{{\mathcal K}}
\newcommand{\NN}{{\mathcal N}}
\newcommand{\PP}{{\mathcal P}}
\newcommand{\LL}{{\mathcal L}}
\newcommand{\RR}{{\mathcal R}}
\newcommand{\sS}{{\mathcal S}}
\newcommand{\VV}{{\mathcal V}} %
\newcommand{\tLL}{{\til{\LL}}}
\newcommand{\til}[1]{\widetilde{#1}}
\newcommand{\pathto}{\rightsquigarrow}
\newcommand{\selfarrow}{\mbox{\rotatebox[origin=c]{90}{$\circlearrowright$}}}
\newcommand{\flatrightarrow}{\raisebox{-.2ex}{\mbox{\;\rotatebox{90}{\scalebox{1}[1.2]{$\bot$}}}\;\;}}
\newcommand{\biarrow}{\;\smash{{}^{\displaystyle \rightarrow}_{\displaystyle\leftarrow}}\;}
\renewenvironment{description}%
{\list{}{\leftmargin=0pt
\labelwidth\z@ \itemindent-\leftmargin
}}%
{\endlist}
\def\and{%
  \end{tabular}%
  \hskip 1em \@plus.17fil\relax
  \begin{tabular}[t]{c}}
\title{Homeostasis in Gene Regulatory Networks}
\author{Fernando Antoneli\renewcommand{\thefootnote}{\arabic{footnote}}\footnotemark[1] \textsuperscript{,}\renewcommand{\thefootnote}{\fnsymbol{footnote}}\footnotemark[1]  \and 
Martin Golubitsky\renewcommand{\thefootnote}{\arabic{footnote}}\footnotemark[2] \and 
Jiaxin Jin\renewcommand{\thefootnote}{\arabic{footnote}}\footnotemark[2] \textsuperscript{,}\renewcommand{\thefootnote}{\fnsymbol{footnote}}\footnotemark[1] \and 
Ian Stewart\renewcommand{\thefootnote}{\arabic{footnote}}\footnotemark[3]}
\date{\today}
\begin{document}

\maketitle

\renewcommand{\thefootnote}{\arabic{footnote}}
\footnotetext[1]{Centro de Bioinform\'atica M\'edica, Universidade Federal de S\~ao Paulo, S\~ao Paulo, SP, Brazil}
\footnotetext[3]{Department of Mathematics, The Ohio State University, Columbus, OH, USA}
\footnotetext[4]{Mathematics Institute, University of Warwick, Coventry, UK}

\renewcommand{\thefootnote}{\fnsymbol{footnote}}
\footnotetext[1]{Correspondence:
\href{mailto:jin.1307@osu.edu}{jin.1307@osu.edu},
\href{mailto:fernando.antoneli@unifesp.br}{fernando.antoneli@unifesp.br}
}

\begin{abstract}
Gene regulatory networks lie at the heart of many important intracellular signal transduction processes. 
A Gene Regulatory Network (GRN) is abstractly defined as a directed graph, where the nodes represent genes and the edges represent the causal regulatory interactions between genes.
It can be used to construct mathematical models describing the time-varying concentrations of the several molecular species attached to each gene (node) in the network.
In the deterministic setting this is typically implemented by a system of ordinary differential equations. 

A biological system exhibits homeostasis when there is a target quantity, called the input-output function, whose values stay within a narrow range, under relatively wide variation of an external parameter.
A strong form of homeostasis, called infinitesimal homeostasis, occurs when the input-output function has a critical point.

In this paper, we use the framework of infinitesimal homeostasis to study general design principles for the occurrence of homeostasis in gene regulatory networks.
We assume that the dynamics of the genes explicitly includes both transcription and translation, keeping track of both mRNA and protein concentrations.
Given a GRN we construct an associated Protein-mRNA Network (PRN), where each individual (mRNA and protein) concentration corresponds to a node and the edges are defined in such a way that the PRN becomes a bipartite directed graph.
By simultaneously working with the GRN and the PRN we are able to apply our previous results about the classification of homeostasis types (i.e., topologically defined homeostasis generating mechanism) and their corresponding homeostasis patterns.
Given an arbitrarily large and complex GRN $\mathcal{G}$ and its associated PRN $\mathcal{R}$, we obtain a correspondence between all the homeostasis types (and homeostasis patterns) of $\mathcal{G}$ and a subset the homeostasis types (and homeostasis patterns) of $\mathcal{R}$.
Moreover, we completely characterize the 
homeostasis types of the PRN that do not have GRN counterparts.

\smallskip

\noindent
{\bf Keywords:} Infinitesimal Homeostasis, Coupled Dynamical Systems, Input-Output Network, Robust Perfect Adaptation, Gene Expression, Gene Regulatory Network
\end{abstract}

%%%%%%%%%%%%%%%%%%%%%%%%%%%%%%%%%%%%%%
\newpage

\tableofcontents
%%%%%%%%%%%%%%%%%%%%%%%%%%%%%%%%%%%%%%

\section{Introduction}
\label{S:intro}

{\em Gene expression} is the process by which the information encoded in a gene is turned into a biological function, which ultimately manifests itself as a phenotype effect.
This is accomplished by a complex series of enzymatic chemical reactions within the cell leading to the synthesis of specific macro-molecules called the {\em gene product}.
The process of gene expression is used by all known life -- eukaryotes (including multicellular organisms), prokaryotes (bacteria and archaea), and even viruses -- to generate the molecular machinery for life.

There are, basically, two types of gene products: (i) for {\em protein-coding genes} the gene product is a {\em protein}; (ii) {\em non-coding genes}, such as transfer RNA (tRNA) and small nuclear RNA (snRNA), the gene product is a functional {\em non-coding} RNA (ncRNA).

{\em Regulation of gene expression}, or simply {\em gene regulation}, is the range of mechanisms that are used by cells to increase or decrease the amount of specific gene products. 
Sophisticated schemes of gene expression are widely observed in biology, going from triggering developmental pathways, to responding to environmental stimuli.

A {\em gene} (or {\em genetic}) {\em regulatory network} (GRN) is a collection of molecular regulators that interact with each other and with other substances in the cell to govern the gene expression levels of mRNA and proteins. 
The {\em molecular regulators} can be DNA, RNA, protein or any combination of two, or more of these three that form a complex, such as a specific sequence of DNA and a transcription factor to activate that sequence. 
The interaction can be direct or indirect (through transcribed RNA or translated protein).
When a protein acts as a regulator it is called a {\em transcription factor}, which is one of the main players in regulatory networks.  
By binding to the promoter region of a coding gene they turn them on, initiating the production of another protein, and so on. 
Transcription factors can be {\em excitatory} ({\em activators}) or {\em inhibitory} ({\em repressors}).

Another fundamental concept in biology is that of {\em homeostasis}, which derives from the Greek language, and means `to maintain a similar stasis' \cite{C39}.  
A prototypical biological example of homeostasis occurs in warm blooded mammals where the mammal's internal body temperature remains approximately constant on variation of the external temperature. 

The notion of homeostasis is often associated with regulating global physiological parameters like temperature, hormone levels, or concentrations of molecules in the bloodstream in complex multicellular organisms. 
However, it also can be applied to unicellular organisms, where the issue is how some internal cell state of interest (such as the concentration of some gene product) responds to changes in the intra-cellular and/or extra-cellular environment \cite{MTELT09,TM16,SMXZT17}.
For instance, Antoneli~\etal~\cite{AGS18} study the occurrence of homeostasis in a feedforward loop motif from the GRN of \emph{S.~cerevisiae} (see Figure \ref{F:reg_net_1a}).

In a series of papers about homeostasis in systems of ODEs we have developed a comprehensive theory for its analysis and classification:
(1) homeostasis can be formulated in terms of infinitesimal homeostasis and singularity theory \cite{GS17,GS18,GS23},
(2) infinitesimal homeostasis in biochemical networks where nodes represent one-dimensional concentrations of substrates can be studied in an abstract framework of `input-output networks' \cite{RBGSN17, GW20, WHAG21},
(3) infinitesimal homeostasis can be topologically characterized in terms of `homeostasis types' on a general class of input-output networks \cite{WHAG21,MF21,MF22}, and
(4) homeostasis types themselves can be classified in terms of `homeostasis patterns' \cite{DABGNRS23}. 

In this paper we build on these results to deal with infinitesimal homeostasis in gene regulatory networks (GRN).
As we explain in Section \ref{SS:GRN} a GRN is not exactly a biochemical network as in \cite{RBGSN17}.
There is a `mismatch' between the number of nodes in the network and the number of state variables of the underlying system of ODEs.
In order to resolve this `mismatch',
we generalize the approach of \cite{AGS18}, used to analyze feedforward loops, to arbitrary GRNs.
The idea there was to replace each `gene' node of a GRN by a pair of `protein' and `mRNA' nodes to obtain a {\em protein-mRNA network} (PRN).
Now, PRNs have a mathematical structure similar to that of biochemical networks and hence the theory of \cite{WHAG21,DABGNRS23} can be readily applied.
More importantly, since the classification results have a purely combinatorial side, we can consider the GRN and its associated PRN simultaneously, and work out a correspondence between the classifications of homeostasis types and homeostasis patterns in both of them.
Even though infinitesimal homeostasis only makes sense on the PRN (dynamical) level, its purely combinatorial aspects can be transferred to the GRN.

The main result of this paper is a complete characterization of homeostasis types and homeostasis patterns on the PRN that have correspondent on the GRN.
A byproduct of this characterization is the discovery of homeostasis types and homeostasis patterns on the PRN that do not have counterparts on the GRN.
The novelty of our approach is the simultaneous use of two networks, the GRN and the PRN, in the analysis of gene expression homeostasis, and the lack of assumptions about the functional form of the differential equations.

\subsection{Mathematical Modeling of Gene Regulatory Networks}

The development of advanced experimental techniques in molecular biology is producing increasingly large amounts of experimental data on gene regulation. 
This, in turn, demands the development of mathematical modelling methods for the study and analysis of gene regulation.
Mathematical models of GRNs describe both gene expression and regulation, and in some cases generate predictions that support experimental observations.
Formally, a GRN is represented by a directed graph.
Nodes represent the variables associated to genes (e.g., mRNA and/or protein concentration) and directed links represent couplings between genes (e.g., effect of one gene product on other genes).
In any case, GRN models can be roughly divided into three classes (see \cite{KS08}):
\begin{enumerate}[(1)]
\item {\it Logical models.} This class of models aims to describe regulatory networks qualitatively. They allow users to obtain a basic understanding of the different functionalities of a given network under different conditions. Their qualitative nature makes them flexible and easy to fit to biological phenomena, although they can only answer qualitative questions. 
Among them, the most common approaches are those based on Boolean networks \cite{K69,T73,SS96}. 
See also Barbuti \etal \cite{BGMN20} for a comprehensive review.
\item {\it Continuous models.} This class of models allows us to understand and manipulate behaviours that depend on finer timing and exact molecular concentrations. For example, to simulate the effects of dietary restriction on yeast cells under different nutrient concentrations, users must resort to the finer resolution of continuous models.
These models are best formulated in terms of nonlinear dynamical systems given by coupled systems of ordinary differential equations \cite{G63,G68a,G68b,TO78,S87}.
See also Polynakis \etal \cite{PHB09}.
\item {\it Stochastic models.} This class of models was introduced following the observation that the functionality of a regulatory network is often affected by noise \cite{ELSS02}. 
As the majority of these models account for interactions between individual molecules, they are called single molecule level models \cite{K91,PY95,MA97,KE01,KEBC05}.
See also Bocci \etal \cite{BJNJO23}.
\end{enumerate}

This paper focuses on the mathematical modelling of GRNs using coupled systems of ordinary differential equations.
In the coupled ODE setting there is an  important issue concerning the number of variables / equations associated to each node.
We will return to this issue in Section
\ref{SS:GRN}.

\subsection{Gene Expression Homeostasis}
\label{SS:GEH}

Homeostasis is often modeled using differential equations and, in this context, can be interpreted in two mathematically distinct ways.  
One boils down to the existence of a `globally stable equilibrium'. 
Here changes in the environment are considered to be perturbations of {\em initial conditions} \cite{K04,K07,TD90}. 
A stronger interpretation works with a {\em parametrized} family of differential equations possessing a {\em stable equilibrium}.
Now homeostasis means that a function of this equilibrium, the {\em input-output function}, changes by a relatively small amount when the {\em parameter} varies by a much larger amount \cite{NBR14,TM16,GS17}.
In this paper we adopt the second, stronger, interpretation and focus on the mathematical aspects of this concept. 

There is a large body of work about homeostasis from the standpoint of control theory. 
In this context, homeostasis is related to the more stringent notion of {\em robust perfect adaptation}.
Now, the input-output function is exactly constant over the parameter range (see \cite{F16,K21,ST23} for more details).  
We will not consider this stricter form of homeostasis here, but it is worth pointing out that our results do apply to this context, since robust perfect adaptation is a particular case of the notion of `infinitesimal homeostasis' (see \cite{MF21,MF22}).

Given a family of differential equations depending on a parameter $\II$ and possessing a stable equilibrium $X(\II)$,
we say that {\em homeostasis} occurs in this system if the input-output function $z(\II)=\Phi(X(\II))$, where $\Phi$ a smooth function, is approximately constant upon variation of $\II$.  
Golubitsky and Stewart~\cite{GS17} observe that homeostasis on some neighborhood of a specific value $\II_0$ follows from the occurrence of {\em infinitesimal homeostasis} at $\II_0$: $\frac{dz}{d\II}(\II_0) = 0$.  
This observation is essentially the well-known fact that the value of a function changes most slowly near a stationary (or critical) point.
Despite the name, `infinitesimal homeostasis' often implies that the system is homeostatic over a relatively large interval of the parameter \cite[Section 5.4]{GS23}.  
The key quantity is the value of the {\em second derivative} of $z$ at $\II_0$. 
As the second derivative becomes smaller, the interval of homeostasis becomes larger.

In Antoneli \etal \cite{AGS18} the authors apply this formalism to find infinitesimal homeostasis in a small $3$-node GRN called `feedforward loop motif' (see Figure \ref{F:reg_net_1a}).
Assuming that the regulation of the three genes is inhibitory (repression) and that only gene SPF1 is regulated by upstream transcription factors (the input parameter), they show that the protein concentration of gene GAP1 (the output node) robustly exhibits infinitesimal homeostasis with respect to variation on the regulation level of SPF1 over a wide range.
Moreover, SPF1 and GZF3 (i.e., their protein concentrations) are not homeostatic for any value of the input parameter.
Here, `robustly' means that the occurrence of infinitesimal homeostasis (or not) on the protein concentrations of the three genes described above is persistent under variation of kinetic parameters of the defining differential equations (the rates of synthesis and degradation of the mRNA and protein concentrations).
In order to obtain compatibility of the infinitesimal homeostasis formalism with the GRN structure they use the protein-mRNA network (PRN) representation.
Moreover, in this particular setting, they were able to explicitly compute the homeostasis point by assuming a special functional form for the equations.

In this paper, we consider arbitrary large and complex GRN, with the most general functional form for the dynamics.
However, in this generality, it is no longer possible to explicitly compute homeostasis points.

\begin{figure}[!htb]
\centerline{\includegraphics[width=0.4\textwidth]{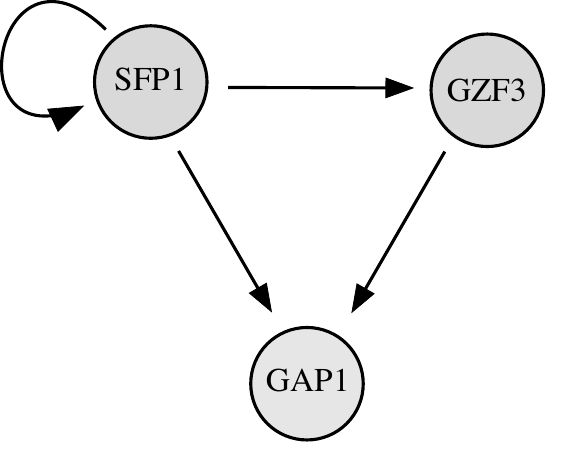}}
\caption{Feedforward loop motif from the GRN of \emph{S.~cerevisiae} (see Antoneli \etal \cite{AGS18}).  Note that autoregulation implies that the protein concentration associated with the gene SFP1 affects directly the mRNA concentration associated with that gene.} 
\label{F:reg_net_1a}
\end{figure}

\paragraph{Structure of the Paper.}
In Section~\ref{SS:GRN} we introduce the notion of GRN and its associated PRN.
We recall the definition of input-output network, infinitesimal homeostasis, homeostasis types, homeostasis subnetworks and homeostasis patterns and show that this theory can be directly applied to PRNs.
Next, we explain how the purely combinatorial part of the classification theory can be applied directly to the GRN and how this relates to the classification for the PRN.
We illustrate the general theory with two paradigmatic examples: (a) feedforward loop and (b) 
feedback inhibition.
The proofs of all the results in full generality are provided in the Appendix. 
In Appendix \ref{S:Graphic_theory}, we recall the basic terminology of input-output networks and state the results regarding the combinatorial characterization of homeostasis types and homeostasis patterns.
In Appendix \ref{S:inf homeo in RPN}, we prove the results about infinitesimal homeostasis classification in the GRN and its associated PRN are related to each other.
In Appendix \ref{S:pattern of homeo}, we show how the structure of homeostasis patterns in the GRN and its associated PRN are related to each other.

\section{Gene Regulatory Networks}
\label{SS:GRN}

At the abstract level a GRN is a directed graph whose nodes are the genes and a directed link from a source gene to a target gene indicates that the gene product of the source gene acts as a molecular regulator of the target gene.
{\em Autoregulation} occurs when a link connects a gene to itself, that is, the gene product is a molecular regulator of the gene itself.

At the dynamical level, a gene (a node in the GRN) represents the collection of processes that ultimately lead to the making of the gene product.
A protein-coding gene should have at least two processes: (i) {\em transcription}, that is, the synthesis of a mRNA copy from a DNA strand and (ii) {\em translation}, that is, the synthesis of protein from a mRNA. 
The `output' of the corresponding node in the GRN is the protein concentration at a given time.

Assume the simplest scenario, namely, a GRN containing only protein-coding genes.
Then, each gene (node) represents two concentrations (the concentration of mRNA and the concentration of protein). 
That is, there is a mismatch between the number of nodes $N$ and the number of state variables $2N$.
There are two ways to deal with this mismatch.
\begin{enumerate}[(1)]
\item {\it Protein-protein formalism.} This approach is based on the fact that, in some cases, the changes to mRNA concentrations occur much faster than the changes to the concentrations of the associated proteins \cite{PHB09}. 
More specifically, the mRNA concentration quickly reaches a steady-state value before any protein is translated from it.
Formally, this technique is called {\em quasi steady-state approximation} (QSSA) \cite{SGT17}.
Then, we can solve the steady-state mRNA equations and plug the result in the protein equations.
This procedure effectively reduces the number of state variables by half, thus matching of the number of nodes in the GRN (see \cite{KS08,RS17a,RS17b,SMXZT17}).

\item {\it Protein-mRNA formalism.}
In this approach we keep the mRNA and protein concentrations for each gene and double the number of nodes of the network, leading to the notion of {\em protein-mRNA network} (PRN) \cite{PHB09,MSTZ16}.
Now the network has two `types' of nodes (mRNA and protein) and two `types' of arrows (mRNA $\to$ protein and protein $\to$ mRNA).
As we will see below there is a correspondence between the two networks that allows us to transfer some properties back and forth. 
For instance, this is the approach adopted in Antoneli~\etal~\cite{AGS18} for the particular example of feedforward loop motif (see also \cite{KEBC05,M05,M08}).
\end{enumerate}

There are mathematical and biological reasons to prefer the second possibility.
From the mathematical point of view it is more convenient to work with the PRN \cite{MMRS23}. 
More specifically, it allows us to use the general theory of network dynamics \cite{GS23} to associate a natural class of ODEs to a PRN, which contains virtually all models discussed in the literature. 
Moreover, it makes it possible to apply the techniques developed in Wang \etal~\cite{WHAG21} and Duncan \etal~\cite{DABGNRS23} to classify the homeostasis types in PRN and GRN.

Biologically, the use of protein-protein networks is more appropriate to model prokaryotic gene regulation, due to the fact that: (i) transcription and translation occur, almost simultaneously, within the cytoplasm of a cell due to the lack of a defined nucleus, (ii) the coding regions typically take up $\sim 90\%$ of the genome, whereas the remaining $\sim 10\%$ does not encode proteins, but most of it still has some biological function (e.g., genes for transfer RNA and ribosomal RNA).
In this case, it is reasonable to assume that gene expression is regulated primarily at the transcriptional level.
On the other hand, gene expression in eukaryotes is a much more complicated process, with several intermediate steps: transcription occurs in the nucleus, where mRNA is processed, modified and transported, and translation can occur in a variety of regions of the cell.
In particular, several non-coding genes that are transcribed into functional non-coding RNA molecules, such as, microRNAs (miRNAs), short interfering RNAs (siRNAs) and long non-coding RNAs (lncRNAs), play an important role in eukaryotic gene regulation.
As we will explain later, it is very easy to incorporate these regulatory elements into the framework of PRNs (see Remark \ref{RMK:non_coding}).

\subsection{From GRN to PRN} 
\label{SS:PRN}

From now on we make the simplifying assumption that the GRN consists of protein-coding genes with transcription and translation.

Since we are interested in gene expression homeostasis, we consider {\em input-output} GRNs. 
They are supplied with an external parameter $\II$ -- e.g., environmental disturbance or transcription activity (a function of the concentration of transcription factors) -- that affects the mRNA transcription of one gene, called the {\em input gene} $\iota$ of the GRN.
The protein concentration of a second gene, called the {\em output gene} $o$ of the GRN, is the concentration where we expect to exhibit homeostasis.
These two distinguished nodes are fixed throughout the analysis.
We assume that only the input node is affected by the external parameter $\II$.

Given a GRN $\GG$ we construct an associated PRN  $\RR$ as follows.  
Every node $\rho$ in the GRN $\GG$ corresponds to two nodes in the associated PRN: $\rho^R$ (the mRNA concentration of gene $\rho$) and $\rho^P$ (the protein concentration of gene $\rho$).
Since there is no intermediary process, the protein concentration $\rho^P$ is affected only by the mRNA concentration $\rho^R$. 
Hence there is a PRN arrow from $\rho^R\to\rho^P$ and no other PRN arrow has head node $\rho^P$. 
Next, each GRN arrow $\sigma\to\rho$ leads to a PRN arrow from the protein concentration $\sigma^P$ to the mRNA concentration $\rho^R$ (that is, $\sigma$ is a transcription factor of $\rho$).
Note that each arrow in the GRN leads to a single arrow in the PRN.  
In particular, autoregulation in $\rho$ leads to an arrow of $\rho^P\to\rho^R$.  
Finally, if the GRN is an input-output network with input gene $\iota$ and output gene $o$, then the associated PRN is an input-output network with input node $\iota^R$ and output node $o^P$.
It follows that a PRN is always a {\em bipartite digraph (directed graph)} \cite{PKPBMB18}, where the two distinguished subsets of nodes are the $\rho^R$ nodes and the $\rho^P$ nodes.

For example, the abstract GRN corresponding to the feedforward loop motif shown in Figure~\ref{F:reg_net_1a} is the $3$-node input-output network shown in Figure~\ref{F:reg_net_1b}(a). 
Its associated PRN is the $6$-node input-output network shown in Figure~\ref{F:reg_net_1b}(b).

\begin{figure}[!htb]
\begin{subfigure}[b]{0.45\textwidth}
\centering
\includegraphics[width=\textwidth]{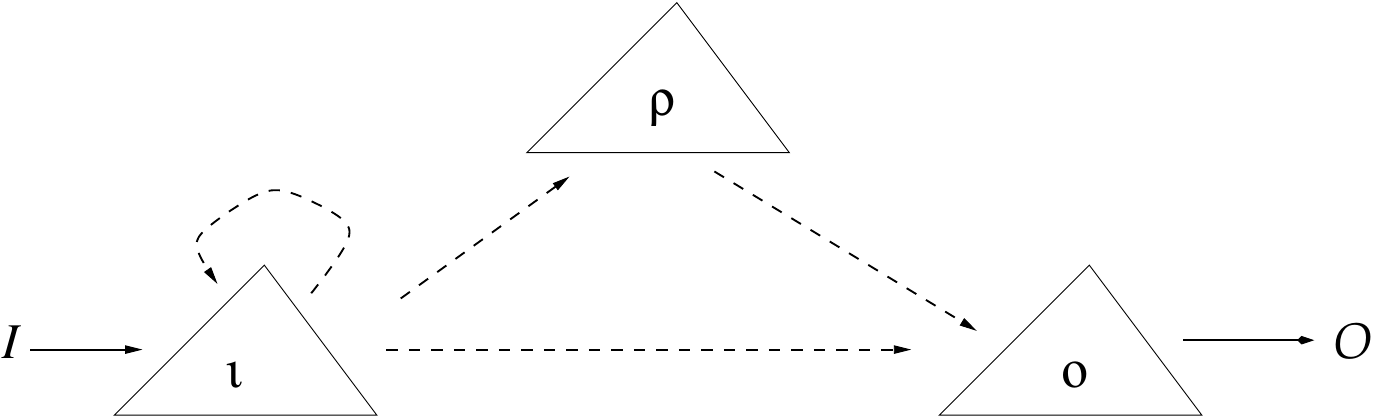}
\caption{GRN}
\label{F:3node_grn2}
\end{subfigure} \qquad
\centering
\begin{subfigure}[b]{0.45\textwidth}
\centering
\includegraphics[width=\textwidth]{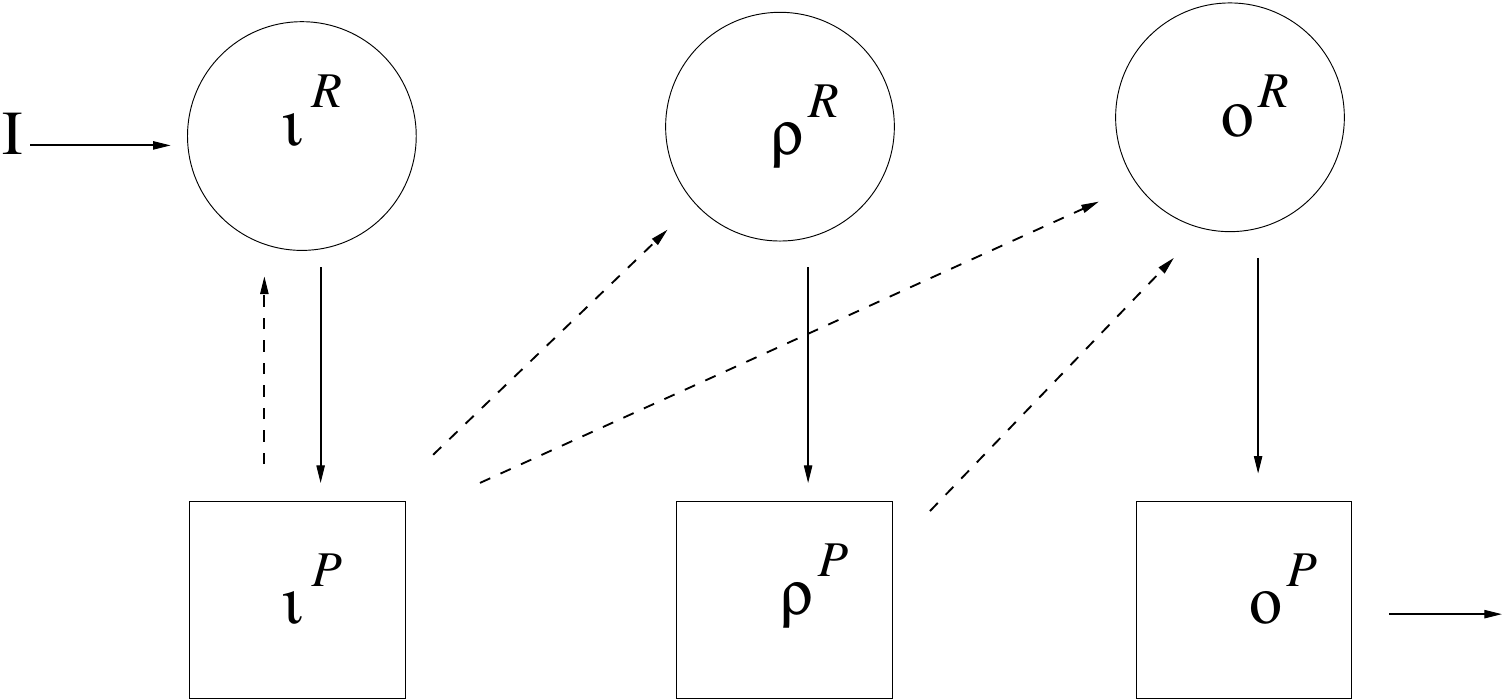}
\caption{PRN}
\label{F:3node_prn2}
\end{subfigure} \qquad
\caption{\label{F:reg_net_1b}
\textbf{Feedforward loop.}
(a) The $3$-node input-output GRN. Triangles designate genes and dashed arrows designate either gene coupling or auto-regulation. 
(b) The corresponding $6$-node PRN. Circles designate mRNA concentrations and squares designate protein concentrations. 
Solid lines stand for ${}^R\longrightarrow {}^P$ coupling inside a single gene and dashed lines for ${}^P\dashrightarrow {}^R$ coupling between genes, that is, the couplings (arrows) inherited from the GRN.}
\end{figure}

Following \cite{GS23}, it is straightforward to associate a class of ODEs (vector fields) to a PRN.
The class of ODEs that are compatible with the network structure are called {\em admissible} (see Section \ref{SS:io network}).
To simplify notation we use the name of each node to refer to the one-dimensional coordinate  corresponding to that node.

\begin{example} \label{E:FFL} \normalfont
Consider the $3$-node feedforward loop GRN shown in Figure~\ref{F:3node_grn2}. 
Its associated PRN is the $6$-node network shown in Figure~\ref{F:3node_prn2}.
To each node of the PRN corresponds a $1$-dimensional state variable.
Hence the total state of the system is given by a vector $(\iota^R, \iota^P, \rho^R, \rho^P, o^R, o^P)\in\R^6$.
The general admissible system of ODEs is
\begin{equation} \label{e:FFL_PRN}
\begin{split}
\dot{\iota}^R & = f_{\iota^R}(\iota^R, \iota^P, \II) \\
\dot{\iota}^P & = f_{\iota^P}(\iota^R, \iota^P) \\
\dot{\rho}^R & = f_{\rho^R}(\iota^P, \rho^R) \\
\dot{\rho}^P & = f_{\rho^P}(\rho^R, \rho^P) \\
\dot{o}^R & = f_{o^R}(\iota^P, \rho^P, o^R) \\
\dot{o}^P & = f_{o^P}(o^R, o^P) \\
\end{split}
\end{equation}
The input node represents the mRNA concentration $\iota^R$ and the output node represents the protein concentration $o^P$. 
The input parameter $\II$ appears explicitly only in the equation of the input node.
\END
\end{example}

In general, the special bipartite structure of the PRN imposes restrictions on the functional form of the admissible vector fields.
For each gene $\rho$ there is a pair of 
of PRN nodes $\rho^R$, $\rho^P$ that yields a pair of $1$-dimensional state variables and corresponding differential equations 
\begin{equation} \label{EQ:GEN_FORM_ADM}
\begin{split}
\dot{\rho}^R & = f_{\rho^R}(\rho^R,\rho^P,\tau^P_1,\ldots,\tau^P_k) \\
\dot{\rho}^P & = f_{\rho^P}(\rho^R, \rho^P) \\
\end{split}    
\end{equation}
Here, $f_{\rho^R}$ and $f_{\rho^P}$ are smooth functions.
The variables $\tau^P_1,\ldots,\tau^P_k$ are the transcription factors (TFs), that is, the corresponding protein concentrations of the genes $\tau_1,\ldots,\tau_k$ that regulate gene $\rho$.
They are determined by the GRN arrows $\tau_i\to\rho$ and the corresponding PRN arrows $\tau^P_i\to\rho^R$.
The presence of the variable $\rho^P$ in the function $f_{\rho^R}$ occurs if and only if gene $\rho$ has a self-coupling in the GRN (see Remark \ref{RMK:self-coupling} below).
If $\rho$ is the input node then the function $f_{\rho^R}$ depends explicitly on the input parameter $\II$, as well.
From now on we will assume only the general form \eqref{EQ:GEN_FORM_ADM} for each protein-coding gene, since our classification results depend only on the GRN coupling structure, not on the particular form of the equations (see Remark \ref{RMK:act_rep}).

\begin{remark}[{\bf Activation and Repression}] \normalfont \label{RMK:act_rep}
Very often, GRNs in the literature are drawn with two types of arrows:
(i) $\tau\rightarrow\rho$ to indicate that gene $\tau$ (more specifically, its protein) acts as an {\em activator}, or {\em excitatory} transcription factor, of gene $\rho$, and (ii) $\tau\flatrightarrow\rho$ to indicate that gene $\tau$ (more specifically, its protein) acts as a {\em repressor}, or {\em inhibitory} transcription factor, of gene $\rho$.
In terms of the associated differential equations this information is encoded in the $\tau$-dependence of the function $f_{\rho^R}$ in \eqref{EQ:GEN_FORM_ADM}.
Typically, the dependence of $f_{\rho^R}$ on $\tau$ is defined by the so called {\em gene input function}. Well-known examples of gene input functions are the classical {\em Michalis-Menten} and {\em Hill} functions \cite{S08}, and their multi-variate versions \cite{KBZDA08}.
Since we do not specify the functional form of the differential equations, we will not use distinct arrow types in the GRN to indicate activation/repression of genes.
However, we do employ distinct arrow types (see \cite{CH10}) in the PRN to distinguish mRNA to protein (${}^R\longrightarrow {}^P$)  and protein to mRNA (${}^P\dashrightarrow {}^R$) couplings.
\END
\end{remark}

\begin{remark}[{\bf Autoregulation}] \normalfont \label{RMK:self-coupling}
Self-coupling, or autoregulation, is a peculiar feature of GRNs.
It means that the gene product acts as a transcription factor of the gene itself.
The dynamical interpretation of autoregulation is revealed by the associated PRN.
It is a coupling from the protein node to the mRNA node of the {\em same} gene (see Figure \ref{F:reg_net_1b}).
Moreover, it is clear from \eqref{e:FFL_PRN} that the self-coupling representing autoregulation is not the same as `self-interaction'.
In fact, all nodes of the PRN are {\em self-interacting}, in the sense that the right-hand side of each differential equation explicitly depends on the state variable on the left-hand side.
The clarification of the dynamical interpretation of autoregulation is another advantage of the PRN formalism.
\END
\end{remark}

\subsection{Infinitesimal Homeostasis in PRN}
\label{SS:io network}

Input-output networks occur naturally when studying homeostasis in biochemical networks \cite{RBGSN17,GS17}.
We recall the setup of an abstract input-output network introduced in \cite{WHAG21}.

An input-output network $\GG$ is a directed graph consisting of $n+2$ nodes. 
There is an {\em input node} $\iota$, an {\em output node} $o$, and $n$ {\em regulatory nodes} $\rho = (\rho_1,\ldots,\rho_n)$.
We assume that every node lies on a path from the input node $\iota$ to the output node $o$. 
That is, the network is a {\em core} network.
An {\em admissible system} associated with the input-output network $\GG$ is a parameterized system of ODEs
\begin{equation}\label{eq:ad io}
\dot{X} = F(X, \II)
\end{equation}
where $X = (x_\iota,x_\rho,x_o) \in \R^{n+2}$ are the node state variables, $\II\in\R$ is the {\em external input parameter}, and $F = (f_\iota,f_{\rho},f_o)$ is the associated vector field.
Explicitly, \eqref{eq:ad io} is the system 
\begin{equation} \label{eq: io}
\begin{split}
\dot{x}_\iota & = 
f_\iota(x_\iota,x_\rho, x_o,\II) \\
\dot{x}_\rho & =
f_\rho(x_\iota,x_\rho,x_o) \\
\dot{x}_o & = f_o(x_\iota,x_\rho,x_o)
\end{split}
\end{equation}
The compatibility of $F$ with the network $\GG$ is given by the following conditions:
\begin{enumerate}[(a)]
\item $f_j$ depends on node $\ell$ only if there is an arrow in the network $\GG$ from $\ell\to j$. 
\item $f_\iota$ is the only vector field component that depends explicitly on $\II$ and $f_{\iota,\II}\neq 0$ generically.
\end{enumerate}
We write $f_{i,j}$ to denote the partial derivative of $f_i$ with respect to $j$ at $(X_0,\II_0)$.

In order to define the notion of `infinitesimal homeostasis' in the context of input-output networks, assume that $X_0$ is a linearly stable equilibrium of \eqref{eq: io} at $\II=\II_0$.
Stability of $X_0$ implies that there is a unique stable equilibrium at $X(\II)=\big(x_\iota(\II),x_\rho(\II),x_o(\II)\big)$ as $\II$ varies on neighborhood of $\II_0$.

\begin{definition} \normalfont
The {\em input-output} function of system \eqref{eq: io}, at the family of equilibria $\big(X(\II),\II\big)$, is the function $\II\to o(\II)$, that is, the projection of $X(\II)$ onto the coordinate $o$.
We say that the input-output function $o(\II)$ exhibits \emph{infinitesimal homeostasis} at $\mathcal{I}_0$, if
\begin{equation} \label{def:inf_homeo_io}
o^\prime(\mathcal{I}_0) = 0
\end{equation}
where $'$ indicates differentiation with respect to $\II$. 
\END
\end{definition}

A straightforward application of Cramer's rule in \cite{WHAG21} gives a formula for determining 
infinitesimal homeostasis.  Let $J$ be the $(n+2)\times (n+2)$ Jacobian matrix of \eqref{eq: io} 
at the equilibrium $X_0$ with $\II=\II_0$,
\begin{equation} \label{jacobian io}
J = \begin{bmatrix}
  f_{\iota, \iota}  &  f_{\iota, \rho} & f_{\iota, o} \\
  f_{\rho,\iota} &  f_{\rho,\rho} & f_{\rho,o} \\
  f_{o,\iota} &  f_{o, \rho} & f_{o, o}
\end{bmatrix}
\end{equation}
The $(n+1)\times (n+1)$ {\em homeostasis matrix} $H$ is obtained from the Jacobian matrix
$J$ by eliminating the first row and the last column, that is,
\begin{equation} \label{H_def}
H = \Matrixc{ f_{\rho,\iota} &  f_{\rho,\rho}\\
  f_{o,\iota} &  f_{o,\rho}}
\end{equation}

\begin{lemma}[{\cite[Lemma 1.5]{WHAG21}}] \label{cramer_rule}
Suppose $o(\mathcal{I})$ is the input-output function of an input-output network $\mathcal{G}$.
Then $\mathcal{I}_0$ is a point of infinitesimal homeostasis if and only if
\begin{equation} \notag
\det(H) = 0
\end{equation}
at the equilibrium $\big(X(\mathcal{I}_0),\mathcal{I}_0 \big)$.
\end{lemma}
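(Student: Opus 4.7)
The plan is to reduce the claim to a direct application of Cramer's rule to the linear system obtained by implicit differentiation of the equilibrium equation. The stability hypothesis is used only to guarantee that $J$ is invertible in a neighborhood of $(X_0, \II_0)$, so that the implicit function theorem yields a smooth branch $X(\II)$ of equilibria, and so that dividing by $\det(J)$ is legitimate.

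First I would differentiate the identity $F(X(\II), \II) \equiv 0$ with respect to $\II$, obtaining
\begin{equation} \notag
J \, X'(\II) \;+\; F_\II \;=\; 0,
\end{equation}
where $F_\II$ denotes the partial derivative of $F$ with respect to $\II$. By admissibility condition (b), only $f_\iota$ depends explicitly on $\II$, so $F_\II = (f_{\iota,\II}, 0, \ldots, 0)^T$ has a single nonzero entry in the first slot. Solving the linear system $J\, X'(\II) = -F_\II$ and reading off the last coordinate gives $o'(\II)$.

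Next I would apply Cramer's rule to that last coordinate: $o'(\II) = \det(J^{(o)})/\det(J)$, where $J^{(o)}$ is obtained from $J$ by replacing its last column (the one corresponding to the output variable) by $-F_\II$. Because $-F_\II$ has only its top entry nonzero, cofactor expansion along that last column collapses to a single term, namely $(-f_{\iota,\II})$ times the $(1, n+2)$ cofactor of $J$. That cofactor is, up to a sign $(-1)^{1+(n+2)}$, the determinant of the minor obtained by deleting the first row and the last column of $J$, and by \eqref{H_def} this minor is exactly the homeostasis matrix $H$. Combining these pieces yields an identity of the form
\begin{equation} \notag
o'(\II_0) \;=\; (-1)^{n}\, \frac{f_{\iota,\II} \, \det(H)}{\det(J)}
\end{equation}
at $(X_0, \II_0)$.

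Since $\det(J) \neq 0$ by stability and $f_{\iota,\II} \neq 0$ generically by condition (b), this identity gives $o'(\II_0) = 0$ if and only if $\det(H) = 0$, which is the statement of the lemma. There is no real obstacle here: the only thing to be careful about is bookkeeping the cofactor sign and confirming that the block structure of $J$ inherited from the node ordering $(\iota, \rho, o)$ matches the definition of $H$ in \eqref{H_def}. All of the combinatorial content of the lemma is already built into that definition via the admissibility conditions (a) and (b).
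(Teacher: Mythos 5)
Your proposal is correct and is essentially the same argument the paper relies on: the lemma is quoted from Wang \emph{et al.}\ \cite{WHAG21}, whose proof is exactly this implicit differentiation of $F(X(\II),\II)=0$ followed by Cramer's rule, with the single-entry column $-F_\II$ collapsing the cofactor expansion onto the minor of $J$ obtained by deleting the first row and last column, i.e.\ $H$. The sign bookkeeping and the use of stability ($\det J\neq 0$) plus the genericity condition $f_{\iota,\II}\neq 0$ match the cited derivation, so nothing further is needed.
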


\setcounter{theorem}{0}
\begin{excont}[{\bf Continued}] \normalfont
Consider the $6$-node PRN shown in Figure~\ref{F:3node_prn2}.
The input-output function is given by $\II\mapsto o^P(\II)$.
The homeostasis matrix for the corresponding system of ODEs \eqref{e:FFL_PRN} is the $5\times 5$ matrix obtained from the $6\times 6$ Jacobian matrix by deleting its first row and its last column, namely,
\[
H = \Matrixc{ 
f_{\iota^P, \iota^R} & f_{\iota^P,\iota^P} & 0 & 0 & 0 \\
0 & f_{\rho^R,\iota^P} & f_{\rho^R, \rho^R} & 0 &  0 \\
0 & 0 & f_{\rho^P,\rho^R} & f_{\rho^P,\rho^P} & 0 \\
0 & f_{o^R,\iota^P} & 0 & f_{o^R,\rho^P} & f_{o^R,o^R} \\
0 & 0 & 0 & 0 &  f_{o^P,o^R} 
}
\]  
It follows that 
\begin{equation} \label{det_homeo_feedforward}
\det(H) = f_{\iota^P,\iota^R} \,
\big(f_{\rho^R,\iota^P} \, f_{\rho^P,\rho^R} \, f_{o^R,\rho^P} + f_{\rho^R,\rho^R} \, f_{\rho^P,\rho^P} \, f_{o^R,\iota^P} \big)
\, f_{o^P,o^R}
\end{equation}
In this example the decomposition of $\det(H)$ into irreducible factors leads to three different `homeostasis types', two degree $1$ factors and one degree $3$ factor.
The `homeostasis class' associated with the three irreducible factors in this example is called {\em structural homeostasis}.
Structural homeostasis associated to a degree $1$ factor is called {\em Haldane homeostasis}.
\END
\end{excont}
\setcounter{theorem}{2}

Lemma~\ref{cramer_rule} reduces the computation of infinitesimal homeostasis to solving $\det(H) = 0$, where $H$ is the homeostasis matrix in \eqref{H_def}. 
Wang \etal~\cite{WHAG21} use Frobenius-König theory to show the existence of two $(n + 1) \times (n + 1)$ permutation matrices $P$ and $Q$ such that
\begin{equation} \label{phq io}
PHQ = \Matrixc{B_1 & \ast & \cdots & \ast \\
0 & B_2 & \cdots & \ast \\
\vdots &  &  & \vdots \\
0 & 0 & \cdots & B_m}
\end{equation}
where $B_1, \ldots, B_m$ are $m$ unique irreducible square blocks.  Hence
\[
\det(H) = \det(B_1) \cdots \det(B_m)
\]
and no further factorization of $\det(H)$ is possible. 
The unique irreducible square blocks $B_{\eta}$ in \eqref{phq io} are called {\em irreducible blocks}.
The irreducible blocks of a homeostasis matrix $H$ correspond to the possible {\em homeostasis types} that can occur in a network with homeostasis matrix $H$.
Furthermore, they can be divided into exactly two {\em homeostasis classes}: {\em structural} and {\em appendage} (see Appendix \ref{SS:core_network}). 

We say that infinitesimal homeostasis is caused by (or is of type) $B_{\eta}$, if 
\begin{equation} \label{eq:homeostasis_type}
\det(B_{\eta}) = 0
\qquad\text{and}\qquad 
\det(B_{\xi}) \neq 0
\quad\text{for all}\;\xi \neq \eta
\end{equation}
Next, we associate a subnetwork $\KK_\eta$ of $\GG$ to each irreducible block $B_\eta$, called {\em homeostasis subnetwork}. 
The homeostasis subnetworks $\KK_\eta$ are completely determined by the irreducible blocks $B_\eta$, and vice-versa.
Hence, they can be divided into two classes (structural or appendage) according to the homeostasis class of the corresponding irreducible block.
Moreover, the homeostasis subnetworks can be fully characterized in terms of combinatorial properties, so that we can obtain all homeostasis types directly from the network $\GG$.

\setcounter{theorem}{0}
\begin{excont}[{\bf Continued}] \normalfont
The three homeostasis subnetworks of the $6$-node PRN shown in Figure~\ref{F:3node_prn2} can be obtained from the input-output network as explained in Appendix \ref{SS:combinatorial}.
The two one-dimensional structural blocks $[f_{\iota^P,\iota^R}]$ and $[f_{o^P,o^R}]$ correspond to Haldane subnetworks $\iota^R \to \iota^P$ and
$o^R \to o^P$, respectively.
The $3$-dimensional structural block
\begin{equation} \label{E:3D_SB}
\begin{bmatrix}
f_{\rho^R, \iota^P} & f_{\rho^R, \rho^R} & 0 \\
0 & f_{\rho^P, \rho^R} & f_{\rho^P, \rho^P} \\
f_{o^R, \iota^P} & 0 & f_{o^R, \rho^P} 
\end{bmatrix}
\end{equation}
corresponds to the $4$-node subnetwork generated by the nodes 
$\{\iota^P,\rho^R,\rho^P,o^R\}$.
\END
\end{excont}
\setcounter{theorem}{2}

Next, we consider the notion of `homeostasis pattern', namely, the set of nodes that are simultaneously homeostatic whenever the output node is homeostatic.

\begin{definition} \normalfont
Let $\GG$ be an input-output network and suppose that infinitesimal homeostasis occurs in $\GG$ at $\II_0$, that is, $o'(\II_0)=0$.
A {\em homeostasis pattern} of $\GG$ is a set of nodes $\sigma$ (including the output node $o$) for which the condition $\sigma^\prime(\II_0) = 0$, holds generically.
\END
\end{definition}

Homeostasis patterns can be graphically represented by coloring the nodes of $\GG$ that are homeostatic.
It turns out that each homeostasis pattern is fully determined by a homeostasis subnetwork and vice-versa.
In other words, there is complete correspondence between the homeostasis types, the homeostasis subnetworks and the homeostasis patterns of $\GG$ (see Appendix \ref{SS:homeo_induce}).

\setcounter{theorem}{0}
\begin{excont}[{\bf Continued}] \normalfont
The three homeostasis subnetworks of the $6$-node PRN shown in Figure~\ref{F:reg_net_1b}(b) lead to three distinct homeostasis patterns shown in Table \ref{T:6nodePRN}.
Notice that the two Haldane homeostasis subnetworks correspond to the coupling between the mRNA and protein in the {\em same} gene.
This creates a homeostasis pattern where one gene has
one homeostatic PRN-node while the other PRN-node is not.
Whereas in the pattern associated to the $3$-dimensional structural type both associated PRN-nodes of each gene are either simultaneously homeostatic or not.
Furthermore, it can be shown that the $3$-dimensional structural type is the cause of homeostasis in the example of \cite{AGS18}.
This is most easily seen by comparing the homeostasis patterns: it is shown in \cite[Fig. 3]{AGS18} that only the PRN-nodes associated to the output node of the GRN are homeostatic, which corresponds exactly to the homeostasis pattern $\{o^R,o^P\}$ in the PRN in Figure~\ref{F:3node_prn2}.
\END
\end{excont}
\setcounter{theorem}{2}

\begin{table}[!htb]
\renewcommand{\arraystretch}{1.5}
\begin{equation*}
\begin{array}{ccc} \hline
\text{Homeostasis Type} & \text{Homeostasis Subnetwork}  & \text{Homeostasis Pattern} \\ \hline
{[f_{\iota^P,\iota^R}]} & \iota^R \to \iota^P
& \{\iota^P, \rho^R, \rho^P, o^R, o^P\} \\ 
{[f_{o^P,o^R}]} & o^R \to o^P
& \{o^P\} \\ 
\text{$3$D matrix \eqref{E:3D_SB}} 
& \big\langle\iota^P,\rho^R,\rho^P,o^R\big\rangle
&  \{o^R, o^P\} \\ \hline
\end{array}
\end{equation*}
\caption{\label{T:6nodePRN}
Infinitesimal homeostasis and corresponding homeostasis patterns of the $6$-node PRN shown in Figure~\ref{F:3node_prn2}.
In the second column the notation $\langle\,\cdot\,\rangle$ stands for the `subnetwork generated'.
In the last column we list the homeostatic nodes.
}
\end{table}

\subsection{Infinitesimal Homeostasis in GRN}

As we have shown using the GRN of Figure \ref{F:reg_net_1a} 
(or its abstract version in Figure \ref{F:3node_grn2} the PRN construction allows us to apply the theory of \cite{WHAG21} to obtain all the possible homeostasis types and corresponding homeostasis patterns on the PRN.
Now it remains to explain how the results obtained for the PRN can be `lifted back' to the GRN.
In other words, we first need to define what it means for a node in a GRN to be homeostatic.
Then, we use the general theory to determine in a purely combinatorial fashion, the `formal homeostasis subnetworks' of the GRN
(see Appendix \ref{SS:combinatorial}).
Lastly, we show how the homeostasis subnetworks of the GRN and the associated PRN relate to each other.
Since only the homeostasis subnetworks of the PRN have a `dynamical' interpretation of infinitesimal homeostasis, we use the relation obtained before reinterpreting infinitesimal homeostasis on the GRN level.

In order to explain this relation we will consider another simple Example \ref{E:FBI}, in addition to Example \ref{E:FFL}.
Figure \ref{F:3node_grn} shows a $3$-node GRN called {\em feedback inhibition}, which plays an important role in the GRN of {\em E. coli} and Figure \ref{F:3node_prn} is the associated PRN.
Feedback inhibition appears twice in the regulatory cascade of carbohydrate catabolism of {\em E. coli} \cite[Figs. 1 and 2(a)]{MJT08}, with $(\iota,\tau,o)=(\textrm{IHF},\textrm{CRP},\textrm{FIS})$ and $(\iota,\tau,o)=(\textrm{ARC-A},\textrm{HNS},\textrm{GAD-X})$.

\begin{figure}[!htb]
\begin{subfigure}[b]{0.3\textwidth}
\centering
\includegraphics[width=\textwidth]{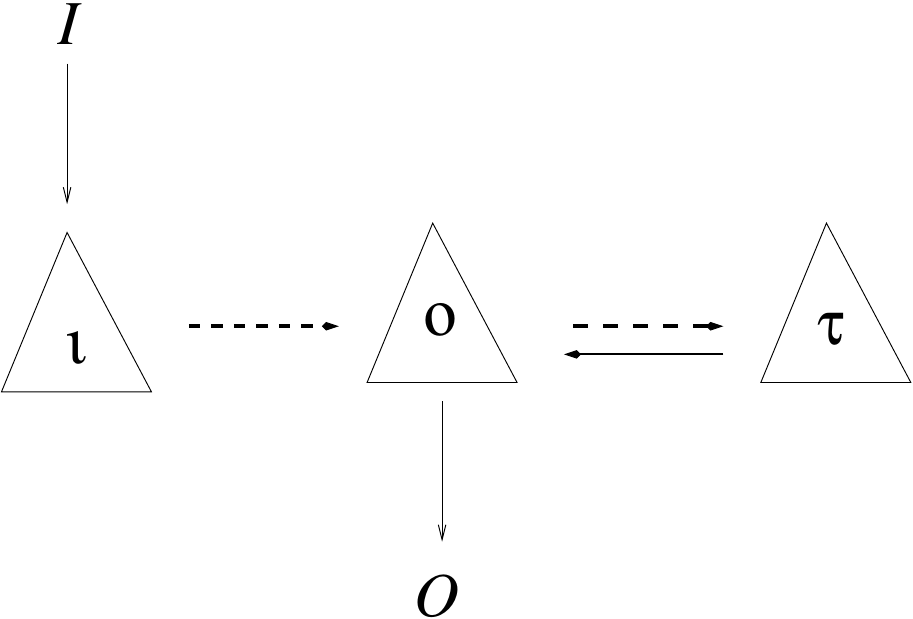}
\caption{GRN}
\label{F:3node_grn}
\end{subfigure} \qquad\qquad\qquad
\centering
\begin{subfigure}[b]{0.25\textwidth}
\centering
\includegraphics[width=\textwidth]{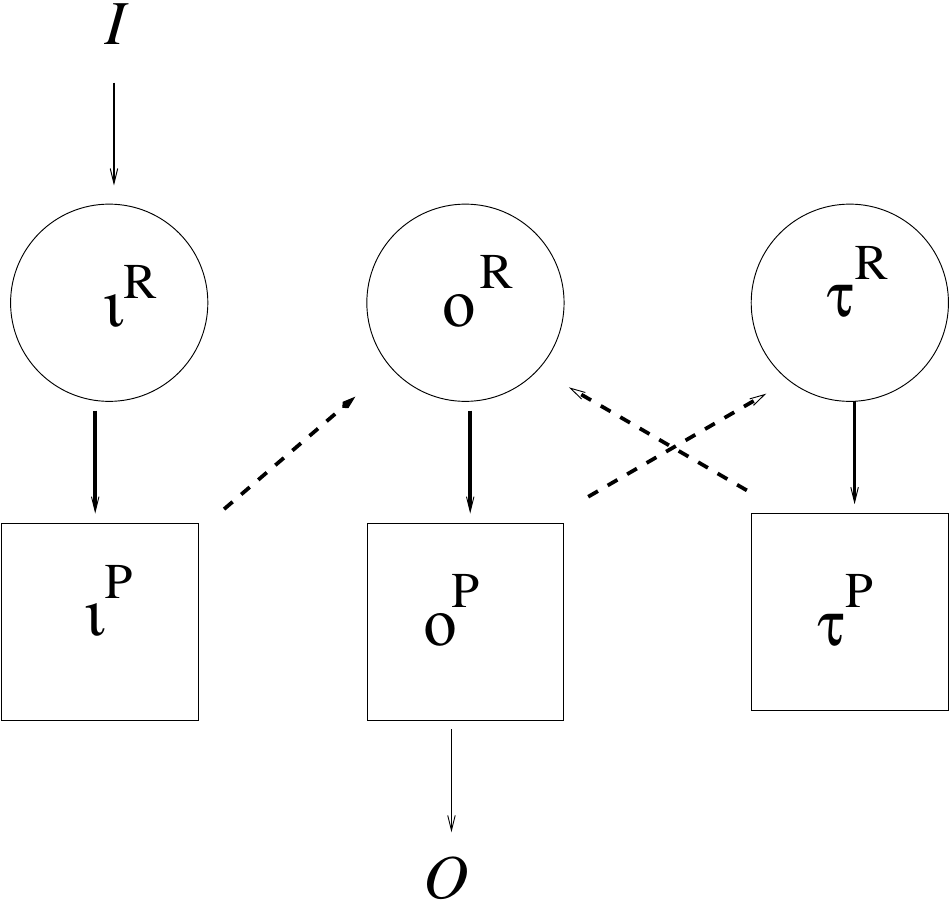}
\caption{PRN}
\label{F:3node_prn}
\end{subfigure} \qquad
\caption{\label{F:3node}
\textbf{Feedback inhibition.}
(a) The $3$-gene input-output GRN.
Triangles designate genes
and dashed arrows designate gene coupling.
(b) The corresponding $6$-node input-output PRN.
Circles designate mRNA concentrations and squares designate protein concentrations. 
Solid lines ${}^R\longrightarrow {}^P$ stand for coupling inside a single gene and dashed lines ${}^P\dashrightarrow {}^R$ for coupling between genes, that is, the couplings inherited from the GRN.}
\end{figure}

\begin{example} \normalfont \label{E:FBI}
Consider the $3$-node feedback inhibition GRN, Figure~\ref{F:3node_grn}, and the associated $6$-node PRN in Figure~\ref{F:3node_prn}.
The total state is given by a vector $(\iota^R, \iota^P, \rho^R, \rho^P, o^R, o^P)\in\R^6$ and general admissible system of ODEs is
\begin{equation} \notag
\begin{split}
\dot{\iota}^R & = f_{\iota^R}(\iota^R,\II) \\
\dot{\iota}^P & = f_{\iota^P}(\iota^R,\iota^P) \\
\dot{\tau}^R & = f_{\tau^R}(\tau^R,o^P) \\
\dot{\tau}^P & = f_{\tau^P}(\tau^R,\tau^P) \\
\dot{o}^R & = f_{o^R}(\iota^P,\tau^P,o^R) \\
\dot{o}^P & = f_{o^P}(o^R,o^P)
\end{split}
\end{equation}
The homeostasis matrix is
\begin{equation} \notag
H = 
\begin{bmatrix}
f_{\iota^P, \iota^R} & f_{\iota^P, \iota^P} & 0 & 0 & 0 \\
0 & 0 & f_{\tau^R, \tau^R} & 0 & 0 \\
0 & 0 & f_{\tau^P, \tau^R} & f_{\tau^P, \tau^P} & 0 \\
0 & f_{o^R, \iota^P} & 0 & f_{o^R, \tau^P} & f_{o^R, o^R} \\
0 & 0 & 0 & 0 & f_{o^P, o^R}
\end{bmatrix}
\end{equation}
and the homeostasis determinant is
\begin{equation} \label{det_homeo_feedback}
\det(H) = f_{\tau^R,\tau^R} f_{\tau^P,\tau^P} f_{o^R,\iota^P} f_{\iota^P,\iota^R} f_{o^P,o^R}
\end{equation}
The homeostasis determinant is a completely reducible polynomial of degree five.
Thus the PRN has five degree $1$ homeostasis types.
The associated homeostasis subnetworks and the homeostasis patterns are shown in Table \ref{T:3node_PRN}.
\END
\end{example}

\begin{table}[!htb]
\renewcommand{\arraystretch}{1.5}
\begin{equation*}
\begin{array}{ccc} \hline
\text{Homeostasis Type} & \text{Homeostasis Subnetwork}  & \text{Homeostasis Pattern} \\ \hline
{[f_{\tau^R,\tau^R}]} & \tau^R & \{o^R, o^P\} \\
{[f_{\tau^P,\tau^P}]} & \tau^P & \{\tau^R, o^R, o^P\} \\
{[f_{o^R,\iota^P}]} & \iota^P \to o^R & \{\tau^R, \tau^P, o^R, o^P\} \\
{[f_{\iota^P,\iota^R}]} & \iota^R \to \iota^P & \{\iota^P, \tau^R, \tau^P,  o^R, o^P\} \\
{[f_{o^P,o^R}]} & o^R \to o^P & \{\tau^R, \tau^P,  o^P\} \\ \hline
\end{array}
\end{equation*}
\caption{\label{T:3node_PRN}
Infinitesimal homeostasis and corresponding homeostasis patterns of the $6$-node PRN shown in Figure \ref{F:3node_prn}.
In the last column we list the homeostatic nodes.}
\end{table}

For comparison, let us determine the homeostasis subnetworks and homeostasis patterns obtained by working directly with the GRN.
Recall that each homeostasis subnetwork corresponds to a `homeostasis mechanism' that can cause homeostasis and they can be divided into two classes: (i) {\em structural class}, corresponds to `generalized feedforward mechanisms' and (ii) {\em appendage class} corresponds to  `generalized feedback mechanisms'.

Let us start with Example \ref{E:FFL}, Figure \ref{F:3node_grn2}.
First, observe that the self-coupling of the input node $\iota$ does not affect the construction of homeostasis subnetworks (we simply keep the arrow during the procedure).
Then, since $\iota$ and $o$ are the only super-simple nodes and there are no appendage nodes, it follows that there is only one homeostasis subnetwork of structural class, see Table \ref{T:GRN_FFL}.
Comparing with Table \ref{T:6nodePRN} we see that the associated PRN, Figure \ref{F:3node_prn2}, has three homeostasis subnetworks.
The two first homeostasis subnetworks of the PRN are Haldane subnetworks.
Moreover, in both cases the two PRN-nodes of the Haldane subnetwork come from the same GRN-node.
This observation suggests that these two Haldane subnetworks do not have a `counterpart' in the GRN.

\begin{table}[!htb]
\renewcommand{\arraystretch}{1.5}
\begin{equation*}
\begin{array}{ccc} \hline
\text{Homeostasis Class} & \text{Homeostasis Subnetwork}  & \text{Homeostasis Pattern} \\ \hline
\text{Structural} 
& \big\langle\iota,\rho,o\big\rangle
&  \{o\} \\ \hline
\end{array}
\end{equation*}
\caption{\label{T:GRN_FFL}
Infinitesimal homeostasis and corresponding homeostasis patterns of the GRN shown in Figure \ref{F:3node_grn2}.
In the second column the notation $\langle\,\cdot\,\rangle$ stands for the `subnetwork generated'.
In the last column we list the homeostatic nodes.}
\end{table}

Now we consider Example \ref{E:FBI}, Figure \ref{F:3node_grn}.
Here, we have that the only super simple nodes are $\iota$ and $o$, but the regulatory node $\tau$ is appendage.
Then, there are two homeostasis subnetworks, one of structural class and the other of appendage class, see Table \ref{T:GRN_FBI}.
Comparing with Table \ref{T:3node_PRN} we see that the associated PRN, Figure \ref{F:3node_prn}, has five homeostasis subnetworks (two appendage and three structural).
Among the three structural subnetworks of the PRN, two of them (the last two cases in Table \ref{T:3node_PRN}) are such that the two PRN-nodes come from the same GRN-node.
The third case in Table \ref{T:3node_PRN}) is different since the PRN-nodes come from distinct GRN-nodes.
The two appendage subnetworks of the PRN are given by a single appendage node (i.e., a degree $1$ irreducible factor).
Appendage homeostasis associated to degree $1$ irreducible factors is called {\em null-degradation homeostasis}.

\begin{table}[H]
\renewcommand{\arraystretch}{1.5}
\begin{equation*}
\begin{array}{ccc} \hline
\text{Homeostasis Class} & \text{Homeostasis Subnetwork}  & \text{Homeostasis Pattern} \\ \hline
\text{Appendage} & \tau & \{o\} \\
\text{Structural} & \iota \to o & \{\tau, o\} \\ \hline
\end{array}
\end{equation*}
\caption{\label{T:GRN_FBI}
Infinitesimal homeostasis and corresponding homeostasis patterns of the GRN shown in Figure \ref{F:3node_grn}.
In the last column we list the homeostatic nodes.}
\end{table}

\noindent
These examples suggest the following observations.
\begin{enumerate}[(1)]
\item
The simplest example of structural homeostasis is {\em Haldane homeostasis}.
PRNs can have two types of Haldane homeostasis: (i) one corresponds to an arrow connecting an mRNA node to a protein node (e.g., the first two cases in Example \ref{T:6nodePRN} and the two last cases in Table \ref{T:3node_PRN} and (ii) the other corresponds to an arrow connecting a protein node to an mRNA node (e.g., the third case in Table \ref{T:3node_PRN}).

\item The simplest example of appendage homeostasis is {\em null-degradation homeostasis}.
In general, there are two different types of null-degradation in PRNs: (i) one occurs in a protein node (e.g., the second case in Table \ref{T:3node_PRN}) and (ii) the other occurs in an mRNA node (e.g., the first case in Table \ref{T:3node_PRN}).
\end{enumerate}

\begin{definition} \normalfont
\label{def:special_component}
Let $\GG$ be a GRN with associated PRN $\RR$.
Let $\tau\in\GG$ be a node with
$\tau^R\in\RR$ be the mRNA node and $\tau^P\in\RR$ be the protein node.
\begin{enumerate}[(a)]
\item An appendage node $\tau\in\GG$ is a {\em single appendage node} if $\{\tau\}$ is an appendage subnetwork of $\GG$ with no self-coupling.
\item If $\{\tau^R\}$ is an appendage subnetwork of $\RR$ then it is called {\em $R$-null-degradation}.
\item If $\{\tau^P\}$ is an appendage subnetwork of $\RR$ then it is called {\em $P$-null-degradation}.
\item If $\langle\tau^R,\tau^P\rangle$ is a structural subnetwork of $\RR$ then it is called {\em $\RR$-Haldane}.
\END
\end{enumerate}
\end{definition}

Now we state the first main result of the paper, regarding the relation between the homeostasis subnetworks of a GRN and its associated PRN (see Appendix \ref{SS:combinatorial} for the terminology and Appendix \ref{SS:homeo subnetworks in RPN} for the proof.)

\begin{theorem} \label{thm:main1}
The homeostasis subnetworks of a GRN $\GG$ and its associated PRN $\RR$ correspond uniquely to each other, except in the following cases:
\begin{enumerate}[{\rm (a)}]
\item The $\RR$-Haldane subnetworks of $\RR$ correspond uniquely to the super-simple nodes of $\GG$ (super-simple nodes are not homeostasis subnetworks of $\GG$).
\item For every single appendage node $\tau\in\GG$ the appendage subnetwork $\{\tau\}$ of $\GG$ yields 2 appendage subnetworks $\{\tau^R\}$ ($R$-null-degradation) and $\{\tau^P\}$ ($P$-null-degradation) of $\RR$.
\end{enumerate}
\end{theorem}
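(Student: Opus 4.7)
The plan is to establish the correspondence by exploiting the bipartite structure of $\RR$ together with the explicit construction of $\RR$ from $\GG$, and then to read off the combinatorial invariants (simple paths, super-simple nodes, appendage components) of $\RR$ directly from those of $\GG$. The starting observation is that every simple $\iota$-to-$o$ path $\iota\to\rho_1\to\cdots\to\rho_k\to o$ in $\GG$ lifts to a unique simple $\iota^R$-to-$o^P$ path in $\RR$, namely $\iota^R\to\iota^P\to\rho_1^R\to\rho_1^P\to\cdots\to\rho_k^R\to\rho_k^P\to o^R\to o^P$, obtained by replacing each internal node $\tau$ by the edge $\tau^R\to\tau^P$ and each GRN arrow $\sigma\to\rho$ by the PRN arrow $\sigma^P\to\rho^R$. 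This lift is a bijection between the sets of simple $\iota$-to-$o$ paths in the two networks, so the partition of nodes into simple, super-simple, and appendage types transfers cleanly: $\tau\in\GG$ is super-simple (resp.\ appendage) if and only if both $\tau^R$ and $\tau^P$ are super-simple (resp.\ appendage) in $\RR$, and a PRN node of one type is never paired with a PRN node of another type coming from the same gene.

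Next I would handle the structural class. Using the combinatorial characterization of structural homeostasis subnetworks from Appendix~\ref{SS:combinatorial}, each structural subnetwork is specified by a pair of consecutive super-simple nodes together with the absorbing set of appendage nodes reachable from that segment. Because super-simple-ness transfers node-for-node, every pair of consecutive super-simple nodes in $\GG$ corresponds uniquely to a pair of consecutive super-simple nodes in $\RR$ (with the same GRN-arrow-to-PRN-arrow substitution), and the appendage components swept in are the corresponding PRN images of the GRN appendage components. This yields the claimed bijection for structural subnetworks, except that in $\RR$ the pair $(\tau^R,\tau^P)$ arising from a single super-simple $\tau\in\GG$ is itself a new pair of consecutive super-simple nodes with no GRN counterpart; the generated subnetwork is exactly the single arrow $\tau^R\to\tau^P$, which is an $\RR$-Haldane subnetwork. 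This is precisely conclusion~(a).

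For the appendage class I would analyze strongly connected components of the appendage subgraph. For any appendage set $A\subseteq\GG$ that is not a single node without self-coupling, the doubled set $\{\tau^R,\tau^P:\tau\in A\}$ is a single strongly connected component in $\RR$: either $|A|\ge 2$, in which case the GRN arrows between nodes of $A$ produce a cycle after doubling, or $A=\{\tau\}$ with a self-loop, in which case $\tau^P\to\tau^R$ together with $\tau^R\to\tau^P$ already closes a cycle. Either way the appendage SCCs are in bijection with those of $\GG$. The one obstruction is a single appendage node $\tau$ with no self-coupling: then $\tau^R\to\tau^P$ is the only arrow inside the doubled set, no cycle is created, and $\{\tau^R\}$ and $\{\tau^P\}$ are two distinct appendage SCCs of $\RR$, giving the $R$-null-degradation and $P$-null-degradation subnetworks of conclusion~(b).

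The main obstacle, and the step requiring the most care, is verifying that the structural-homeostasis construction in the appendix—which glues each super-simple pair to its reachable appendage absorbing set—genuinely commutes with doubling, so that the enumeration of structural subnetworks of $\RR$ contains no cases beyond the ones produced from $\GG$ and the new $\RR$-Haldanes in~(a). Concretely, I would verify that the reachability relations used to determine the appendage absorbing set of a super-simple pair are preserved under the GRN-to-PRN construction: an appendage node $\sigma\in\GG$ is reachable from the super-simple segment between $\tau_1$ and $\tau_2$ in $\GG$ iff the PRN appendage nodes $\sigma^R,\sigma^P$ are reachable from the corresponding segment in $\RR$, and symmetrically for reachability into the downstream super-simple part. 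Once this commutation is checked, conclusions~(a) and~(b) exhaust the exceptions, proving the theorem.
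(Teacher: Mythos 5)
Your proposal is correct and takes essentially the same route as the paper: lift $\iota o$-simple paths to the PRN, transfer the simple/super-simple/appendage classification gene-by-gene, then match structural subnetworks to consecutive super-simple pairs and appendage subnetworks to appendage path components, with the pairs $(\tau^R,\tau^P)$ of super-simple nodes giving the $\RR$-Haldane exception and single appendage nodes without self-coupling splitting into the two null-degradation subnetworks. The one step you defer---that the absorption of cycle appendage components into structural subnetworks commutes with the doubling---is exactly the content of the paper's Lemma~\ref{subnetwork correspondence}, proved there by lifting the relevant cycles, so your outline mirrors the paper's argument.
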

\begin{proof}
It follows from Theorem \ref{thm: homeo subnetwork in PRN} from Appendix \ref{SS:homeo subnetworks in RPN}.
\end{proof}

Now we consider the following question: {\em Can we combinatorially determine the homeostasis patterns of a GRN and its associated PRN?}
In principle this can be done since the classification of homeostasis patterns is purely combinatorial.
We start by introducing the notion of a homeostasis pattern in a GRN that is derived from the associated PRN.

\begin{definition} \normalfont
\label{def: pattern in GRN}
Consider a GRN $\GG$ and its associated PRN $\RR$.
Suppose that infinitesimal homeostasis occurs in the PRN at $\II_0$.
A node $\rho\in\GG$ is said to be {\em GRN-homeostatic} if both associated PRN-nodes $\rho^R$ and $\rho^P$ are simultaneously homeostatic at $\II_0$.
A \emph{GRN-generating homeostasis pattern} is a homeostasis pattern $\PP$ on $\RR$ such that, for every PRN-node in $\PP$ corresponds to GRN-homeostatic node.
\END
\end{definition}

In a GRN-generating homeostasis pattern all PRN-nodes appear in mRNA-protein pairs.
That is, the set of GRN-homeostatic nodes match perfectly the PRN homeostasis pattern.

Let us consider again our two running examples.
The results for the feedforward loop GRN and its associated PRN, Figure \ref{F:reg_net_1b}, are shown in Table \ref{T:FFL_GRN_PRN} and for the feedback inhibition GRN and the associated PRN, Figure \ref{F:3node}, are shown in Table \ref{T:FBI_GRN_PRN}.

\begin{table}[!htb]
\renewcommand{\arraystretch}{1.5}
\begin{equation*}
\begin{array}{ccccc} \hline
\text{PRN Subnet} & \text{GRN Subnet} & \text{PRN Pattern} & \text{GRN-H Nodes} & \text{GRN Pattern} \\ \hline
\iota^R \to \iota^P & \iota^* & \{\iota^P, \rho^R, \rho^P, o^R, o^P\} & \{\rho,o\} & - \\ 
o^R \to o^P & o^* & \{o^P\} & - & - \\ 
\big\langle\iota^P,\rho^R,\rho^P,o^R\big\rangle
& \big\langle\iota,\rho,o\big\rangle & \{o^R, o^P\} & \{o\} & \{o\} \\ \hline
\end{array}
\end{equation*}
\caption{\label{T:FFL_GRN_PRN}
Infinitesimal homeostasis and homeostasis patterns of the feedforward loop network GRN and its associated PRN from Figure~\ref{F:reg_net_1b}.
The ${}^*$ denotes the GRN super-simple node corresponding to the $\RR$-Haldane subnetwork.
See Appendix \ref{SS:algorithm} for the computations.
}
\end{table}

\begin{table}[!htb]
\renewcommand{\arraystretch}{1.5}
\begin{equation*}
\begin{array}{ccccc} \hline
\text{PRN Subnet} & \text{GRN Subnet} & \text{PRN Pattern} & \text{GRN-H Nodes} & \text{GRN Pattern} \\ \hline
\tau^R & \tau & \{o^R, o^P\} & \{o\} & \{o\} \\
\tau^P & \tau & \{\tau^R, o^R, o^P\} & \{o\} & - \\
\iota^P \to o^R & \iota \to o & \{\tau^R, \tau^P, o^R, o^P\} & \{\tau,o\} & \{\tau,o\} \\
\iota^R \to \iota^P &  \iota^* & \{\iota^P, \tau^R, \tau^P,  o^R, o^P\} & \{\tau,o\} & - \\
o^R \to o^P & o^* & \{\tau^R, \tau^P, o^P\} & \{\tau\} & - \\ \hline
\end{array}
\end{equation*}
\caption{\label{T:FBI_GRN_PRN}
Infinitesimal homeostasis and homeostasis patterns of the feedforward loop network GRN and its associated PRN from Figure~\ref{F:3node}.
The ${}^*$ denotes the GRN super-simple node corresponding to the $\RR$-Haldane subnetwork.
See Appendix \ref{SS:algorithm} for the computations.
}
\end{table}

\noindent
From these examples we draw a couple of remarks:
\begin{enumerate}[(1)]
\item $\RR$-Haldane subnetworks do not produce GRN-generating homeostasis patterns.
There is always a homeostatic protein node whose pairing mRNA node is not homeostatic.
\item $P$-null-degradation subnetworks do not produce GRN-generating homeostasis patterns.
There is always a homeostatic mRNA node whose pairing protein node is not homeostatic.
\END
\end{enumerate}

These observations hold the essence of the second main result of the paper, regarding the relation between the homeostasis patterns of a GRN and its associated PRN.

It is useful to introduce the following terminology.
Let $\KK$ be a homeostasis subnetwork of the GRN $\GG$.
Then we define a map $\KK\to\KK^R$ from the set of homeostasis subnetworks of $\GG$ to the set of homeostasis subnetworks of the associated $\RR$ as follows.
If $\KK\neq\{\tau\}$, where $\tau$ is a single appendage node, the $\KK^R$ is the unique subnetwork given by Theorem \ref{thm:main1}.
If $\KK=\{\tau\}$, where $\tau$ is a single appendage node, then $\KK^R=\{\tau^R\}$.
With this definition, the map $\KK\to\KK^R$ is injective and the complement of its image in the set of homeostasis subnetworks of the PRN is exactly the set of $\RR$-Haldane and $P$-null-degradation subnetworks.

\begin{theorem} \label{thm:main2}
Let $\GG$ be a GRN and $\RR$ its associated PRN.
Then the homeostasis patterns on $\GG$ correspond exactly to the GRN-generating homeostasis patterns on $\RR$.
The homeostasis patterns on $\RR$ associated to the $\RR$-Haldane and the $P$-null-degradation subnetworks do not correspond to homeostasis patterns on $\GG$.
\end{theorem}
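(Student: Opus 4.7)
The plan is to build on Theorem~\ref{thm:main1} together with the combinatorial characterization of homeostasis patterns (Appendix~\ref{SS:homeo_induce}), which assigns a unique pattern of simultaneously homeostatic nodes to each homeostasis subnetwork. Via the injection $\KK \mapsto \KK^R$ defined above, whose image is exactly the set of homeostasis subnetworks of $\RR$ that are neither $\RR$-Haldane nor $P$-null-degradation, the theorem reduces to two statements: (i) for every GRN homeostasis subnetwork $\KK$, the PRN pattern $\PP(\KK^R)$ consists of complete mRNA--protein pairs and corresponds to the GRN pattern $\PP(\KK)$ under the identification $\rho \leftrightarrow \{\rho^R, \rho^P\}$; (ii) the PRN patterns of the $\RR$-Haldane and $P$-null-degradation subnetworks each contain at least one unpaired homeostatic node.

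For (i), the decisive analytic ingredient is the equilibrium equation $f_{\rho^P}(\rho^R, \rho^P) = 0$ coupling every mRNA--protein pair. Differentiating along the equilibrium branch in $\II$ yields
\[
f_{\rho^P, \rho^R}\,(\rho^R)'(\II) \;+\; f_{\rho^P, \rho^P}\,(\rho^P)'(\II) \;=\; 0.
\]
Whenever $\KK^R$ forces neither $f_{\rho^P, \rho^R} = 0$ (i.e., $\KK^R$ is not $\RR$-Haldane at $\rho$) nor $f_{\rho^P, \rho^P} = 0$ (i.e., $\KK^R$ is not $P$-null-degradation at $\rho$), we therefore get $(\rho^R)'(\II_0) = 0 \iff (\rho^P)'(\II_0) = 0$, so $\rho^R$ lies in $\PP(\KK^R)$ precisely when $\rho^P$ does. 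To identify the resulting paired PRN pattern with $\PP(\KK)$, I would then exploit the bipartite structure of $\RR$: every directed path in $\RR$ alternates $R$- and $P$-nodes, and collapsing each arrow $\rho^R \to \rho^P$ identifies the forward/backward reachability graph used to generate $\PP(\KK^R)$ with the corresponding graph used to generate $\PP(\KK)$.

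For (ii), the same differentiated relation yields the exceptional behaviour directly. For an $\RR$-Haldane subnetwork at a super-simple node $\tau$ of $\GG$, the condition $f_{\tau^P, \tau^R} = 0$ forces $(\tau^P)'(\II_0) = 0$ while $(\tau^R)'(\II_0)$ remains generically nonzero, so $\tau^P$ appears in the pattern unpaired. Dually, for a $P$-null-degradation subnetwork at $\tau$, the condition $f_{\tau^P, \tau^P} = 0$ forces $(\tau^R)'(\II_0) = 0$ while $(\tau^P)'(\II_0)$ can remain nonzero, so $\tau^R$ appears in the pattern unpaired. In both cases the pattern fails to be GRN-generating.

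The main obstacle I expect is not the analytic step above, which is essentially immediate, but the combinatorial identification of the paired PRN pattern with the GRN pattern in (i). One must prove that the pattern-generating rules of \cite{DABGNRS23} applied to $\KK^R \subset \RR$ return exactly the set of PRN nodes that match, under the R/P collapse, the nodes produced by those rules applied to $\KK \subset \GG$. This requires carefully tracking how super-simple, backbone, and appendage roles transfer between the GRN and the bipartite PRN, building on the correspondence established in Appendix~\ref{S:inf homeo in RPN} and the pattern-level analysis of Appendix~\ref{S:pattern of homeo}.
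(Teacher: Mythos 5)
Your overall strategy is the same as the paper's, and your analytic ingredients are correct: differentiating the equilibrium relation $f_{\rho^P}(\rho^R,\rho^P)=0$ to get $f_{\rho^P,\rho^R}(\rho^R)'+f_{\rho^P,\rho^P}(\rho^P)'=0$ is exactly the paper's Lemma~\ref{lem: connection on pattern of homeo in the GRN}, and your treatment of the exceptional cases ($\RR$-Haldane forces $(\rho^P)'=0$ with $(\rho^R)'$ generically nonzero; $P$-null-degradation forces $(\tau^R)'=0$ with $(\tau^P)'$ generically nonzero) reproduces Remark~\ref{rmk:homeo_fail_GRN}. So the second assertion of the theorem, and the fact that off the inducing subnetwork PRN-nodes come in homeostatic mRNA--protein pairs, are in good shape.

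The genuine gap is precisely the step you flag as the ``main obstacle'' and then leave unproven: identifying the paired PRN pattern of $\KK^R$ with the combinatorially computed GRN pattern of $\KK$. Your proposed mechanism --- collapse each arrow $\rho^R\to\rho^P$ and identify ``forward/backward reachability graphs'' --- does not suffice, because homeostasis patterns are not reachability-determined; they are governed by the induction rules of Theorem~\ref{thm:pattern of homeo}, stated on the homeostasis pattern network in terms of super-simple nodes, backbone nodes, appendage components, appendage paths, and conditions such as ``strictly downstream/upstream'' and ``every path contains a super-simple node between $\VV_1$ and $\VV_2$''. The bulk of the paper's proof (Lemma~\ref{node in homeostasis pattern network} together with Lemmas~\ref{arrow of structural pattern network or appendage component}--\ref{arrow between structural pattern network and appendage component 2}, feeding into Theorem~\ref{thm: patterns of homeostasis in RPN}) consists of transferring exactly this structure from $\PP(\GG)$ to $\PP(\RR)$: the PRN pattern network acquires an empty backbone node $\tLL(\rho_i^R,\rho_i^P)$ at every super-simple gene and splits each single appendage node into two components $\til{\AA}^R,\til{\AA}^P$, and these insertions genuinely shift the strictness of the downstream/upstream conditions (compare parts (b),(c) of Theorem~\ref{thm: patterns of homeostasis in RPN}, where ``strictly downstream'' in the generic case relaxes to ``downstream'' when the relevant object is a super-simple gene). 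A naive collapse that ignores this bookkeeping can return the wrong pattern, so without carrying out this combinatorial transfer (or an equivalent argument) the claimed exact correspondence between GRN patterns and GRN-generating PRN patterns is not established.
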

\begin{proof}
This follows from Theorem \ref{thm: patterns of homeostasis in RPN} and Lemma \ref{lem: connection on pattern of homeo in the GRN} from Appendix \ref{SS: homeo inducing in the GRN}.
\end{proof}

\begin{remark}[{\bf Non-coding genes}] \normalfont \label{RMK:non_coding}
As mentioned before, in eukaryotic cells there are several regulatory mechanisms modulating transcription and translation.
Almost all regulatory modulation is performed by non-coding genes, i.e., genes that are transcribed into RNA, but the RNA is not translated into protein.
The PRN formalism can be extended to include non-coding genes thanks to the following observation: the gene product of a non-coding gene is an mRNA whose regulatory activity is performed by direct interaction with other mRNAs
\cite{SFHNABM07,LWV16}.
Thus, unlike a protein-coding gene, a non-coding gene $\nu$ yields only one scalar state variable and one differential equation 
\[
\dot{\nu} = f_{\nu}(\nu, \underbrace{\tau^P_1,\ldots,\tau^P_k}_{\text{TFs}}, \underbrace{\rho^R_1,\ldots,\rho^R_\ell}_{\text{mRNAs}})
\]
Here, $f_{\nu}$ is a smooth function.
The variables $\tau^P_1,\ldots,\tau^P_k$ are the protein concentrations associated to the transcription factors (TFs) that regulate gene $\nu$.
The variables $\rho^R_1,\ldots,\rho^R_\ell$ are the
mRNA concentrations associated to the protein-coding genes that interact with $\nu$.
Finally, for each $\rho^R_j$ above, the corresponding mRNA equation must now depend on $\nu$:
\[
\dot{\rho}^R_j = f_{\rho^R_j} (\rho^R_j,\ldots,\nu)
\]
The consequence for the PRN diagram is that a non-coding gene: (i) gives rise to a single PRN-node $\nu=\nu^R$, instead of two, (ii) receives arrows from protein nodes $\tau^P_i\to\nu^R$, (iii) has a bidirectional connection with the mRNA nodes that it interacts with $\rho^R_j \biarrow \nu$.
With these new requirements the PRN is no longer a bipartite digraph, now it is a {\em tripartite digraph}.
\END
\end{remark}

\section{Conclusion and Outlook}

In this paper, we present a framework for the analysis and classification of homeostasis in gene regulatory networks.
We accomplish this by combining a formalism for the study of gene regulatory networks (GRN), called {\em protein-mRNA networks} (PRN), with
the theories of Wang \etal\cite{WHAG21} and Duncan \etal\cite{DABGNRS23},
for the classification of homeostasis types and homeostasis patterns in input-output networks, respectively.

Given an arbitrary input-output GRN (consisting of protein-coding genes) $\GG$, we associate an input-output PRN $\RR$, which enables us to apply the results of \cite{WHAG21,DABGNRS23}.
By comparing the results for the PRN with a suitable application of the combinatorial piece of the theory to the GRN, we obtain a refinement of the classification of homeostasis types and homeostasis patterns of $\RR$.
The final result is a complete characterization of homeostasis types and homeostasis patterns on the PRN that have correspondent on the GRN.

An interesting byproduct is the discovery 
of homeostasis types and homeostasis patterns on the PRN without GRN counterpart.
The `new' PRN homeostasis types are degree one homeostasis types, namely, they are related to one dimensional irreducible factors of the homeostasis determinant.
They are: (i) {\em $\RR$-Haldane}, that occurs when the linearized coupling between the mRNA and protein of the same gene changes from excitation to inhibition as the input parameter varies, and (ii) {\em $P$-null-degradation}, that occurs when the linearized self-interaction of a protein changes from degradation to production as the input parameter varies.
Although the existence of $\RR$-Haldane and $P$-null-degradation is mathematically established, their occurrence in biological models is unlikely.
$\RR$-Haldane homeostasis is related to the {\em synthesis rate} of the protein from the mRNA template and $P$-null-degradation is related to the {\em degradation rate} of the protein.
Both these rates are {\em constant} (the first is positive and the second is negative) in specific model equations for GRN modeling \cite{PHB09,MSTZ16}.

The main novelties of our approach include: (i) the {\em simultaneous} use of two networks, the GRN and the PRN, in the analysis of gene expression homeostasis, and (ii) the {\em lack of assumptions} about the functional form of the differential equations.

The protein-mRNA formalism is well-known in the literature \cite{PHB09,MSTZ16}.
We take this formalism one step further, by formally introducing the protein-mRNA network (PRN) and using it in conjunction with the GRN to extract a complete view of gene expression homeostasis. See \cite{KHH20} for an approach similar to the construction of the PRN.

Generally speaking, all model equations for gene expression found in the literature have an explicit functional form \cite{AGS18,KEBC05,M05,M08,PHB09,MSTZ16}. 
Here, we assume only the general form \ref{EQ:GEN_FORM_ADM}, forced by the admissibility of vector fields with respect to the PRN.
Hence, our classification results apply to virtually any model equation for gene expression.
Even more importantly, this leaves open the possibility to use `higher-order' terms to model more complicated interactions \cite{BGHS21}.
In the terminology of \cite{GS23} our results are called {\em model independent}.
This means that the classification results obtained here provide a complete list of possible behaviors, with respect to homeostasis, that is {\em independent} of the model equations -- the list depends only on the topology of the network.
Which of those behaviors will be observed in a particular realization of the dynamics (e.g., a model equation) {\em depends} on the specific form of the dynamics.

There are several relevant ways to generalize and extend the theory of homeostasis in gene regulatory networks.
\begin{description}
\item[{\it Multiple inputs.}]
The defining condition for occurrence of homeostasis \eqref{eq:homeostasis_type} is generic for single-variable input-output functions.
The notion of infinitesimal homeostasis can be naturally generalized to multi-variable input-output functions and the theory of homeostasis types can be extended to this setting \cite{GS18,MF21,MF22}.
In this case, we are lead to consider {\em higher codimension homeostasis} \cite{DABGNRS23b}.

\item[{\it Dynamics and bifurcations in PRN.}]
Recall that the starting point of the analysis of homeostasis is to assume that there is a family of asymptotically stable equilibria and define the input-output function from it.
One can take a step back and ask about the existence and uniqueness of equilibria in a given GRN, that is, how the dynamics and bifurcations of a GRN is constrained by the network structure.
In \cite{MMRS23} the authors propose an approach based on PRN, where they analyze the general models, given by the admissible systems, as considered here, and then to specialize the results to specific models.

\item[{\it Additional biological mechanisms.}]
In order to keep the exposition as simple as possible, we have used a `minimal' definition of PRN, consisting only of protein-coding genes and accounting only for the transcription and translation processes. 
Nevertheless, we have hinted that is not difficult to extend the framework to include other biological mechanisms, for instance, non-coding genes (see Remark \ref{RMK:non_coding}.
Furthermore, there are several possible biological mechanisms that can be included in the modeling of gene expression.
Here is a small sample of relevant biological mechanisms: (a) spatial localization of the transcription and translation processes \cite{GK70,CPS15}, (b) transcriptional time delays \cite{M03}, (c) multi-site phosphorylation / dephosphorylation of transcription factors \cite{GA13}, (d) DNA-level transcriptional regulation \cite{DTAC13}.

\item[{\it Relation to other approaches.}]
As we mentioned earlier there is another approach to the modeling of GRN, based on the QSSA \cite{SGT17}.
It is called {\em simplified GRN models} in \cite{PHB09}, {\em protein regulatory networks} in \cite{TB10} and
{\em protein-only models} in \cite{EMMD15}.
Such models can be used only with protein-coding GRN, and provide a perfect match between the nodes of the GRN and the equations for protein concentrations.
A substantial difference between mRNA-protein models and protein-only models is that their {\em generic dynamics} are not the same \cite{EMMD15}; for instance, the former can have oscillatory solutions while the latter can not \cite{MMRS23}.
Finally, as discussed before, the PRN formalism easily allows for extensions and generalizations, by `unfolding' each gene node into two or more PRN-nodes representing the concentrations of different intermediate molecules.  
On the other hand, the simplified GRN formalism reverses this process, by lumping all the intermediate molecular products onto a single (protein) concentration.
\end{description}

\paragraph{Acknowledgments.}
The research of FA was supported by Funda\c{c}\~ao de Amparo \`a Pes\-qui\-sa do Estado de S\~ao Paulo (FAPESP) grants 2019/12247-7 and 2019/21181-0.

\newpage

\section*{Appendix}

\appendix

\section{Homeostasis in Input-Output Networks}
\label{S:Graphic_theory}

In this section we recall basic terminology and results on infinitesimal homeostasis in input-output networks from~\cite{WHAG21,DABGNRS23}.

\subsection{Core Networks and Homeostasis Classes}
\label{SS:core_network}

Wang \etal~\cite{WHAG21} show that the determination of infinitesimal homeostasis in an input-output 
network reduces  to the study of an associated `core subnetwork'.   A \emph{core network} is an input-output network where every node is both upstream from the output node and downstream from the input node.
 
Every input–output network $\mathcal{G}$ has a \emph{core subnetwork} $\mathcal{G}_c$ whose nodes are the nodes in $\mathcal{G}$ that are both upstream from the output node and downstream from the input node and whose arrows are the arrows in $\mathcal{G}$ whose head and tail nodes are both nodes in $\mathcal{G}_c$.
Throughout this paper we assume that input-output networks are core networks.

Suppose that $B_{\eta}$ is an irreducible component in the decomposition \eqref{phq io}, where $B_{\eta}$ is a $k \times k$ diagonal block, that is, $B_{\eta}$ has degree $k$. 
Since the entries of $B_{\eta}$ are entries of $H$, these entries have the form $f_{\rho, \tau}$; that is, the entries are either 0 (if $\tau \to \rho$ is not an arrow in $\GG$), self-coupling (if $\tau = \rho$), or coupling (if $\tau \to \rho$ is an arrow in $\GG$).

\begin{definition} \normalfont
\label{def:irreducible_component}
Let $B_{\eta}$ be an irreducible component in the decomposition \eqref{phq io}.
\begin{enumerate}[(a)]
\item The homeostasis class of type $B_{\eta}$ of degree $k$ is
{\em appendage} if $B_{\eta}$ has $k$ self-couplings and {\em structural} if $B_{\eta}$ has $k-1$ self-couplings.
\item The subnetwork $\KK_{\eta}$ of $\GG$ associated with the homeostasis block $B_{\eta}$ is defined as follows. The nodes in $\KK_{\eta}$ are the union of nodes $p$ and $q$ where $f (p, x_q)$ is a nonzero entry in $B_{\eta}$ and the arrows of $\KK_{\eta}$ are the union of arrows $q \to p$ where $p \neq q$. \END
\end{enumerate}
\end{definition}

The following theorem shows that all irreducible components belong to either appendage or structural class.
When $B_{\eta}$ is appendage, the subnetwork $\KK_{\eta}$ has $k$ nodes and $B_{\eta}$ can be put in a {\em standard Jacobian form}.
Also, when $B_{\eta}$ is structural, the subnetwork $\KK_{\eta}$ has $k + 1$ nodes and $B_{\eta}$ can be put in a {\em standard homeostasis form}.

\begin{theorem}[\cite{WHAG21}] \label{thm: hemeostasis block}
Let $H$ be an $(n + 1) \times (n + 1)$ homeostasis matrix and let $B_{\eta}$ be a $k \times k$ irreducible component in the decomposition \eqref{phq io} with $k \geq 1$. 
Then $B_{\eta}$ has either $k-1$ or $k$ self-coupling.  Furthermore,
\begin{enumerate}[{\rm (a)}]
\item If $B_{\eta}$ has $k-1$ self-coupling entries, then $B_{\eta}$ has the form:
\begin{equation} \notag
\Matrixc{f_{\rho_1, \rho_1} & \cdots & f_{\rho_1, \rho_{k-1}} & f_{\rho_1, l} \\
\vdots & \ddots & \vdots & \vdots \\
f_{\rho_{k-1}, \rho_1} & \cdots & f_{\rho_{k-1}, \rho_{k-1}} & f_{\rho_{k-1}, l} \\
f_{j, \rho_1} & \cdots & f_{j, \rho_{k-1}} & f_{j, l}}
\end{equation}

\item If $B_{\eta}$ has $k$ self-coupling entries, then $B_{\eta}$ has the form:
\begin{equation} \notag
\Matrixc{f_{\rho_1, \rho_1} & \cdots & f_{\rho_1, \rho_k} \\
\vdots & \ddots & \vdots \\
f_{\rho_k, \rho_1} & \cdots & f_{\rho_k, \rho_k}}
\end{equation}
\end{enumerate}
\end{theorem}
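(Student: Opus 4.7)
My plan is to read the theorem as a structural claim about the Frobenius--König / Dulmage--Mendelsohn block triangulation of $H$, and to reduce it to a combinatorial dichotomy about the row and column index sets of each block. First I would model $H$ as the biadjacency matrix of the bipartite graph $\BB(H)$ whose left (column) vertex set is $\{\iota,\rho_1,\ldots,\rho_n\}$, whose right (row) vertex set is $\{\rho_1,\ldots,\rho_n,o\}$, and in which $c\to r$ is an edge whenever $f_{r,c}\not\equiv 0$ (self-couplings correspond to edges $\rho_i\to\rho_i$ when they exist). Because $\GG$ is a core input-output network, there is an $\iota\to o$ directed path, and together with any completion on the remaining regulatory vertices this path furnishes a perfect matching of $\BB(H)$, so $\det H\not\equiv 0$ and the Dulmage--Mendelsohn theorem yields the canonical block upper-triangular form \eqref{phq io} in which each diagonal block $B_\eta$ is fully indecomposable, supported on a row set $R_\eta\subseteq\{\rho_1,\ldots,\rho_n,o\}$ and a column set $C_\eta\subseteq\{\iota,\rho_1,\ldots,\rho_n\}$ of common cardinality $k$.

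Next I would translate the claim into combinatorics. An entry of $B_\eta$ has the form $f_{r,c}$ with $r\in R_\eta$ and $c\in C_\eta$, and it is a self-coupling precisely when $r=c$; because $\iota$ only labels a column and $o$ only labels a row, any such $r$ must lie in $R_\eta\cap C_\eta\subseteq\{\rho_1,\ldots,\rho_n\}$. Hence the number of self-coupling positions in $B_\eta$ equals $|R_\eta\cap C_\eta|$, and the statement ``$B_\eta$ has $k-1$ or $k$ self-couplings'' is equivalent to the purely combinatorial dichotomy $|R_\eta\triangle C_\eta|\in\{0,2\}$: either $R_\eta=C_\eta$, or $R_\eta$ and $C_\eta$ differ by exactly one element on each side.

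The crux of the proof --- and the step I expect to be the main obstacle --- is establishing this dichotomy. The strategy I would pursue is to fix a perfect matching $M$ of $\BB(H)$ in which as many edges as possible are ``diagonal'' matches $\rho_i\leftrightarrow\rho_i$, so that the remaining non-diagonal matching edges form a single directed chain running along the $\iota$-to-$o$ path in $\GG$. I would then pass to the auxiliary digraph $D_M$ on matched pairs, in which an arc $p\to p'$ is present iff the $(p\text{-row},p'\text{-column})$ entry of $H$ is nonzero; the irreducible blocks of \eqref{phq io} are the strongly connected components of $D_M$. The claim is that every such SCC is either (i) contained wholly in the diagonal matches, giving $R_\eta=C_\eta$ (appendage case), or (ii) contains one contiguous arc of the $\iota$-to-$o$ chain together with some diagonal matches, in which case $R_\eta$ and $C_\eta$ share all the diagonal indices and all interior indices of the arc and differ only at a single ``entry'' column and a single ``exit'' row (structural case), giving $|R_\eta\triangle C_\eta|=2$. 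Two disjoint arcs of the chain cannot coexist inside the same SCC without forcing additional off-path matching edges, which would split the block further and contradict full indecomposability; making this last assertion rigorous is where I expect the real work to sit, and I would try to carry it out by a König/Hall-type argument on $\BB(B_\eta)$ combined with the directional constraint that every non-self-coupling matching edge must point downstream in the $\iota$-to-$o$ flow of $\GG$.

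Once the dichotomy is secured, the standard forms (a) and (b) are obtained by an internal permutation of $B_\eta$. In the appendage case, any ordering of $R_\eta=C_\eta=\{\rho_1,\ldots,\rho_k\}$ places each diagonal entry in self-coupling form $f_{\rho_i,\rho_i}$, giving (b). In the structural case, writing $R_\eta\cap C_\eta=\{\rho_1,\ldots,\rho_{k-1}\}$, $R_\eta\setminus C_\eta=\{j\}$, $C_\eta\setminus R_\eta=\{l\}$, and permuting so that $j$ labels the last row and $l$ the last column, yields the form (a) with the single non-self-coupling diagonal entry $f_{j,l}$ in the bottom-right corner. The substantive work, as noted, is the combinatorial middle step; the two normal forms are then a routine bookkeeping consequence.
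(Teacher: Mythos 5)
First, a point of reference: the paper you are working from does not prove this statement at all --- it is quoted verbatim from Wang \etal\ \cite{WHAG21} (whose argument is based on Frobenius--K\"onig theory), so your proposal has to be judged against that cited combinatorial argument rather than an in-paper proof. Your setup is sound and is essentially the same combinatorial territory: modelling $H$ by its bipartite graph, using the core hypothesis plus the self-interaction of every node to produce a perfect matching (an $\iota o$-simple path chain completed by diagonal matches $\rho\leftrightarrow\rho$), identifying the irreducible blocks with the strongly connected components of the matched digraph $D_M$, and observing that the number of self-coupling entries of $B_\eta$ is $|R_\eta\cap C_\eta|$, so that the theorem is equivalent to $|R_\eta\,\triangle\, C_\eta|\in\{0,2\}$. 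The concluding bookkeeping that turns the two cases of the dichotomy into the normal forms (a) and (b) is also fine.

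The genuine gap is exactly the step you flag as ``the real work'': you never prove that the chain pairs lying in one strongly connected component form a contiguous segment of the $\iota$-to-$o$ chain, and the mechanism you sketch for it is misdiagnosed. Two non-adjacent chain segments inside one component would not ``force additional off-path matching edges'' and there is nothing to contradict full indecomposability in the way you suggest; as stated, that argument does not rule out a block with $k-2$ or fewer self-couplings, which is precisely the content of the theorem. The correct (and much simpler) way to close the gap uses the self-interaction hypothesis a second time: for consecutive chain pairs $P_j=(\sigma_{j+1},\sigma_j)$ and $P_{j+1}=(\sigma_{j+2},\sigma_{j+1})$, the entry of $H$ in row $\sigma_{j+1}$ and column $\sigma_{j+1}$ is the self-coupling $f_{\sigma_{j+1},\sigma_{j+1}}\neq 0$, so $D_M$ always contains the arc $P_j\to P_{j+1}$; hence the chain pairs form a directed path in $D_M$, and if $P_a$ and $P_c$ ($a<c$) are mutually reachable then every intermediate $P_b$ lies on a closed directed walk through them and is absorbed into the same component. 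With that observation the dichotomy follows at once (no chain pairs gives $R_\eta=C_\eta$ and $k$ self-couplings; a contiguous segment contributes one extra row, the ``exit'' node, and one extra column, the ``entry'' node, giving $k-1$). You should also make explicit that the admissibility assumption $f_{\rho,\rho}\not\equiv 0$ for every node is being used twice --- once to complete the path to a perfect matching and once for these chain-linking arcs --- since without it both steps fail.
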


\subsection{Combinatorial Characterization of Homeostasis}
\label{SS:combinatorial}

In order to classify these homeostasis subnetworks, we recall some  combinatorial properties on input-output networks.

\begin{definition} \normalfont
Let $\GG$ be a core input-output network.
\begin{enumerate}[(a)]
\item A directed path connecting node $\rho$ to node $\tau$ is called a {\em simple path} if it visits each node on the path at most once. Further, an {\em $\iota o$-simple path} is a simple path from the input node $\iota$ to the output node $o$.

\item A node in $\GG$ is {\em simple} if the node lies on an $\iota o$-simple path, and {\em appendage} if the node is not simple. Further, a simple node is called a {\em super-simple} node if it lies on every $\iota o$-simple path in $\GG$.

\item A simple path from node $\rho$ to node $\tau$ is an {\em appendage path} if every node on this path, except perhaps for $\rho$ and $\tau$, is an appendage node. \END
\end{enumerate}
\end{definition}

\begin{definition} \normalfont \label{def: appnet}
Let $\GG$ be a core input-output network.
\begin{enumerate}[(a)]
\item The {\em appendage subnetwork} $\AA_\GG$ of $\GG$ is the subnetwork consisting of all appendage nodes and all arrows in $\GG$ connecting appendage nodes.

\item The {\em complementary subnetwork} of an $\iota o$-simple path $S$ is the subnetwork $\CC_S$ consisting of all nodes not on $S$ and all arrows in $\GG$ connecting those nodes.

\item Nodes $\rho_i, \rho_j$ in $\AA_\GG$ are {\em path equivalent} if there exists paths in $\AA_\GG$ from $\rho_i$ to
$\rho_j$ and from $\rho_j$ to $\rho_i$. An {\em appendage path component} (or an appendage strongly connected component) is a path equivalence class in $\AA_\GG$.

\item Let $\AA\subset\AA_\GG$ be an appendage path component.
We say that $\AA$ satisfies the {\em no cycle condition} if for every $\iota o$-simple path $S$, nodes in $\AA$ do not form cycles with $\CC_S \setminus \AA$. 
\END
\end{enumerate}
\end{definition}

Nodes in the appendage subnetwork $\AA_\GG$ can be written uniquely as the disjoint union
\begin{equation} \label{e:AiBi io}
\AA_\GG = 
(\AA_1 \; \dot{\cup} \; \cdots \; \dot{\cup} \; \AA_s)\;  \; \dot{\cup} \; \; (\BB_1 \; \dot{\cup} \; \cdots \; \dot{\cup} \; \BB_t)
\end{equation}
where each $\AA_i$ is an appendage path component that satisfies the no cycle condition
and each $\BB_i$ is an appendage path component that violates the no cycle condition.
Moreover, each $\AA_i$ (resp. $\BB_i$) can be viewed as a subnetwork of $\AA_{\GG}$ by including the arrows in $\AA_{\GG}$ that connect nodes in $\AA_i$ (resp. $\BB_i$).
We call a component $\AA_i$ a {\em no cycle appendage path component}, and a component $\BB_i$ a {\em cycle appendage path component}. 

\begin{definition} \normalfont \label{def: supsimpnet io}
Let $\rho_1, \rho_2$ be adjacent super-simple nodes of $\GG$. 
\begin{enumerate}[(a)]
\item We say a simple node $\rho$ is {\em between} $\rho_1$ and $\rho_2$ if there exists an $\iota o$-simple path that includes $\rho_1$ to $\rho$ to $\rho_2$ in that order.

\item The {\em super-simple subnetwork} $\LL'(\rho_1,\rho_2)$ is the subnetwork of $\GG$ whose nodes are simple nodes between $\rho_1$ and $\rho_2$ and whose arrows are arrows of $\GG$ connecting nodes in $\LL(\rho_1,\rho_2)$.

\item The \textit{structural subnetwork} $\LL(\rho_1,\rho_2)$ is the subnetwork of $\GG$ generated by $\LL'(\rho_1,\rho_2)\cup \BB$, where $\BB$ consists of all cycle appendage path components that connect to $\LL'(\rho_1,\rho_2)$. \END
\end{enumerate}
\end{definition}

\begin{theorem}[\cite{WHAG21}]
\label{thm: appendage homeostasis subnetwork}
Let $\GG$ be a core input-output network.
\begin{enumerate}[{\rm (a)}]
\item Suppose $\AA_{\eta} \subset \AA_{\GG}$ is a no cycle appendage path component, then $\AA_{\eta}$ forms {\em an appendage homeostasis subnetwork} of $\GG$ and it is associated with an appendage homeostasis block.
\item Let $\rho_i, \rho_{i+1}$ be adjacent super-simple nodes in $\GG$. Then $\LL(\rho_i, \rho_{i+1})$ forms {\em a structural homeostasis subnetwork} of $\GG$ and it is associated with a structural homeostasis block.
\end{enumerate}
Conversely, the set of homeostasis subnetworks of $\GG$ is exactly the collection of subnetworks described in {\rm (a)} or {\rm (b)} above.
\end{theorem}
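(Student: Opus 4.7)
The plan is to prove both directions by translating the Frobenius–König block decomposition of the homeostasis matrix $H$ into combinatorial statements about paths in $\GG$, using the standard dictionary that a nonzero entry $f_{\rho,\tau}$ in $H$ corresponds to either an arrow $\tau\to\rho$ in $\GG$ (if $\tau\neq\rho$) or to self-coupling at $\rho$ (if $\tau=\rho$). Since $H$ has one fewer row than the Jacobian (the $\iota$-row deleted) and one fewer column (the $o$-column deleted), the row indexed by a regulatory node $\rho$ automatically carries a self-coupling entry $f_{\rho,\rho}$, while the row of $o$ does not; this asymmetry is what ultimately produces the two classes in Theorem~\ref{thm: hemeostasis block}.

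For direction (a), I would take a no-cycle appendage path component $\AA_\eta$ and verify that the principal submatrix $H[\AA_\eta,\AA_\eta]$ of $H$ indexed by the nodes of $\AA_\eta$ is (i) irreducible (all its nodes are path-equivalent inside $\AA_\GG$) and (ii) decoupled from the rest of $H$ in the block-triangular form \eqref{phq io}. Part (i) is immediate from the definition of a strongly connected component. Part (ii) is where the \emph{no cycle condition} does the essential work: it must be used to show that there is a choice of permutations $P,Q$ in \eqref{phq io} under which no arrow enters or leaves $\AA_\eta$ within any irreducible sub-block — otherwise $\AA_\eta$ would merge with $\CC_S\setminus\AA_\eta$ into a larger strongly connected piece, contradicting the no-cycle hypothesis. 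Since every node of $\AA_\eta$ is a regulatory node, the resulting $k\times k$ block has $k$ self-couplings, placing it in the appendage class by Theorem~\ref{thm: hemeostasis block}(b).

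For direction (b), I would argue that between adjacent super-simple nodes $\rho_i,\rho_{i+1}$ every $\iota o$-simple path passes through $\rho_i$ immediately before entering the region governed by $\LL(\rho_i,\rho_{i+1})$, so in the block decomposition \eqref{phq io} the nodes of $\LL(\rho_i,\rho_{i+1})$ together with $\rho_{i+1}$ sit in a single irreducible block: the simple nodes between $\rho_i,\rho_{i+1}$ are linked into it through the $\iota o$-simple paths they lie on, and the cycle appendage path components attached to them violate the no-cycle condition, which by the converse of the argument in (a) forces them into the \emph{same} irreducible block rather than their own. Counting self-couplings, the column corresponding to $\rho_{i+1}$ has no self-coupling entry within the block (it plays the role of the $f_{\cdot,l}$ column in Theorem~\ref{thm: hemeostasis block}(a)), giving $k{-}1$ self-couplings for $k{+}1$ nodes and hence the structural class.

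For the converse, I would proceed contrapositively: fix an irreducible block $B_\eta$ in \eqref{phq io} and show it must arise from (a) or (b). By Theorem~\ref{thm: hemeostasis block} it has either $k$ or $k{-}1$ self-couplings; in the first case every node of $\KK_\eta$ is regulatory and the nodes form a strongly connected component of $\AA_\GG$ satisfying the no-cycle condition (because any violation would enlarge the block, contradicting irreducibility), which is case (a); in the second case $\KK_\eta$ contains two nodes playing the asymmetric row/column roles in Theorem~\ref{thm: hemeostasis block}(a), and a routine argument identifies them as consecutive super-simple nodes with $\KK_\eta=\LL(\rho_i,\rho_{i+1})$. The main obstacle I expect is the careful bookkeeping in direction (b) and in its converse: one must show simultaneously that the simple nodes, the attached cycle appendage components, and the two bounding super-simple nodes all end up in one and the same irreducible block of \eqref{phq io}, and that no no-cycle appendage component is inadvertently swept in. This is essentially a Frobenius–König argument performed on the bipartite pattern of $H$ dictated by the network topology, and it is the crux of the characterization.
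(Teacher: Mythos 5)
First, a point of reference: the paper does not prove this statement at all --- it is quoted verbatim from Wang \emph{et al.}~\cite{WHAG21} as background (Appendix~\ref{S:Graphic_theory} only ``recalls'' terminology and results), so there is no in-paper proof to compare yours against; the substance you are sketching is the main classification theorem of \cite{WHAG21}, whose proof there occupies several sections.

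Judged on its own, your outline sets up the right dictionary (nonzero entries of $H$ versus arrows and self-couplings, the row/column asymmetry caused by deleting the $\iota$-row and $o$-column), but it has genuine gaps at exactly the steps that carry the weight. In (a), you pass from ``$\AA_\eta$ is strongly connected and satisfies the no cycle condition'' to ``$\AA_\eta$ appears as one of the irreducible diagonal blocks $B_\eta$ in \eqref{phq io}'' by assertion. Irreducibility in \eqref{phq io} means full indecomposability under \emph{independent} row and column permutations $P,Q$ (rows and columns of $H$ are indexed by different node sets), which coincides with strong connectivity only because all diagonal self-coupling entries are present --- a fact you should invoke --- and, more importantly, you must show that the remaining rows and columns of $H$ can be arranged so that $\AA_\eta$ splits off block-triangularly. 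Your claim that the only obstruction to this splitting is a cycle between $\AA_\eta$ and $\CC_S\setminus\AA_\eta$ is precisely what needs proof; in \cite{WHAG21} it rests on the expansion of $\det(H)$ as a signed sum over $\iota o$-simple paths $S$ of the couplings along $S$ times $\det(J_{\CC_S})$, a tool your sketch never introduces. The same gap is magnified in (b) and in the converse: showing that all simple nodes strictly between adjacent super-simple nodes $\rho_i,\rho_{i+1}$, together with every cycle appendage path component attached to them, land in one and the same fully indecomposable block (and that no no-cycle component is swept in), and conversely that the two distinguished row/column nodes of any structural block are \emph{adjacent super-simple} nodes with $\KK_\eta=\LL(\rho_i,\rho_{i+1})$, is the heart of the theorem; calling the latter ``a routine argument'' leaves the main content unproved. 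One needs, for instance, the fact that every $\iota o$-simple path meets the super-simple nodes in the same order and an argument that $\det(H)$ factors exactly at super-simple nodes, neither of which appears in your sketch.
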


\subsection{Homeostasis Inducing and Homeostasis Patterns}
\label{SS:homeo_induce}

Suppose that infinitesimal homeostasis occurs in an input-output network, that is, $o'(\mathcal{I}_0)=0$ for some input value $\mathcal{I}_0$. 
In this case, we say that node $o$ is \emph{homeostatic at $\mathcal{I}_0$}.

\begin{definition} \normalfont
A \emph{homeostasis pattern} is the collection of all nodes that, in addition to the output node, are simultaneously forced to be homeostatic at $\mathcal{I}_0$. \END
\end{definition}

Since the output node $o$ is homeostatic when $\det(H) = 0$ at $(X_0,\II_0)$, thus at least one irreducible factor of $\det(H)$ vanishes.
We call a homeostasis subnetwork 
$\KK_\eta$ \emph{homeostasis inducing} if $h_{\KK_\eta} \equiv \det(B_\eta) = 0$ at $(X_0,\II_0)$. 
We let $\KK \Rightarrow \nu$ denote if node $\nu \in \GG$ is generically homeostatic whenever $\KK$ is homeostasis inducing (e.g. every homeostasis subnetwork in $\GG$ induces the output node $o$).
Given a subset of nodes $\NN$, if for every node $\nu \in \NN$, $\KK\Rightarrow \nu$, then we write $\KK \Rightarrow \NN$.

In \cite{DABGNRS23}, it proved that the induction relation is characterized by homeostasis subnetworks.
Furthermore, the induction applies in at least one direction for distinct homeostasis subnetworks, and no subnetwork induces itself.
Hence, there are four types of homeostasis inducing: structural / appendage homeostasis $\Rightarrow$ structural / appendage subnetworks.

Before stating the main theorem about the classification of homeostasis patterns, we need to introduce some terminology.

\begin{definition} \normalfont
\label{def: structural pattern network}
Let $\GG$ be an input-output network.
The {\em structural pattern network} $\PP_\sS$ of $\GG$ is the feedforward network whose nodes are the super-simple nodes $\rho_j$ and the \emph{backbone} nodes $\til{\LL}_j$ given by 
\[
\tLL_j \ \cup \{\rho_j, \rho_{j+1} \} = \LL (\rho_j, \rho_{j+1})
\]
where $\LL(\rho_i, \rho_{i+1})$ is a structural homeostasis subnetwork of $\GG$.
The arrows of $\PP_\sS$ are defined by the natural ordering of nodes in $\PP_\sS$ as
\begin{equation} \label{backbone_e}
\iota \to \tLL_1 \to \rho_{2} \to \tLL_2 \to \cdots \to \tLL_{q} \to o 
\end{equation}
If a structural homeostasis subnetwork only consists of two adjacent super-simple nodes (Haldane homeostasis type) and the arrow between them, then the corresponding backbone node is an empty set, which is still included in the structural pattern network $\PP_\sS$.
\END
\end{definition}

\begin{definition} \normalfont
\label{def: appendage component}
Each node $\til{\AA}$ in the {\em appendage pattern network} $\PP_\AA$ is a component in the condensation of appendage homeostasis subnetworks of $\GG$, and it is called an {\em appendage component}.
There is an arrow connecting appendage components $\til{\AA}_1$ and $\til{\AA}_2$ if there are nodes $\tau_1 \in \til{\AA}_1$ and $\tau_2 \in \til{\AA}_2$ such that $\tau_1 \to \tau_2 \in \GG$. \END
\end{definition}

\begin{definition} \normalfont
Given an appendage component $\til{\AA}$ whose nodes have arrows from or to simple nodes in $\GG$, then
\begin{enumerate}[(a)]
\item Suppose $\til{\VV}$ is the most downstream node in $\PP_\sS$, such that there is a simple node $\sigma \in \VV$ and an appendage path from a node $\tau \in \til{\AA}$ to the node $\sigma$. We choose uniquely an arrow from $\til{\AA}$ to the node in the structural pattern network $\til{\VV} \in \PP_\sS$.

\item Suppose $\til{\VV}$ is the most upstream node in $\PP_\sS$, such that there is a simple node $\sigma \in \VV$ and an appendage path from the node $\sigma$ to a node $\tau \in \til{\AA}$. We choose uniquely an arrow from the node in the structural pattern network $\til{\VV} \in \PP_\sS$ to $\til{\AA}$.
\END
\end{enumerate}
\end{definition}

\begin{definition} \normalfont
The \emph{homeostasis pattern network} $\PP$ of $\GG$ is the network whose nodes are the union of the nodes of the structural pattern network $\PP_\sS$ and the appendage pattern network $\PP_\AA$. The arrows of $\PP$ are the arrows of $\PP_\sS$, the arrows of $\PP_\AA$, and the arrows between them. \END
\end{definition}

Besides the super-simple nodes, there is a correspondence between the homeostasis subnetworks and their homeostasis pattern networks.
Each structural subnetwork corresponds to a backbone node. 
Each appendage subnetwork corresponds to an appendage component.
Let $\VV \subset \GG$ be a homeostasis subnetwork and let $\til{\VV} \in \PP$ be the corresponding node in the homeostasis pattern network. 
For any node $\nu \in \GG$ or a subset of nodes $\NN \subset \GG$, we let $\til{\VV} \Rightarrow \nu$ (resp. $\NN$) denote that $\VV$ {\em induces} $\nu$ (resp. $\NN$).

%Besides the super-simple nodes in $\GG$ and $\PP$, there is a correspondence between the homeostasis subnetworks in $\GG$ and the homeostasis pattern network $\PP$.
%Each structural subnetwork $\LL \subset \GG$  corresponds to a backbone node $\til{\LL} \in \PP_\sS$. 
%Each appendage subnetworks $\AA \subset \GG$ corresponds to a appendage component $\til{\AA} \in \PP_\AA$.
%
%Let $\VV \subset \GG$ be a homeostasis subnetwork and let $\til{\VV} \in \PP$ be the corresponding node in the homeostasis pattern network. We let  $\til{\VV} \Rightarrow \nu$ denote that $\VV$ {\em induces} $\nu$.

We start with some results of homeostasis patterns in \cite{DABGNRS23}.

\begin{theorem}[\cite{DABGNRS23}] \label{thm:pattern of homeo}
Suppose $\til{\AA} \in \PP_{\AA}$ is an appendage component, and $\tLL \in \PP_{\sS}$ is a backbone node in the homeostasis pattern network $\PP$, then
\begin{enumerate}[{\rm (a)}]
\item {\em (Structural Homeostasis $\Rightarrow$ Structural Subnetwork)} $\tLL$ induces every node of the structural pattern network $\PP_S$ downstream from $\tLL$, but no other nodes of $\PP_S$.
\item {\em (Structural Homeostasis $\Rightarrow$ Appendage Subnetwork)} Let $\til{\VV} \to \til{\AA} \in \PP$ with $\til{\VV}\in\PP_\sS$. 
$\tLL \Rightarrow \til{\AA}$ if and only if $\til{\VV}$ is strictly downstream from $\tLL$.
\item {\em (Appendage Homeostasis $\Rightarrow$ Structural Subnetwork)} Let $\til{\AA} \to \til{\VV} \in \PP$ with $\til{\VV} \in \PP_\sS$. $\til{\AA}$ induces every super-simple node downstream from $\til{\VV}$, but no other super-simple nodes. Further, $\til{\AA} \Rightarrow \tLL$ if and only if $\til{\VV}$ is strictly upstream from $\tLL$.
\item {\em (Appendage Homeostasis $\Rightarrow$ Appendage Subnetwork)} Let $\til{\AA}_1$ and $\til{\AA}_2$ be distinct appendage components. 
Let $\til{\AA}_1 \to \til{\VV}_1, \til{\VV}_2 \to \til{\AA}_2 \in \PP$ with $\til{\VV}_1, \til{\VV}_2 \in \PP_\sS$.
$\til{\AA}_1 \Rightarrow \til{\AA}_2$ if and only if 
$\til{\AA}_1$ is upstream from $\til{\AA}_2$ and every path from $\til{\AA}_1$ to $\til{\AA}_2$ contains a super-simple node which is downstream from $\til{\VV}_1$ and upstream from $\til{\VV}_2$.
\end{enumerate}
\end{theorem}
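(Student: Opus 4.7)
}

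The plan is to reduce every induction statement to a divisibility claim about polynomials in the entries of the Jacobian, and then read off that divisibility from the combinatorics of the pattern network. The starting point is a Cramer-rule calculation. Differentiating the equilibrium equation $F(X(\II),\II)=0$ at $\II_0$ gives $J X'(\II_0) = -F_\II$, and since $F_\II$ is supported only in the input coordinate, for any node $\nu$ we have
\begin{equation} \notag
\nu'(\II_0) \;=\; -f_{\iota,\II}\,\frac{C_{\iota,\nu}}{\det(J)} \;=\; \pm f_{\iota,\II}\,\frac{\det(H_\nu)}{\det(J)},
\end{equation}
where $H_\nu$ is the $(n+1)\times(n+1)$ matrix obtained from $J$ by deleting the row of the input node and the column of $\nu$. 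When $\nu=o$ this is exactly the homeostasis matrix $H$ of \eqref{H_def}. Therefore node $\nu$ is (generically) homeostatic whenever $\det(H_\nu)$ vanishes, and $\KK \Rightarrow \nu$ is equivalent to the statement that the irreducible factor $\det(B_\KK)$ of $\det(H)$ also appears as a factor of $\det(H_\nu)$.

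Next I would apply Frobenius--König decomposition to each $H_\nu$, mirroring the analysis of $H$ performed in Appendix \ref{S:Graphic_theory}. Since $H$ and $H_\nu$ share all rows and differ in exactly one column, the block structure of $H_\nu$ can be described by tracking how removing the $\nu$-column affects the bipartite matching associated with \eqref{phq io}. Concretely, some of the diagonal blocks $B_\eta$ of $PHQ$ remain intact in $PH_\nu Q'$ (for an appropriate column permutation $Q'$), while the blocks that contain the column of $\nu$ or are ``above'' $\nu$ in the upper-triangular ordering get merged into a single new irreducible block whose determinant need not vanish generically. The combinatorial core of the proof is identifying exactly which $B_\eta$ survive, and I would carry this out by translating the Frobenius--König ordering into the linear ordering of super-simple nodes together with the attachments of appendage components, i.e.\ into the pattern network $\PP$.

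With that translation in hand, the four cases of the theorem become parallel combinatorial checks. For (a), a structural block $B_{\tLL}$ attached to adjacent super-simple nodes $\rho_j,\rho_{j+1}$ survives in $\det(H_\nu)$ iff the $\nu$-column is disjoint from the nodes contributing to $B_{\tLL}$ and all blocks of $PHQ$ strictly upstream of $\tLL$, which happens precisely when $\nu$ lies downstream of $\rho_{j+1}$ in the super-simple chain. For (b), I would use Definition \ref{def: appendage component} to observe that an appendage component $\til\AA$ is inserted into the pattern ordering at the most downstream structural node $\til\VV$ reachable from $\til\AA$ by an appendage path; the structural block $B_{\tLL}$ then still divides $\det(H_\nu)$ for $\nu\in\til\AA$ exactly when that insertion point is strictly downstream of $\tLL$. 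Case (c) is the mirror of (b), using the dual attachment in Definition \ref{def: appendage component}(b). Case (d) is the most intricate: two distinct appendage components $\til\AA_1,\til\AA_2$ see each other's blocks through the combinatorics of appendage paths in $\AA_\GG$, and the no-cycle condition from \eqref{e:AiBi io} is what forces every connecting appendage path to pass through a super-simple node lying between the two attachment points $\til\VV_1,\til\VV_2$.

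The main obstacle is the combinatorial bookkeeping in case (d), where one must control how removing a column corresponding to a node inside $\til\AA_2$ reshuffles the appendage blocks of $H_\nu$ and show that $\det(B_{\til\AA_1})$ reappears as a factor exactly under the stated betweenness condition on super-simple nodes. The cleanest route I can see is to treat the four cases uniformly by expanding $\det(H_\nu)$ along a minimal $\iota o$-simple path chosen to pass through the relevant super-simple nodes, using the structural pattern network \eqref{backbone_e} to organize the expansion, and then invoking Theorem \ref{thm: appendage homeostasis subnetwork} to identify the surviving factors with the claimed homeostasis blocks.
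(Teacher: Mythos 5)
First, a point of reference: this paper does not prove Theorem \ref{thm:pattern of homeo} at all --- it is imported verbatim from Duncan \etal~\cite{DABGNRS23}, so the comparison has to be with the proof in that reference rather than with anything in the present text. Your opening reduction does match that proof's strategy: implicit differentiation of the equilibrium equation gives $\nu'(\II_0)=\pm f_{\iota,\II}\,\det(H_\nu)/\det(J)$, and the induction relation $\KK\Rightarrow\nu$ becomes the question of whether the irreducible factor $\det(B_\KK)$ of $\det(H)$ divides $\det(H_\nu)$, with the converse direction resting on irreducibility of $\det(B_\KK)$ plus genericity of admissible vector fields. Up to that point your plan is sound and consistent with the cited argument.

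Beyond that reduction, however, the proposal is a plan rather than a proof, and the mechanism you describe for the decisive step is not correct. The step on which all four parts rest --- determining exactly which irreducible blocks of $H$ survive as factors of $\det(H_\nu)$ --- is asserted, not established. Your bookkeeping is also off: $H_\nu$ is not obtained from $H$ by deleting one column (it loses the $\nu$-column but regains the $o$-column), and its Frobenius--K\"onig structure is not ``the blocks above $\nu$ merged into a single irreducible block.'' The analysis in \cite{DABGNRS23} instead factors $\det(H_\nu)$ through the core subnetwork of $\GG$ relative to the io-pair $(\iota,\nu)$ together with the Jacobian determinant of the complementary subnetwork, and then compares the homeostasis subnetworks of that core with those of the original network; the ordering of super-simple nodes and the attachment arrows of appendage components (Definitions \ref{def: structural pattern network}--\ref{def: appendage component}) enter through that comparison, not through column-deletion bookkeeping on $PHQ$. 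In addition, the ``no other nodes'' and ``only if'' halves of (a)--(d) require a non-divisibility argument --- exhibiting a permutation term of $\det(H_\nu)$ (equivalently, a generic admissible system) witnessing $\det(B_\KK)\nmid\det(H_\nu)$ when the combinatorial condition fails --- and you never address this direction. Finally, you explicitly leave case (d), which is exactly where the betweenness condition on super-simple nodes must be extracted from the no-cycle structure of \eqref{e:AiBi io}, as an unresolved obstacle. So the opening Cramer-rule/divisibility framing is right, but the combinatorial heart of all four statements is missing, and the sketch of how the block structure of $H_\nu$ relates to that of $H$ would need to be replaced by the core-plus-complement factorization to be made rigorous.
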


\section{Infinitesimal Homeostasis in PRN and GRN}
\label{S:inf homeo in RPN}

We now apply the infinitesimal homeostasis results of Wang \etal~\cite{WHAG21} to the PRN $\RR$.
We show that these results can be determined directly from graph theory on the GRN $\GG$ itself.

We start with the observation that, even though a GRN has no associated ODE, it is formally an abstract input-output network.
Hence, we can apply all the combinatorial constructions explained in subsection \ref{SS:combinatorial} to a GRN $\GG$ and its associated PRN $\RR$ (possible self-couplings of GRNs have no effect when carrying out these procedures).
The main goal of this section is to show how these combinatorial constructions relate to each other.

\subsection{Simple Paths in the PRN}
\label{SS:simple path in PRN}

We start by describing how the simple paths in the GRN relate to the simple paths in the PRN.
In turn, this allows us describe how to relate simple nodes, super simple nodes and appendage nodes among the two networks

\begin{lemma}  \label{simple path correspondence}
Let $\GG$ be an input-output GRN and $\RR$ be the associated input-output PRN.
\begin{enumerate}[{\rm (a)}]
\item \label{3.1a}  A path in $\GG$
\[
\sigma_1  \to \cdots \to \sigma_m
\]
is a simple if and only if the corresponding path
\[
\sigma_1^R \to \sigma_1^P \to \cdots \to \sigma_m^R \to \sigma_m^P
\] 
is a simple path in $\RR$
\item \label{3.1b} A path in $\GG$
\[
\iota \to \sigma_1  \to \cdots \to \sigma_m \to o
\]
is an $\iota o$-simple path if and only if the corresponding path
\[
\iota^R \to \iota^P \to \sigma_1^R \to \sigma_1^P \to \cdots \to \sigma_m^R \to \sigma_m^P \to o^R \to o^P
\]
is an $\iota^R o^P$-simple path in $\RR$.
\end{enumerate}
\end{lemma}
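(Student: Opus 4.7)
The plan is to exploit the node-doubling correspondence between $\GG$ and $\RR$ together with the bipartite structure of $\RR$. Recall from Section \ref{SS:PRN} that each GRN node $\sigma$ produces exactly two PRN nodes, $\sigma^R$ and $\sigma^P$, together with a mandatory within-gene PRN arrow $\sigma^R\to\sigma^P$, and that each GRN arrow $\sigma\to\rho$ produces exactly one between-gene PRN arrow $\sigma^P\to\rho^R$. Thus the assignments $\sigma\mapsto\sigma^R$ and $\sigma\mapsto\sigma^P$ are injective, and the mRNA-node set $\{\sigma^R\}$ is disjoint from the protein-node set $\{\sigma^P\}$.

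First I would verify that the expansion
\[
\sigma_1\to\sigma_2\to\cdots\to\sigma_m \quad\mapsto\quad \sigma_1^R\to\sigma_1^P\to\sigma_2^R\to\sigma_2^P\to\cdots\to\sigma_m^R\to\sigma_m^P
\]
is indeed a directed walk in $\RR$: the arrows $\sigma_i^R\to\sigma_i^P$ exist by construction of the PRN, and the arrows $\sigma_i^P\to\sigma_{i+1}^R$ exist precisely because $\sigma_i\to\sigma_{i+1}$ is an arrow in $\GG$. This exhibits a natural bijection between GRN walks and PRN walks that alternate $R,P,R,P,\ldots$ starting at an mRNA node.

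For part (a) I would argue both directions via contrapositive. If the GRN path is not simple, then $\sigma_i=\sigma_j$ for some $i\ne j$, so $\sigma_i^R=\sigma_j^R$ and the PRN path is not simple. Conversely, if the PRN path has a repeated node, the repetition must occur either among the $\sigma_i^R$'s or among the $\sigma_i^P$'s (because the two subsets are disjoint), and by injectivity of $\sigma\mapsto\sigma^R$ (respectively $\sigma\mapsto\sigma^P$) this forces $\sigma_i=\sigma_j$ for some $i\ne j$, so the GRN path is not simple.

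Part (b) is the same argument extended to include the endpoints $\iota,o$. One must only check that the additional nodes $\iota^R,\iota^P,o^R,o^P$ introduced at the ends do not create or destroy repetitions. This again follows from injectivity of the doubling assignment and the bipartite disjointness, so simplicity transfers in both directions. There is no substantive obstacle: the lemma is essentially a bookkeeping translation, and the only point requiring care is that the bipartite labeling prevents any accidental identification of an mRNA node with a protein node across different genes.
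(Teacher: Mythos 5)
Your proposal is correct and follows essentially the same route as the paper: both arguments exploit the node-doubling construction (within-gene arrows $\sigma^R\to\sigma^P$ and between-gene arrows $\sigma^P\to\rho^R$) so that the expanded sequence is a path in $\RR$, and both reduce simplicity to the absence of repeated nodes, which transfers in either direction by the injectivity of $\sigma\mapsto\sigma^R$, $\sigma\mapsto\sigma^P$ and the disjointness of the mRNA and protein node sets. The only difference is cosmetic: the paper phrases the converse by saying a simple path in $\RR$ decomposes into two-node blocks $j^R\to j^P$ and hence is a lift (a slightly stronger remark it reuses later), whereas you verify exactly the stated biconditional for the corresponding path, which fully suffices for the lemma.
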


\begin{proof}
(a) Recall that in an input-output network, an $\iota o$-simple path is a simple path from the input node to the output node.  
Suppose there is an $\iota o$-simple path in $\GG$ as follows
\begin{equation} \label{e:simple_GRN}
\iota \to \sigma_1 \to \cdots \to \sigma_m \to o
\end{equation}
Note that self-coupling is never an arrow in a $\iota o$-simple path in $\GG$.  Thus, 
the $\iota o$-simple path \eqref{e:simple_GRN} lifts uniquely to the following $\iota^R o^P$-simple path in $\RR$:
\begin{equation} \label{e:simple_double}
\iota^R \to \iota^P \to \sigma_1^R \to \sigma_1^P \to \cdots \to \sigma_m^R \to \sigma_m^P \to o^R \to o^P
\end{equation}
On the other hand, since every $\iota^R o^P$-simple path consists of $2$-node blocks $j^R \to j^P$ for several GRN-nodes $j$, an $\iota^R o^P$-simple path in the PRN is always a lift of an $\iota o$-simple path in the GRN.

\smallskip
\noindent
(b) Similar as in item (a), it is clear that every simple path in $\GG$ lifts uniquely to a simple path in $\RR$.
Since every simple path also consists of $2$-node blocks $j^R \to j^P$ for several GRN-nodes $j$, every simple path in $\RR$ is always a lift of a simple path in $\GG$.
\end{proof}

\begin{remark} \em
No self-coupling arrow lies on a simple path in the GRN.
Similarly, the arrow  $\sigma^P\to\sigma^R$ never lies on a simple path in the PRN, otherwise such a path would contain $\sigma^R\to\sigma^P\to\sigma^R$ and hence would not be a simple path. 
\END
\end{remark} 

\begin{lemma} 
\label{node correspondence}
Let $\GG$ be an input-output GRN and $\RR$ be the associated input-output PRN.
\begin{enumerate}[{\rm (a)}]
\item Node $\tau$ is a simple node in $\GG$ if and only if the nodes $\tau^R, \tau^P$ are simple nodes in $\RR$.
\item Node $\tau$ is a super-simple node in $\GG$ if and only if the nodes $\tau^R, \tau^P$ are super-simple nodes in $\RR$.
\item Node $\tau$ is an appendage node in $\GG$ if and only if the nodes $\tau^R, \tau^P$ are appendage nodes in $\RR$.
\end{enumerate}
\end{lemma}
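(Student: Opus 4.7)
The proof will rest almost entirely on the correspondence already established in Lemma \ref{simple path correspondence}, which gives a bijection between $\iota o$-simple paths in $\GG$ and $\iota^R o^P$-simple paths in $\RR$. The essential observation is that under this bijection, a GRN node $\tau$ lies on a GRN simple path precisely when \emph{both} associated PRN nodes $\tau^R$ and $\tau^P$ lie on the corresponding lifted PRN simple path, since each GRN node $\tau$ expands to the block $\tau^R \to \tau^P$ in the lift, and these two nodes always appear together.

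For part (a), the forward direction lifts an $\iota o$-simple path in $\GG$ through $\tau$, via Lemma \ref{simple path correspondence}(b), to an $\iota^R o^P$-simple path in $\RR$ that contains both $\tau^R$ and $\tau^P$; this witnesses that each of $\tau^R, \tau^P$ is simple. For the converse, if $\tau^R$ is simple in $\RR$, there is an $\iota^R o^P$-simple path through $\tau^R$; this path projects (again by Lemma \ref{simple path correspondence}(b)) to an $\iota o$-simple path in $\GG$ that must pass through $\tau$ (otherwise its lift would not visit $\tau^R$). The argument for $\tau^P$ is identical, so $\tau$ is simple in $\GG$.

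For part (b), I apply the same lift/project correspondence but now quantified over \emph{all} simple paths. If $\tau$ lies on every $\iota o$-simple path in $\GG$, then every $\iota^R o^P$-simple path in $\RR$, being the lift of some GRN simple path, must visit $\tau^R$ and $\tau^P$; so both are super-simple in $\RR$. Conversely, if $\tau^R$ (or $\tau^P$) lies on every $\iota^R o^P$-simple path in $\RR$, then taking any $\iota o$-simple path in $\GG$ and lifting it to $\RR$ produces a path forced to visit $\tau^R$, so the original GRN path must contain $\tau$; hence $\tau$ is super-simple in $\GG$. Part (c) is then immediate, since by definition a node is appendage iff it is not simple, and this negation commutes with the biconditional in (a).

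The only point requiring care is the reverse directions in (a) and (b): one must verify that simplicity of a single PRN node (either $\tau^R$ or $\tau^P$ alone) already forces $\tau$ to be simple (resp. super-simple) in $\GG$. This is not an obstacle in substance, because the lift/project correspondence of Lemma \ref{simple path correspondence} is canonical and pairs $\tau^R$ with $\tau^P$ automatically, but it is the one place where the argument should be stated explicitly rather than merely inferred from the symmetric forward direction.
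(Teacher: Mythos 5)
Your proposal is correct and follows essentially the same route as the paper: both rest entirely on the lift/projection bijection of Lemma~\ref{simple path correspondence}, lifting GRN simple paths to PRN simple paths (and back) and then quantifying over all $\iota o$-simple paths for the super-simple case, with the appendage case obtained by negation. Your one refinement --- observing that simplicity of a \emph{single} node $\tau^R$ or $\tau^P$ already forces $\tau$ to be simple, since the canonical lift pairs $\tau^R$ with $\tau^P$ --- is slightly stronger than the paper's converse (which assumes both are simple) and in fact makes the deduction of part (c) from part (a) entirely clean.
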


\begin{proof}
(a) Suppose that the node $\tau$ is simple in $\GG$, then there is an $\iota o$-simple path in $\GG$ containing $\tau$, such that
\begin{equation} \label{simple path in g}
\iota \to \sigma_1  \to \cdots \to \tau \to \cdots \to o
\end{equation}
By Lemma \ref{simple path correspondence}~\eqref{3.1b}, there exists an $\iota^R o^P$-simple path lifted from \eqref{simple path in g} in the PRN $\RR$ as
\begin{equation} \label{simple path in gd}
\iota^R \to \iota^P \to \sigma_1^R \to \sigma_1^P \to \cdots \to \tau^R \to \tau^P \to \cdots \to o^R \to o^P
\end{equation}
Thus, we obtain that both nodes $\tau^R, \tau^P$ are simple nodes in $\RR$.
On the other hand, assume both nodes $\tau^R, \tau^P$ are simple nodes in $\RR$, thus there exists an $\iota^R o^P$-simple path 
containing $\tau^R, \tau^P$ in $\RR$.
Using Lemma \ref{simple path correspondence}, this $\iota^R o^P$-simple path is a lift of an $\iota o$-simple path in the GRN containing $\tau$. 
This implies the node $\tau$ is simple in $\GG$.

\smallskip
\noindent
(b) The node $\tau$ is super-simple in $\GG$ if and only if every $\iota o$-simple path in $\GG$ contains $\tau$.
From Lemma \ref{simple path correspondence} (b), every $\iota^R o^P$-simple paths in $\RR$ is a lift of an $\iota o$-simple path in $\GG$. 
Thus, $\tau$ is super-simple in $\GG$ is equivalent to every $\iota^R o^P$-simple paths must contain both $\tau^R$ and $\tau^P$, which represents both nodes $\tau^R, \tau^P$ are super-simple nodes in $\RR$.

\smallskip
\noindent
(c) Since we know all nodes are either appendage or simple in $\GG$ and $\RR$.
Suppose $\tau$ is an appendage node. From part (a) we see that neither $\tau^R, \tau^P$ are simple nodes in $\RR$.
\end{proof}

\begin{lemma} \label{super simple order}
The set $\{\iota = \rho_1, \rho_2, \ldots, \rho_{q}, \rho_{q+1} = o\}$ of super-simple nodes in $\GG$
is well ordered by the order of their appearance on any $\iota o$-simple path.
We denote the ordering of the super-simple nodes by  
\begin{equation} \label{order of super simple in g}
\rho_1 \prec \cdots \prec \rho_{q} \prec \rho_{q+1}
\end{equation}
Then the ordering of the super-simple nodes $\{\rho^R_1, \rho^P_1, \ldots, \rho^R_{q+1}, \rho^P_{q+1}\}$ in $\RR$ is
\begin{equation} \label{order of super simple in gd}
    \rho^R_1 \prec \rho^P_1  \prec \cdots \prec \rho^R_{q+1} \prec \rho^P_{q+1}
\end{equation}
Moreover, $\LL (\rho_i^R, \rho_{i}^P) = \{ \rho_i^R \to \rho_{i}^P \}$ for $i = 1, 2, \ldots, q+1$.
\end{lemma}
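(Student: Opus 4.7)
The plan is to deduce both the ordering statement and the identification of the structural subnetwork $\LL(\rho_i^R,\rho_i^P)$ from the two correspondence lemmas (Lemma~\ref{simple path correspondence} and Lemma~\ref{node correspondence}), doing essentially no new combinatorial work.

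First, I would confirm that the set of super-simple nodes of $\GG$ is indeed well ordered by appearance on any $\iota o$-simple path. Since a super-simple node lies on \emph{every} $\iota o$-simple path, if $\rho_i$ precedes $\rho_j$ on one $\iota o$-simple path but follows $\rho_j$ on another, concatenating fragments of the two paths would produce an $\iota o$-simple path missing one of them, contradicting super-simplicity. Thus \eqref{order of super simple in g} is intrinsic. By Lemma~\ref{node correspondence}(b), the super-simple nodes of $\RR$ are precisely $\{\rho_i^R,\rho_i^P : i=1,\ldots,q+1\}$, so the same argument shows that these are also well ordered on every $\iota^R o^P$-simple path.

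Second, to pin down the ordering \eqref{order of super simple in gd}, I would invoke Lemma~\ref{simple path correspondence}(b): every $\iota^R o^P$-simple path in $\RR$ is the canonical lift of a unique $\iota o$-simple path in $\GG$, and each GRN-node $\sigma$ on that path contributes the two consecutive arrows $\sigma^R\to\sigma^P$ (with no other PRN-arrow inserted between them, since the only arrow into $\sigma^P$ is $\sigma^R\to\sigma^P$). Applying this to a simple path that realizes the ordering \eqref{order of super simple in g} of super-simple GRN-nodes immediately gives the ordering $\rho_1^R \prec \rho_1^P \prec \rho_2^R \prec \rho_2^P \prec \cdots \prec \rho_{q+1}^R \prec \rho_{q+1}^P$ in $\RR$, and by well-orderedness this ordering is independent of the chosen path.

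Finally, for $\LL(\rho_i^R,\rho_i^P)=\{\rho_i^R\to\rho_i^P\}$, I would unwind Definition~\ref{def: supsimpnet io}. The super-simple subnetwork $\LL'(\rho_i^R,\rho_i^P)$ consists of simple PRN-nodes lying strictly between $\rho_i^R$ and $\rho_i^P$ on some $\iota^R o^P$-simple path; by the previous paragraph no such nodes exist, since on every such path $\rho_i^R$ and $\rho_i^P$ are consecutive via the arrow $\rho_i^R\to\rho_i^P$. Therefore $\LL'(\rho_i^R,\rho_i^P)$ is just the two-node set with the connecting arrow. To upgrade this to $\LL(\rho_i^R,\rho_i^P)$ one must check that no cycle appendage path component of $\RR$ attaches to $\LL'(\rho_i^R,\rho_i^P)$: but an appendage component attaches to a super-simple subnetwork only via nodes strictly between its endpoints (by the definition of attachment), and there are none. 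The main (minor) obstacle is this last verification, which requires care with the definition of ``connects to'' for appendage components; once that is spelled out the identification is immediate.
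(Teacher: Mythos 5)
Your proposal is correct and takes essentially the same route as the paper's proof: identify the super-simple nodes of $\RR$ via Lemma~\ref{node correspondence}, read the interleaved ordering $\rho_1^R\prec\rho_1^P\prec\cdots\prec\rho_{q+1}^R\prec\rho_{q+1}^P$ off the lifted simple paths of Lemma~\ref{simple path correspondence}, and use that $\rho_i^R\to\rho_i^P$ is the only arrow out of $\rho_i^R$ (and the only arrow into $\rho_i^P$) to conclude $\LL(\rho_i^R,\rho_i^P)=\{\rho_i^R\to\rho_i^P\}$. Your explicit check that no cycle appendage path component can attach to $\LL'(\rho_i^R,\rho_i^P)$ (since attachment goes through non-super-simple simple nodes, of which there are none) is left implicit in the paper and is a welcome extra step of care.
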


\begin{proof}
From Lemma \ref{node correspondence}, we get $\rho^R_1, \rho^P_1, \ldots, \rho^R_{q+1}, \rho^P_{q+1}$ are the super-simple nodes in $\RR$. Since the order of super-simple nodes in $\GG$ is given in \eqref{order of super simple in g}, all $\iota o$-simple paths follow
\begin{equation}
\rho_1 \pathto \rho_2 \pathto \cdots \pathto \rho_{q} \pathto \rho_{q+1}
\end{equation}
where $\rho_j \pathto \rho_{j+1}$ indicates a simple path from $\rho_j$ to $\rho_{j+1}$.
Note that every $\iota^R o^P$-simple path in $\RR$ is always a lift of an $\iota o$-simple path in $\GG$. 
Each $\iota^R o^P$-simple path must satisfy
\begin{equation}
\rho^R_1 \pathto \rho^P_1 \pathto \cdots \pathto \rho^R_{q+1} \pathto \rho^P_{q+1}
\end{equation}
and we prove \eqref{order of super simple in gd}.
For any $i = 1, 2, \ldots, q+1$, the mRNA node $\rho_i^R$ only goes to the protein node $\rho_i^P$, and it is also the only node which has an arrow to $\rho_i^P$ in $\RR$. 
Therefore, as adjacent super-simple nodes $\rho_i^R, \rho_i^P$, we get that $\rho_i^R \to \rho_i^P$ is the only simple path, so $\LL (\rho_i^R, \rho_{i}^P) = \{ \rho_i^R \to \rho_{i}^P \}$.
\end{proof}

\begin{lemma} \label{appendage path in PRN}
Let $\sigma_1, \sigma_2$ be two distinct appendage nodes in $\GG$. Suppose there is an appendage path between $\sigma_1$ and $\sigma_2$.
Then $\sigma^P_1, \sigma^R_2$ are appendage nodes, and there exists an appendage path connecting $\sigma^P_1$ and $\sigma^R_2$ in $\RR$.
\end{lemma}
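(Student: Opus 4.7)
The plan is to lift the given appendage path in $\GG$ to a path in $\RR$ using the standard PRN construction, then verify that the truncated subpath between $\sigma_1^P$ and $\sigma_2^R$ is a simple path whose internal nodes are all appendage.

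First, since $\sigma_1$ and $\sigma_2$ are appendage nodes in $\GG$, Lemma~\ref{node correspondence}(c) immediately gives that $\sigma_1^R, \sigma_1^P, \sigma_2^R, \sigma_2^P$ are all appendage nodes in $\RR$. In particular, $\sigma_1^P$ and $\sigma_2^R$ are appendage. Next, I would take an appendage path in $\GG$, say
\[
\sigma_1 \to \tau_1 \to \tau_2 \to \cdots \to \tau_k \to \sigma_2,
\]
where each $\tau_i$ is an appendage node of $\GG$ (the case $k=0$, a direct arrow $\sigma_1 \to \sigma_2$, is handled identically). By the definition of the PRN construction in Section~\ref{SS:PRN}, each GRN arrow $\tau_{i} \to \tau_{i+1}$ lifts to the PRN arrow $\tau_i^P \to \tau_{i+1}^R$, and every PRN-node $\tau_i$ also has the internal arrow $\tau_i^R \to \tau_i^P$. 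Hence the GRN path lifts to
\[
\sigma_1^R \to \sigma_1^P \to \tau_1^R \to \tau_1^P \to \cdots \to \tau_k^R \to \tau_k^P \to \sigma_2^R \to \sigma_2^P
\]
in $\RR$.

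Truncating at the endpoints of interest yields the candidate path
\[
\sigma_1^P \to \tau_1^R \to \tau_1^P \to \cdots \to \tau_k^R \to \tau_k^P \to \sigma_2^R
\]
from $\sigma_1^P$ to $\sigma_2^R$. Its internal nodes are precisely the pairs $\tau_i^R, \tau_i^P$, each of which is appendage in $\RR$ by Lemma~\ref{node correspondence}(c) applied to the appendage node $\tau_i \in \GG$. It remains to check that this is a simple path: since the original path in $\GG$ is simple, the lifted sequence in $\RR$ visits each node at most once (the lift is injective on nodes, and $\sigma_1^P \neq \sigma_2^R$ because $\sigma_1 \neq \sigma_2$). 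Therefore it satisfies the definition of an appendage path in $\RR$.

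The only mildly delicate point is ensuring that no repetition is introduced by the lift, but this follows at once from the fact that the map $\sigma \mapsto \{\sigma^R, \sigma^P\}$ sends distinct GRN-nodes to disjoint pairs of PRN-nodes, so distinctness of $\sigma_1, \tau_1, \ldots, \tau_k, \sigma_2$ implies distinctness of all nodes on the lifted path. Thus there is no substantive obstacle; the argument is essentially a bookkeeping consequence of the simple-path correspondence already established in Lemma~\ref{simple path correspondence}.
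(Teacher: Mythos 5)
Your argument is correct and follows essentially the same route as the paper's proof: establish via Lemma~\ref{node correspondence} that the relevant PRN-nodes are appendage, lift the GRN appendage path $\sigma_1 \to \tau_1 \to \cdots \to \tau_k \to \sigma_2$ to the PRN path $\sigma_1^P \to \tau_1^R \to \tau_1^P \to \cdots \to \tau_k^P \to \sigma_2^R$, and observe that all intermediate nodes are appendage. Your explicit check that the lift introduces no repeated nodes is a minor addition that the paper leaves implicit; otherwise the two proofs coincide.
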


\begin{proof}
Since $\sigma_1, \sigma_2$ are appendage nodes in $\GG$, we obtain that $\sigma^P_1, \sigma^R_2$ are appendage nodes in $\RR$ from Lemma \ref{node correspondence}.
Assume the appendage path connecting $\sigma_1$ and $\sigma_2$ is
\begin{equation} \notag
\sigma_1 \to \tau_1  \to \cdots \to \tau_m \to \sigma_2
\end{equation}
where $\{\tau_i\}^{m}_{i=1}$ are appendage nodes in $\GG$.
Then there exists a corresponding path in $\RR$, such that
\begin{equation} \notag
\sigma_1^P \to \tau_1^R \to \tau_1^P \to \cdots \to \tau_m^R \to \tau_m^P \to \sigma_2^R
\end{equation}
Again from Lemma \ref{node correspondence}, $\{\tau^R_i, \tau^P_i\}^{m}_{i=1}$ are all appendage nodes in $\RR$. Therefore, we find the appendage path connecting two nodes $\sigma^P_1$ and $\sigma^R_2$.
\end{proof}

\subsection{Homeostasis Subnetworks in GRN}
\label{SS:homeo subnetworks in RPN}

In this section we prove the first main result of the paper, which gives the relation between the homeostasis subnetworks of the GRN and the associated PRN.

\begin{theorem} \label{thm: homeo subnetwork in PRN}
Let $\GG$ be an input-output GRN with associated input-output PRN $\RR$.
\begin{enumerate}[{\rm (a)}]
\item Every $\GG$-structural subnetwork $\LL(\rho_i,\rho_{i+1})$, where $\rho_i,\rho_{i+1}$ are consecutive super-simple nodes, corresponds to a $\RR$-structural subnetwork  $\LL(\rho^P_i,\rho^R_{i+1})$.
\item  Every super-simple node $\rho \in \GG$ corresponds to a $\RR$-Haldane subnetwork $\LL(\rho^R,\rho^P) = \{\rho^R\to\rho^P\}$, or $\LL(\rho^R,\rho^P) = \{\rho^R\rightleftarrows\rho^P\}$ if $\rho$ has a self-coupling.
\item  Every non-single appendage node $\GG$-subnetwork $\AA$ corresponds to a $\RR$-appendage subnetwork $\AA^{\RR}$.
In particular, a non-single appendage node $\GG$-subnetwork $\{\tau\selfarrow\}$ corresponds to a $\RR$-appendage subnetwork $\{\tau^R \rightleftarrows \tau^P\}$.
\item Every single appendage node $\GG$-subnetwork $\AA=\{\tau\}$ corresponds to two $\RR$-appendage subnetworks $\AA^R = \{\tau^R\}$ ($R$-null-degradation), $\AA^P = \{\tau^P\}$ ($P$-null-degradation).
\end{enumerate}
\end{theorem}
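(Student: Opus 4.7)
The plan is to derive everything from the combinatorial characterization of homeostasis subnetworks given in Theorem \ref{thm: appendage homeostasis subnetwork}, using the correspondence results (Lemmas \ref{simple path correspondence}, \ref{node correspondence}, \ref{super simple order}, \ref{appendage path in PRN}) to transfer the classification between $\GG$ and $\RR$. The core observation is that these lemmas already establish a bijection between simple/appendage paths in $\GG$ and simple/appendage paths in $\RR$, with the only ``new'' structural feature in $\RR$ being the mandatory intra-gene arrow $\rho^R\to\rho^P$ and the conditionally-present back-arrow $\rho^P\to\rho^R$ (present iff $\rho$ has autoregulation in $\GG$).

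For part (b), I would apply Lemma \ref{super simple order} directly: the mRNA node $\rho^R$ and protein node $\rho^P$ of a super-simple $\rho\in\GG$ are consecutive super-simple nodes in $\RR$ with $\rho^R\to\rho^P$ the only simple path between them (plus the back arrow if $\rho$ is autoregulated). This is a Haldane (degree-one) structural subnetwork in $\RR$ by Definition \ref{def: supsimpnet io}. For part (a), Lemmas \ref{node correspondence}, \ref{super simple order} identify $\rho_i^P$ and $\rho_{i+1}^R$ as consecutive super-simple nodes in $\RR$; I would then verify that the simple nodes between them in $\RR$ are precisely the doubled nodes $\{\sigma^R,\sigma^P : \sigma \text{ simple between } \rho_i \text{ and } \rho_{i+1} \text{ in } \GG\}$ using Lemma \ref{simple path correspondence}(b). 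The remaining task is to show that cycle appendage path components attaching to $\LL'(\rho_i,\rho_{i+1})$ correspond to cycle appendage path components attaching to $\LL'(\rho_i^P,\rho_{i+1}^R)$; this follows because Lemma \ref{appendage path in PRN} lifts appendage paths to appendage paths in $\RR$, and any cycle through an appendage node $\tau\in\GG$ lifts to a cycle through both $\tau^R$ and $\tau^P$ in $\RR$ (since $\tau^R\to\tau^P$ must be traversed).

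For parts (c) and (d), I would analyze appendage path components of $\RR$. Let $\AA\subset\AA_\GG$ be a no-cycle appendage path component. Since Lemma \ref{node correspondence}(c) gives $\tau^R,\tau^P\in\AA_\RR$ for every $\tau\in\AA$, and since the arrow $\tau^R\to\tau^P$ is always present, I want to show $\AA^\RR := \{\tau^R,\tau^P : \tau\in\AA\}$ is itself a single appendage path component precisely when $\AA$ is non-single. If $|\AA|\geq 2$ or $\AA=\{\tau\selfarrow\}$, then there is a GRN-cycle inside $\AA$, which lifts via the inter-gene arrows $\sigma^P\to\tau^R$ and the intra-gene arrows $\tau^R\to\tau^P$ to a strongly connected structure containing all of $\tau^R,\tau^P$ — hence $\AA^\RR$ is a single path-equivalence class. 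If $\AA=\{\tau\}$ is a single appendage node (no self-coupling), however, the PRN contains $\tau^R\to\tau^P$ but not $\tau^P\to\tau^R$, and since $\tau$ was by assumption not in a cycle with any other appendage node, $\tau^R$ and $\tau^P$ belong to separate path-equivalence classes, giving two appendage subnetworks. The main obstacle is carrying out the verification of the no-cycle condition (Definition \ref{def: appnet}(d)) in $\RR$: I must argue that a cycle between $\AA^\RR$ and $\CC_{S}\setminus\AA^\RR$ in $\RR$ (for an $\iota^R o^P$-simple path $S$) would project down to a cycle violating the no-cycle condition in $\GG$ relative to the underlying $\iota o$-simple path, and conversely; this is the step that requires the most care, since one must check that the projection is well-defined on appendage paths and preserves the cycle/simple structure, which is ensured by the bijectivity of path-lifting established in Lemma \ref{simple path correspondence} combined with Lemma \ref{appendage path in PRN}. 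Finally, I would use Theorem \ref{thm: appendage homeostasis subnetwork} in the converse direction to confirm that no additional $\RR$-subnetworks arise beyond those listed in (a)–(d).
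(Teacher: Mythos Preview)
Your proposal is correct and follows essentially the same approach as the paper. The only organizational difference is that the paper factors out the node-level correspondences (simple nodes in $\LL(\rho_i,\rho_{i+1})$ versus $\LL(\rho_i^P,\rho_{i+1}^R)$, appendage nodes in $\AA_j$ versus $\AA_j^{\RR}$, etc.) into a separate preparatory lemma (Lemma~\ref{subnetwork correspondence}) and then assembles Theorem~\ref{thm: homeo subnetwork in PRN} from it, whereas you inline these verifications directly; your explicit flagging of the no-cycle condition as the delicate step is in fact more careful than the paper's somewhat terse treatment of that point.
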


We start with the characterization of homeostasis subnetworks in GRN and PRN.

\begin{lemma} \label{subnetwork correspondence}
Let $\GG$ be an input-output GRN and $\RR$ be the associated input-output PRN.
Suppose that $\LL(\rho_i, \rho_{i+1})$ is a structural subnetwork and $\AA_j$ is an appendage subnetwork of $\GG$. 
\begin{enumerate}[{\rm (a)}]
\item The non-super-simple simple node $\tau \in \LL(\rho_i, \rho_{i+1})$ in $\GG$ if and only if both non-super-simple simple nodes $\{\tau^R, \tau^P\} \subset  \LL(\rho_i^P, \rho_{i+1}^R)$ in $\RR$.
\item The appendage node $\tau \in \LL (\rho_i, \rho_{i+1})$ in $\GG$ if and only if both appendage nodes $\{\tau^R, \tau^P\} \subset \LL (\rho_i^P, \rho_{i+1}^R)$ in $\RR$.
\item The node $\tau$ is a single appendage node in $\GG$ if and only if both nodes $\tau^R$ and $\tau^P$ form two appendage subnetworks in $\RR$.
\item The node $\tau \subset \AA_j$ or $\{\tau\} = \AA_j$ with self-coupling in $\GG$ if and only if both nodes $\{\tau^R, \tau^P\} \subset \AA_j^{\RR}$ or $\{\tau^R, \tau^P\} = \AA_j^{\RR}$ in $\RR$.
\end{enumerate}
\end{lemma}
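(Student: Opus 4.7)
The plan is to reduce all four parts to the path-lifting correspondences already established in Lemmas~\ref{simple path correspondence}--\ref{appendage path in PRN}. The unifying observation is that every $\iota o$-simple path in $\GG$ lifts uniquely to an $\iota^R o^P$-simple path in $\RR$ by inserting the arrow $\sigma^R\to\sigma^P$ inside each node $\sigma$, and conversely every $\iota^R o^P$-simple path descends uniquely because it is forced to consist of $2$-node blocks $\sigma^R\to\sigma^P$. Moreover, Lemma~\ref{super simple order} tells me that in $\RR$ the super-simple nodes alternate as $\rho_i^R\prec\rho_i^P\prec\rho_{i+1}^R\prec\rho_{i+1}^P$, so the super-simple subnetwork $\LL'(\rho_i^P,\rho_{i+1}^R)$ in $\RR$ is exactly the lift of $\LL'(\rho_i,\rho_{i+1})$ in $\GG$, the adjacent pairs $\LL(\rho_i^R,\rho_i^P)$ being the Haldane subnetworks from Theorem~\ref{thm: homeo subnetwork in PRN}(b).

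For part (a), I would argue that a non-super-simple simple node $\tau\in\LL'(\rho_i,\rho_{i+1})$ lies on some $\iota o$-simple path of the form $\iota\to\cdots\to\rho_i\to\cdots\to\tau\to\cdots\to\rho_{i+1}\to\cdots\to o$; lifting this path places $\tau^R$ and $\tau^P$ strictly between the super-simple nodes $\rho_i^P$ and $\rho_{i+1}^R$, so both lie in $\LL'(\rho_i^P,\rho_{i+1}^R)$, and they are simple but not super-simple by Lemma~\ref{node correspondence}. The converse uses descent: if $\tau^R$ or $\tau^P$ lies in $\LL'(\rho_i^P,\rho_{i+1}^R)$, the corresponding $\iota^R o^P$-simple path descends to an $\iota o$-simple path placing $\tau$ between $\rho_i$ and $\rho_{i+1}$. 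Part (b) follows the same pattern but for cycle appendage path components: using Lemma~\ref{appendage path in PRN}, any appendage cycle through $\tau$ in $\GG$ lifts to an appendage cycle through both $\tau^R$ and $\tau^P$ in $\RR$ (threaded by $\sigma_i^P\to\sigma_{i+1}^R$ arrows inherited from the GRN arrows), and the connection to $\LL'(\rho_i,\rho_{i+1})$ is preserved on both sides. Since cycle appendage components survive the lift intact and attach to the same super-simple subnetworks, $\tau\in\LL(\rho_i,\rho_{i+1})$ iff $\{\tau^R,\tau^P\}\subset\LL(\rho_i^P,\rho_{i+1}^R)$.

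For part (c), the definition of a single appendage node gives that $\{\tau\}$ is an appendage path component with no self-coupling. Lack of self-coupling in $\GG$ means no PRN arrow $\tau^P\to\tau^R$ exists, so the only $R$-$P$ arrow between $\tau^R$ and $\tau^P$ is $\tau^R\to\tau^P$. This rules out $\tau^R$ and $\tau^P$ being path-equivalent in the appendage subnetwork $\AA_\RR$, and any other node $\sigma$ forming a cycle with $\tau^R$ or $\tau^P$ would descend to a cycle with $\tau$ in $\GG$ via Lemma~\ref{appendage path in PRN}, contradicting that $\{\tau\}$ is a singleton component. Therefore $\{\tau^R\}$ and $\{\tau^P\}$ are separate appendage path components; the no-cycle condition for each is inherited from the no-cycle condition for $\{\tau\}$ by descent. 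Part (d) is the complementary case. If $\AA_j=\{\tau\selfarrow\}$, the self-coupling produces the PRN arrow $\tau^P\to\tau^R$ which, combined with the always-present $\tau^R\to\tau^P$, makes $\{\tau^R,\tau^P\}$ strongly connected, hence a single appendage path component $\AA_j^\RR$. If $\AA_j$ has multiple nodes $\tau_1,\ldots,\tau_k$ that are pairwise path-equivalent in $\AA_\GG$, Lemma~\ref{appendage path in PRN} lifts each appendage path $\tau_a\pathto\tau_b$ to an appendage path $\tau_a^P\pathto\tau_b^R$, and combining with the intrinsic $\tau_i^R\to\tau_i^P$ arrows makes all of $\{\tau_i^R,\tau_i^P\}$ path-equivalent in $\AA_\RR$, yielding the single component $\AA_j^\RR$.

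The main obstacle will be part (b), where I must be careful that the cycle-appendage-component-to-$\LL'$ attachment used in Definition~\ref{def: supsimpnet io} is preserved in both directions of the lift: I have to verify both that no \emph{spurious} cycle appendage components appear in $\RR$ attaching to $\LL'(\rho_i^P,\rho_{i+1}^R)$ that do not descend to genuine cycle components in $\GG$ attaching to $\LL'(\rho_i,\rho_{i+1})$, and that the structural subnetwork I obtain in $\RR$ is precisely $\LL(\rho_i^P,\rho_{i+1}^R)$ rather than something larger that would absorb the adjacent Haldane pieces. The bipartite structure of $\RR$ and the fact that $\rho_i^P\prec\rho_{i+1}^R$ are \emph{adjacent} super-simple in $\RR$ (Lemma~\ref{super simple order}) are what ultimately prevent this spillover.
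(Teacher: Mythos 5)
Your proposal is correct and follows essentially the same route as the paper's proof: lift and descend $\iota o$-simple paths and appendage cycles via Lemmas~\ref{simple path correspondence}--\ref{appendage path in PRN}, use the super-simple ordering of Lemma~\ref{super simple order} to identify $\LL'(\rho_i^P,\rho_{i+1}^R)$ with the lift of $\LL'(\rho_i,\rho_{i+1})$, and treat the self-coupling arrow $\tau^P\to\tau^R$ as the sole source of path-equivalence between $\tau^R$ and $\tau^P$. The only difference is presentational: the converse directions of (c) and (d) are left implicit in your sketch, but they follow by the same descent-of-cycles argument the paper uses, and your concluding worry about ``spillover'' in (b) is resolved exactly as you indicate, since every cycle in $\RR$ consists of $\sigma^R\to\sigma^P$ blocks threaded by inherited protein-to-mRNA arrows and hence descends to a cycle in $\GG$.
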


\begin{proof}
Lemma \ref{super simple order} ensures that $\rho_i^R, \rho_i^P, \rho_{i+1}^R, \rho_{i+1}^P$ are four adjacent super-simple nodes in $\RR$.

\smallskip
\noindent
(a) Suppose $\tau \in \LL(\rho_i, \rho_{i+1})$ is a non-super-simple simple node in $\GG$, then there exists a $\rho_i \rho_{i+1}$-simple path containing $\tau$. 
Using Lemma \ref{node correspondence}, we obtain that $\tau^R, \tau^P$ are non-super-simple simple nodes, which are contained in the corresponding $\rho_i^R \rho_{i+1}^P$-simple path lifted from GRN in $\RR$. Thus, $\{\tau^R, \tau^P\} \subset \LL(\rho_i^P, \rho_{i+1}^R)$ in $\RR$.
On the other hand, suppose both non-super-simple simple nodes $\{\tau^R, \tau^P\} \subset \LL(\rho_i^P, \rho_{i+1}^R)$ in $\RR$. 
There must exist a $\rho_i^R \rho_{i+1}^P$-simple path containing $\tau^R, \tau^P$ in $\RR$, which is a lift of an $\rho_i \rho_{i+1}$-simple path in the GRN containing the non-super-simple simple node $\tau$.

\smallskip
\noindent
(b) Assume $\tau \in \LL (\rho_i, \rho_{i+1})$ is an appendage node in $\GG$, then there exists a cycle consisting of at least one non-super-simple simple and appendage nodes including $\tau$, that is,
\begin{equation} \label{simple in i cycle in g} 
\tau \to \sigma_1  \to \cdots \to \sigma_m \to \tau
\end{equation}
where $\sigma_j \in \LL (\rho_i, \rho_{i+1})$ for $j = 1, \ldots, m$ and at least one node in cycle is non-super-simple simple.
Now, we can assume (w.l.o.g.) that $\sigma_1 \in \LL (\rho_i, \rho_{i+1})$, and obtain the following cycle in $\RR$:
\begin{equation} \label{simple in i cycle in gd} 
\tau^R \to \tau^P \to \sigma^R_1  \to \cdots \to \sigma^P_m \to \tau^R
\end{equation}
It is clear that $\sigma^R_1 \neq \rho^R_{i+1}$ and $\sigma^P_1 \neq \rho^P_i$.
From part $(a)$, we derive $\{\sigma_1^R, \sigma_1^P\} \subset \LL(\rho_i^P, \rho_{i+1}^R)$ in $\RR$. 
This works for all other non-super-simple simple nodes in \eqref{simple in i cycle in g} as well. 
Hence, \eqref{simple in i cycle in gd} is a cycle consisting of at least two non-super-simple simple and appendage nodes including $\tau^R, \tau^P$.  
Therefore, $\{\tau^R, \tau^P\} \subset  \LL(\rho_i^P, \rho_{i+1}^R)$ in $\RR$.
Now, suppose that both appendage nodes $\{\tau^R, \tau^P\} \subset  \LL (\rho_i^P, \rho_{i+1}^R)$ in $\RR$. 
There must exist a cycle in $\LL (\rho_i^P, \rho_{i+1}^R)$ containing $\tau^R, \tau^P$ but $\rho_i^P, \rho_{i+1}^R$.
By Lemma \ref{node correspondence} and item $(a)$, this must be a lift of a cycle in $\LL (\rho_i, \rho_{i+1})$ containing $\tau$ but $\rho_i, \rho_{i+1}$.

\smallskip
\noindent
(c) Suppose that $\tau$ is a single appendage node in $\GG$, then 
$\AA_j=\{\tau\}$ and there is no cycle containing $\tau$ in $\AA_\GG$. 
By item (b), since the node $\tau$ is not self-coupling, there is no cycle containing $\tau^R$ and $\tau^P$ in $\AA_{\RR}$.
Therefore $\tau^R$ and $\tau^P$ form two separate appendage subnetworks in $\RR$.
Next, suppose that both appendage nodes $\{\tau^R\}$ and $\{\tau^P\}$ form two appendage subnetworks in $\RR$. 
Thus, there is no cycle containing $\tau^R$ and $\tau^P$ in $\AA_{\RR}$.
From item (b), it follows that node $\tau$ isn't self-coupling, and all cycles in $\AA_\GG$ exclude $\tau$.

\smallskip
\noindent
(d) We first consider an appendage subnetwork $\AA_j=\{\tau\}$ consisting of a single appendage node with self-coupling. 
It follows from item (c) that there is no cycle containing $\tau^R$ and $\tau^P$ in $\AA_{\RR}$, except the cycle $\tau^R  \to \tau^P \to \tau^R$ from self-coupling on $\tau$. 
Then $\{\tau^R, \tau^P\} = \AA_j^{\RR}$ form an appendage subnetwork in $\RR$.
Next, suppose that the appendage node $\tau \subset \AA_j$ in $\GG$, there exists a cycle consisting of appendage nodes including $\tau$ in $\AA_j \subseteq \AA_{\GG}$ as follows
\begin{equation} \label{appendage in i cycle in g} 
\tau \to \tau_1  \to \cdots \to \tau_n \to \tau
\end{equation}
where $\tau_i \in \AA_j$ for $i = 1, \ldots, n$. 
From items (b) and (c), this lifts uniquely to a corresponding cycle in $\AA_{\RR}$:
\begin{equation} \label{appendage in i cycle in gd} 
\tau^R \to \tau^P \to \tau^R_1  \to \cdots \to \tau^P_n \to \tau^R
\end{equation}
thus $\{\tau^R, \tau^P\} \subset \AA^{\RR}_j$ in $\RR$.

\noindent
The converse direction of item (d)  follows from items (b) and (c).
\end{proof}

\begin{proof}[Proof of Theorem \ref{thm: homeo subnetwork in PRN}]
Let $\GG$ be an input-output GRN with $n+2$ nodes and
with $q+1$ super-simple nodes $\iota = \rho_1 \prec \cdots \prec \rho_{q} \prec \rho_{q+1} = o$.
Then $\GG$ has $q$ structural homeostasis subnetworks (self-couplings can be kept during these procedures)
\begin{equation} 
\LL(\rho_1,\rho_2), \; \LL(\rho_2,\rho_3), \; \ldots, \; \LL(\rho_{q-1},\rho_q), \; \LL(\rho_q,,\rho_{q+1})
\end{equation}
where $\LL(\rho_{i},\rho_{i+1}) = \LL' (\rho_i,\rho_{i+1}) \; \cup \; \BB_{i, i+1}$, with 
$\BB_{i, i+1}$ consisting of all appendage path components in $\AA_\GG$ that violate the no cycle condition with respect to simple nodes in $\LL'(\rho_i,\rho_{i+1})$.
Moreover, $\GG$ has $r+s$ appendage homeostasis subnetworks
\begin{equation} 
\AA'_1, \ldots, \AA'_r, \; 
\AA_1, \ldots, \AA_s
\end{equation}
where $\AA'_i = \{ \tau_i \}$ and $\tau_i$ is a single appendage node for $1 \leq i \leq r$.

First, we consider appendage homeostasis subnetworks.
In $\GG$, the appendage homeostasis subnetworks $\AA'_i$  consists of a single appendage node $\tau_i$ for $1 \leq i \leq r$.
From Lemma \ref{subnetwork correspondence}(c), nodes $\tau^R_i$ and $\tau^P_i$ form two appendage homeostasis subnetworks in $\RR$, which correspond to two irreducible blocks of the form
\[
\text{R-null-degradation: } \big[ f_{\tau^R_i, \tau^R_i} \big]
\ \text{ and } \
\text{P-null-degradation: } \big[ f_{\tau^P_i, \tau^P_i} \big]
\]

For the rest of appendage homeostasis subnetworks in $\GG$, we assume that
\[
\AA_j := \{\sigma_{j_1}, \ldots, \sigma_{j_k} \}, \ \text{for } 1 \leq j \leq s
\]
Applying Lemma \ref{subnetwork correspondence}(d), we obtain the corresponding appendage homeostasis subnetworks $\AA_j^{\RR}$ in $\RR$ as follows (including appendage nodes with self-coupling)
\[
\AA_j^{\RR} = \{\sigma^R_{j_1}, \sigma^P_{j_1}, \ldots, \sigma^R_{j_k}, \sigma^P_{j_k} \}
\]

Now we deal with structural homeostasis subnetworks.
Given super-simple nodes $\iota = \rho_1 \prec \cdots \prec \rho_{q} \prec \rho_{q+1} = o$ in $\GG$, from Lemma \ref{super simple order} we have that $\rho^R_1, \rho^P_1, \ldots, \rho^R_{q+1}, \rho^P_{q+1}$ are all super-simple nodes in $\RR$, and $\LL(\rho_i^R, \rho_{i}^P) = \langle\rho_i^R, \rho_{i}^P \rangle$ for $1 \leq i \leq q+1$. 
Clearly, $\LL(\rho_i^R, \rho_{i}^P)$ has the input node $\rho_i^R$ and the output node $\rho_i^P$, thus it corresponds to the following irreducible components in $B$:
\[
\text{$\RR$-Haldane: } \big[ f_{\rho_i^P, \rho_i^R} \big]
\]

For other structural homeostasis subnetworks in $\GG$, we assume that
\[
\LL (\rho_j, \rho_{j+1}) := \{ \rho_j, \rho_{j+1}, \sigma_{j_1}, \ldots, \sigma_{j_k} \}, \ \text{for } 1 \leq j \leq q
\]
Using Lemma \ref{subnetwork correspondence}, we obtain the corresponding structural homeostasis subnetworks in $\RR$
\[
\LL(\rho^P_{i}, \rho^R_{i+1}) = \{\rho^P_{i}, \rho^R_{i+1}, \sigma^R_{i_1}, \sigma^P_{i_1}, \ldots, \sigma^R_{i_l}, \sigma^P_{i_l} \}
\]
In summary, from the doubling of nodes, it follows that the associated PRN $\RR$ has $2q$ structural homeostasis subnetworks 
\begin{equation}
\LL (\rho_1^R,\rho_1^P), \; \LL(\rho_1^P,\rho_2^R), \; \ldots, \; \LL(\rho_{q}^P,\rho_{q+1}^R), \; \LL (\rho_{q+1}^R,\rho_{q+1}^P)
\end{equation}
where $\LL(\rho_i^R,\rho_i^P) = \{\rho_i^R \to \rho_i^P \}$, or  
$\LL(\rho_i^R,\rho_i^P) = \{\rho_i^R \rightleftarrows \rho_i^P \}$ if the super-simple node $\rho$ has a self-coupling, for $1 \leq i \leq q+1$ are $\RR$-Haldane subnetworks. 
Moreover, $\RR$ has $2r+s$ appendage homeostasis subnetworks
\begin{equation} 
\AA^R_1, \AA^P_1, \ldots,\AA^R_r, \AA^P_r, \; 
\AA^{\RR}_1, \ldots, \AA^{\RR}_s
\end{equation}
where
$\AA^R_i = \{\tau^R_i\}$, $\AA^P_i = \{\tau^P_i\}$ for $1 \leq i \leq r$.
This gives a complete correspondence between the homeostasis subnetworks of a GRN and its associated PRN.
\end{proof}

\subsection{Enumerating Homeostasis Subnetworks in GRN and PRN} 
\label{SS:algorithm}

The following algorithm is used to find the different homeostasis subnetworks of a GRN $\GG$, with an input node $\iota$ and an output node $o$, and the homeostasis subnetworks of the associated PRN $\RR$.

\paragraph{Step 0:} Reduce the input-output network to a core GRN network $\GG$ and then let $\RR$ be the RPN of the core GRN.  This process is the same as first forming the RPN of the GRN and then reducing to the core network.

\paragraph{Step 1:} Identify the $\iota o$-simple paths in the core GRN network $\GG$ and the simple nodes $\sigma$, the super-simple nodes $\rho$, and the appendage nodes $\tau$ of $\GG$.
The self-couplings of simple $\GG$-nodes can be removed.
It follows from Lemma \ref{simple path correspondence} that $\{\sigma^R,\sigma^P\}$ are simple nodes, $\{\rho^R,\rho^P\}$ are super-simple nodes, and $\{\tau^R,\tau^P\}$ are appendage nodes of $\RR$.

\paragraph{Step 2:} Determine the appendage homeostasis subnetworks of $\GG$.
Specifically, the appendage subnetwork $\AA_\GG$ of $\GG$ can be written uniquely as the disjoint union
\begin{equation} \label{e:AiBi}
\AA_\GG = 
(\AA'_1 \; \dot{\cup} \; \cdots \; \dot{\cup} \; \AA'_r)\;  \; \dot{\cup} \; \; (\AA_1 \; \dot{\cup} \; \cdots \; \dot{\cup} \; \AA_s)\;  \; \dot{\cup} \; \; (\BB_1 \; \dot{\cup} \; \cdots \; \dot{\cup} \; \BB_t)
\end{equation}
where each $\AA'_i$ consists of a single appendage node, each $\AA_i$ is a no cycle appendage path component,
and each $\BB_i$ is an appendage path component that violates the no cycle condition.
The appendage homeostasis subnetworks of $\GG$ are
\begin{equation} \label{appendage homeostasis subnetworks in g}
\AA'_1, \ldots, \AA'_r, \AA_1, \ldots, \AA_s
\end{equation}
By Lemma \ref{subnetwork correspondence}, the corresponding appendage subnetwork $\AA_\RR$ of $\RR$ can be written as
\begin{equation} \label{e:AiBid}
\AA_\RR = 
(\AA_{1}^R \; \dot{\cup} \; \AA_1^P \; \dot{\cup} \; \cdots \; \dot{\cup} \; \AA_r^R \; \dot{\cup} \; \AA_r^P)\;  \; \dot{\cup} \; \;  (\AA_1^{\RR} \; \dot{\cup} \; \cdots \; \dot{\cup} \; \AA_s^{\RR})\;  \; \dot{\cup} \; \;(\BB_1^{\RR} \; \dot{\cup} \; \cdots \; \dot{\cup} \; \BB_t^{\RR})
\end{equation}
where 
\begin{equation}
\begin{split}
\{\tau^R,\tau^P\}\subset \AA_j^{\RR} \ (\text{or }\BB_j^{\RR}) & \ \text{ if } \
\tau\in\AA_j \ (\text{or } \BB_j)
\\ \AA_j^R = \{\tau^R\}, \
\AA_j^P = \{\tau^P\} & \ \text{ if } \ \AA'_j = \{\tau\}
\end{split}
\end{equation}
The appendage homeostasis subnetworks of $\RR$ are
\begin{equation} \label{appendage homeostasis subnetworks in gd}
\AA_1^R, \AA_1^P, \ldots, \AA_r^R, \AA_r^P, \AA_1^{\RR}, \ldots, \AA_s^{\RR}
\end{equation}
In particular, if $\tau$ is an appendage node with self-coupling which gives an appendage subnetwork $\{\tau\selfarrow\}$ of $\GG$ then the corresponding appendage network of $\RR$ is $\{\tau^R \rightleftarrows	\tau^P\}$.

\paragraph{Step 3:} Determine the structural homeostasis subnetworks of $\GG$.
Let
\begin{equation} \label{e:rho1toq}
 \iota = \rho_1 \prec \cdots \prec \rho_{q} \prec \rho_{q+1} = o
\end{equation}
be the ordered set of super-simple nodes of $\GG$.
The super-simple subnetworks of $\GG$ are
\begin{equation} \label{super-simple subnetworks in g}
\LL' (\rho_1,\rho_2),\; \LL' (\rho_2,\rho_3) \;,\ldots,\; \LL' (\rho_{q-1},\rho_q),\; \LL' (\rho_q,\rho_{q+1})
\end{equation}
Then, the structural subnetworks of $\GG$ are
\begin{equation} \label{e:SHB}
\LL(\rho_1,\rho_2),\; \LL(\rho_2,\rho_3) \;,\ldots,\; \LL(\rho_{q-1},\rho_q),\; \LL(\rho_q,,\rho_{q+1})
\end{equation}
where $\LL(\rho_{i},\rho_{i+1}) = \LL' (\rho_i,\rho_{i+1}) \cup \BB$, with $\BB$ consists of all appendage path components that violate the no cycle condition with simple nodes in $\LL'(\rho_i,\rho_{i+1})$.

By Lemma \ref{super simple order}, the ordered set of super-simple nodes of $\RR$ are 
\[
 \iota  = \rho_1^R \prec \rho_1^P \prec \cdots \prec \rho_{q+1}^R \prec \rho_{q+1}^P = o
\]
By Lemma \ref{subnetwork correspondence} the super-simple subnetworks of $\RR$ are
\begin{equation} \label{super-simple subnetworks in gd}
\LL' (\rho_1^R,\rho_1^P), \LL' (\rho_1^P,\rho_2^R), \;\ldots,\; \LL' (\rho_{q}^P,\rho_{q+1}^R),\; \LL' (\rho_{q+1}^R,\rho_{q+1}^P)
\end{equation}
where 
\begin{equation}
\LL'(\rho_i^R,\rho_i^P) = \langle\rho_i^R, \rho_i^P\rangle
\ \text{ and } \
\{\tau^R,\tau^P\} \subset \LL' (\rho_i^P,\rho_{i+1}^R) \ \text{ if } \
\tau \in \LL' (\rho_i,\rho_{i+1})
\end{equation}
Then, the structural subnetworks of $\RR$ are
\begin{equation} \label{structural homeostasis subnetworks in gd}
\LL (\rho_1^R,\rho_1^P), \LL(\rho_1^P,\rho_2^R),\;\ldots,\; \LL(\rho_{q}^P,\rho_{q+1}^R),\; \LL (\rho_{q+1}^R,\rho_{q+1}^P)
\end{equation}
where $\LL(\rho_i^R,\rho_i^P) = \{\rho_i^R \to \rho_i^P\}$, or
$\LL(\rho_i^R,\rho_i^P) = \{\rho_i^R \rightleftarrows	 \rho_i^P\}$,
and $\LL(\rho_{i}^P,\rho_{i+1}^R) = \LL' (\rho_i^R,\rho_i^P) \cup \BB^{\RR}$, with $\BB^{\RR}$ consisting of all appendage path components that violate the no cycle condition with respect to simple nodes in $\LL'(\rho_i^R,\rho_i^P)$.

\section{Homeostasis Patterns in PRN and GRN}
\label{S:pattern of homeo}

In this section, we obtain the homeostasis in the PRN and show how they relate to the homeostasis patterns in the GRN.

\subsection{Homeostasis Pattern Networks}
\label{SS: homeo inducing in the RPN}

We start by showing the relation between the {\em homeostasis pattern network} $\PP(\GG)$ of the GRN and the {\em homeostasis pattern network} $\PP(\RR)$ of the associated PRN.
We start by relating the nodes in both networks.

First, we establish the relation among the nodes in structural pattern network $\PP_\sS$ and the appendage components $\PP_\AA$ in both homeostasis pattern networks $\PP(\GG)$ and $\PP(\RR)$.

\begin{lemma}
\label{node in homeostasis pattern network}
Let $\GG$ be an input-output GRN and $\RR$ be the associated input-output PRN.
\begin{enumerate}[{\rm (a)}]
\item The nodes of the structural pattern network $\PP_\sS (\GG)$ are
\begin{equation}
\iota = \rho_1, \; \tLL_1,\; \rho_2, \; \tLL_2,\; \ldots,\;  \tLL_q,\; \rho_{q+1} = o
\end{equation}
where $\tLL_j \cup \{\rho_j, \rho_{j+1}\} = \LL (\rho_j, \rho_{j+1})$  for $1 \leq j \leq q$.

\noindent
The appendage components of the appendage pattern network $\PP_\AA (\GG)$ are
\begin{equation} 
\til{\AA}'_1, \ldots, \til{\AA}'_r, \; 
\til{\AA}_1, \ldots, \til{\AA}_s
\end{equation}
where $\til{\AA}'_i = \til{\AA}'_i$ for $1 \leq i \leq r$, and $\til{\AA}_j = \til{\AA}_j$ for $1 \leq j \leq s$.

\item The nodes of the structural pattern network $\PP_\sS (\RR)$ are
\begin{equation} 
\iota = \rho^R_1, \ \tLL (\rho_1^R,\rho_1^P), \ \rho^P_1, \ \tLL (\rho_1^P,\rho_2^R), \ \rho_2^R, \ \ldots, \ \tLL (\rho_{q+1}^R,\rho_{q+1}^P), \ \rho^P_{q+1} = o
\end{equation}
where the backbone nodes $\tLL$ are obtained by 
\begin{equation}
\begin{split}
& \tLL (\rho_i^R,\rho_i^P) = \emptyset, \ \text{for } i = 1, \ldots, q+1
\\& \tLL (\rho_j^P, \rho_{j+1}^R) \cup 
\{\rho_j^P, \rho^R_{j+1}\} = \LL (\rho_j^P, \rho_{j+1}^R), \ \text{for } j = 1, \ldots, q
\end{split}
\end{equation}
The appendage components of the appendage pattern network $\PP_\AA (\RR)$ are
\begin{equation} 
\til{\AA}^R_1, \ldots, \til{\AA}^P_r, \; 
\til{\AA}^{\RR}_1, \ldots, \til{\AA}^{\RR}_s
\end{equation}
where $\til{\AA}^R_i = \AA^R_i, \til{\AA}^P_i = \AA^P_i$ for $1 \leq i \leq r$, and $\til{\AA}^{\RR}_j = \AA^{\RR}_j$ for $1 \leq j \leq s$.
\end{enumerate}
\end{lemma}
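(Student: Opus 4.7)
The plan is to derive both parts by unwinding Definition \ref{def: structural pattern network} and Definition \ref{def: appendage component} and feeding them the catalogs of homeostasis subnetworks already produced by Theorem \ref{thm: homeo subnetwork in PRN} (together with its supporting Lemmas \ref{super simple order} and \ref{subnetwork correspondence}). So there is no new combinatorics to invent; what is required is careful bookkeeping of which object in the pattern network each homeostasis subnetwork contributes.

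For part (a) I would start by listing the super-simple nodes of $\GG$ in their unique linear order $\iota = \rho_1 \prec \cdots \prec \rho_{q+1} = o$, and by Theorem \ref{thm: appendage homeostasis subnetwork}(b) the structural homeostasis subnetworks of $\GG$ are exactly the $\LL(\rho_j,\rho_{j+1})$ for consecutive super-simple nodes. Plugging these into Definition \ref{def: structural pattern network} gives the backbone nodes $\tLL_j = \LL(\rho_j,\rho_{j+1})\setminus\{\rho_j,\rho_{j+1}\}$ and the claimed alternating list $\rho_1,\tLL_1,\rho_2,\ldots,\tLL_q,\rho_{q+1}$. For the appendage side, Theorem \ref{thm: appendage homeostasis subnetwork}(a) identifies the appendage homeostasis subnetworks with the no-cycle components, namely the $\AA'_i$ (single-node components) and the $\AA_j$ in \eqref{appendage homeostasis subnetworks in g}; Definition \ref{def: appendage component} then lists these themselves as the appendage components, and so $\til{\AA}'_i = \AA'_i$ and $\til{\AA}_j = \AA_j$ are exactly the nodes of $\PP_\AA(\GG)$.

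For part (b) I would apply Lemma \ref{super simple order} to obtain the order $\rho_1^R \prec \rho_1^P \prec \cdots \prec \rho_{q+1}^R \prec \rho_{q+1}^P$ of super-simple nodes in $\RR$, which already forces the alternating pattern in the statement. Then I would use Theorem \ref{thm: homeo subnetwork in PRN}(a,b) to split the structural homeostasis subnetworks of $\RR$ into two disjoint families: the $\RR$-Haldane subnetworks $\LL(\rho_i^R,\rho_i^P) = \{\rho_i^R \to \rho_i^P\}$, one for each super-simple GRN node, and the lifts $\LL(\rho_j^P,\rho_{j+1}^R)$ of the $\GG$-structural subnetworks. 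Since each $\RR$-Haldane subnetwork consists only of two adjacent super-simple nodes and the arrow between them, the special clause at the end of Definition \ref{def: structural pattern network} forces its backbone node to be the empty set, yielding $\tLL(\rho_i^R,\rho_i^P) = \emptyset$. The remaining backbones $\tLL(\rho_j^P,\rho_{j+1}^R) = \LL(\rho_j^P,\rho_{j+1}^R) \setminus\{\rho_j^P,\rho_{j+1}^R\}$ come out by definition. For the appendage components I would invoke Theorem \ref{thm: homeo subnetwork in PRN}(c,d) to enumerate the appendage homeostasis subnetworks of $\RR$ as $\AA_i^R,\AA_i^P$ (from the splitting of each single appendage node of $\GG$) together with the $\AA_j^{\RR}$ obtained by doubling each non-single $\AA_j$, and then Definition \ref{def: appendage component} again identifies these directly as the appendage components of $\PP_\AA(\RR)$.

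The only genuinely delicate step, and the one I would treat most carefully, is the handling of the Haldane backbones in (b): one must verify both that every super-simple $\RR$-pair $(\rho_i^R,\rho_i^P)$ really does produce exactly the two-node structural subnetwork $\{\rho_i^R\to\rho_i^P\}$ (which is the content of Lemma \ref{super simple order}, since $\rho_i^R$'s only out-arrow relevant to an $\iota^R o^P$-simple path is to $\rho_i^P$) and that this forces the associated backbone node to be the formally distinct empty set included in $\PP_\sS(\RR)$. Once this edge case is settled the rest is a direct translation through the already-proved correspondence, so the overall proof is short.
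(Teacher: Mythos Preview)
Your proposal is correct and follows essentially the same approach as the paper's own proof: both parts are obtained by feeding the already-established catalogs of super-simple nodes and homeostasis subnetworks (from Lemma \ref{super simple order} and Theorem \ref{thm: homeo subnetwork in PRN}) into Definitions \ref{def: structural pattern network} and \ref{def: appendage component}, with the only substantive observation being that the $\RR$-Haldane subnetworks $\LL(\rho_i^R,\rho_i^P)$ consist solely of two adjacent super-simple nodes and hence yield empty backbone nodes. Your write-up is in fact somewhat more explicit than the paper's, which dispatches part (a) in a single sentence and part (b) in a short paragraph.
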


\begin{proof}
(a) It follows straightly from the Definition 
\ref{def: structural pattern network} and \ref{def: appendage component}.

\smallskip
\noindent
(b) The super-simple nodes in $\RR$ are
\[
\rho^R_1, \rho^P_1, \ldots, \rho^R_{q+1}, \rho^P_{q+1}
\] 
and there are two types of structural homeostasis subnetworks
\begin{equation}
\begin{split}
& \LL (\rho_i^R,\rho_i^P) = \{\rho_i^R, \rho_i^P \}, \ \text{for } i = 1, \ldots, q+1
\\& \LL(\rho_j^P,\rho_{j+1}^R) \supseteq \{\rho_j^P, \rho^R_{j+1}\}, \ \text{for } j = 1, \ldots, q
\end{split}
\end{equation}
Note that every structural homeostasis subnetwork $\LL(\rho_i^R,\rho_i^P)$ only consists of two adjacent super-simple nodes, thus the corresponding backbone node $\tLL (\rho_i^R,\rho_i^P)$ is an empty set.
Next, from Definitions \ref{def: structural pattern network} and \ref{def: appendage component}, we obtain the rest of the backbone nodes of $\PP_\sS (\RR)$ and the appendage components of $\PP_{\AA} (\RR)$ in the PRN $\RR$.
\end{proof}

Next, we clarify the relation between the arrows in both homeostasis pattern networks.

\begin{lemma} \label{arrow of structural pattern network or appendage component}
Let $\GG$ be an input-output GRN and $\RR$ be the associated input-output PRN.
\begin{enumerate}[{\rm (a)}] 
\item The arrows of the structural pattern network $\PP_\sS (\RR)$ are
\begin{equation} 
\label{arrows of the structural pattern network in R}
\rho^R_1 \to \tLL (\rho_1^R,\rho_1^P) \to \rho^P_1 \to \tLL (\rho_1^P,\rho_2^R) \to \rho^R_2 \to
\cdots \to \tLL (\rho_{q+1}^R,\rho_{q+1}^P) \to \rho^P_{q+1}
\end{equation}

\item Consider two distinct appendage components $\til{\AA}_1, \til{\AA}_2$ in $\PP_{\AA} (\GG)$. 
Suppose $\til{\AA}_1 \to \til{\AA}_2 \in \PP_{\AA} (\GG)$, then the corresponding arrows in $\PP_{\AA} (\RR)$ are:
\begin{enumerate}[{\rm (i)}]
\item If $\til{\AA}_1 \in \{ \til{\AA}'_i \}^{r}_{i=1}$ and $\til{\AA}_2 \in \{ \til{\AA}_j \}^{s}_{j=1}$, then
\begin{equation}
\label{arrows of appendage component in R part1}
\til{\AA}^{R}_1 \to \til{\AA}^{P}_1 \to \til{\AA}^{\RR}_2
\end{equation}

\item If $\til{\AA}_2 \in \{ \til{\AA}'_i \}^{r}_{i=1}$ and $\til{\AA}_1 \in \{ \til{\AA}_j \}^{s}_{j=1}$, then
\begin{equation}
\label{arrows of appendage component in R part2}
\til{\AA}^{\RR}_1 \to \til{\AA}^{R}_2 \to \til{\AA}^{P}_2
\end{equation}

\item If $\til{\AA}_1, \til{\AA}_2 \in \{ \til{\AA}'_i \}^{r}_{i=1}$, then
\begin{equation}
\label{arrows of appendage component in R part3}
\til{\AA}^{R}_1 \to \til{\AA}^{P}_1 \to \til{\AA}^{R}_2 \to \til{\AA}^{P}_2
\end{equation}

\item If $\til{\AA}_1, \til{\AA}_2 \in \{ \til{\AA}_j \}^{s}_{j=1}$, then
\begin{equation}
\label{arrows of appendage component in R part4}
\til{\AA}^{\RR}_1 \to \til{\AA}^{\RR}_2
\end{equation}
\end{enumerate}
\end{enumerate}
\end{lemma}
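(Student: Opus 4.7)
For part (a), the plan is to apply Definition \ref{def: structural pattern network} directly, using two ingredients already established: (i) Lemma \ref{super simple order}, which gives the ordered list of super-simple nodes $\rho_1^R \prec \rho_1^P \prec \cdots \prec \rho_{q+1}^R \prec \rho_{q+1}^P$ of $\RR$, and (ii) Lemma \ref{node in homeostasis pattern network}(b), which identifies the backbone nodes $\tLL(\rho_i^R,\rho_i^P)=\emptyset$ and $\tLL(\rho_j^P,\rho_{j+1}^R)$. Since $\PP_\sS(\RR)$ is by definition the feedforward chain interleaving super-simple nodes with the backbone nodes of the structural homeostasis subnetworks between consecutive super-simple nodes, the chain \eqref{arrows of the structural pattern network in R} is immediate. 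Although the backbone nodes $\tLL(\rho_i^R,\rho_i^P)$ are empty, per Definition \ref{def: structural pattern network} they are still included as formal nodes, which justifies keeping them in the chain.

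For part (b), the key observation is a normal-form statement for arrows in the PRN $\RR$: the only arrows are of the form $\sigma^R\to\sigma^P$ (present for every gene $\sigma$) and $\sigma^P\to\tau^R$ (one such arrow for each GRN arrow $\sigma\to\tau$). I would first unpack the hypothesis $\til{\AA}_1\to\til{\AA}_2\in\PP_\AA(\GG)$: by Definition \ref{def: appendage component} there exist $\tau_1\in\til{\AA}_1$ and $\tau_2\in\til{\AA}_2$ with $\tau_1\to\tau_2\in\GG$, which lifts to the single PRN arrow $\tau_1^P\to\tau_2^R$. Then I would case-split on the types of $\til{\AA}_1$ and $\til{\AA}_2$ using Theorem \ref{thm: homeo subnetwork in PRN}(c)--(d): a non-single appendage component lifts as a whole to $\til{\AA}^\RR$, while a single appendage node $\tau$ lifts to two distinct components $\til{\AA}^R=\{\tau^R\}$ and $\til{\AA}^P=\{\tau^P\}$ connected by the intragenic arrow $\tau^R\to\tau^P$.

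In case (iv), both endpoints are non-single, so $\tau_1^P,\tau_2^R$ already belong to $\til{\AA}_1^\RR$ and $\til{\AA}_2^\RR$ respectively, and the lifted arrow $\tau_1^P\to\tau_2^R$ produces the single arrow \eqref{arrows of appendage component in R part4}. In cases (i)--(iii) the splitting of one or both single appendage nodes forces a chain: the intragenic $\tau_1^R\to\tau_1^P$ (always an arrow in $\PP_\AA(\RR)$ between $\til{\AA}_1^R$ and $\til{\AA}_1^P$ when $\tau_1$ is single), the lifted $\tau_1^P\to\tau_2^R$, and (if $\tau_2$ is single) the intragenic $\tau_2^R\to\tau_2^P$, concatenated as in \eqref{arrows of appendage component in R part1}, \eqref{arrows of appendage component in R part2}, \eqref{arrows of appendage component in R part3}. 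Finally, to certify that no extra arrows occur one uses the normal form: any arrow out of $\til{\AA}_1$ or into $\til{\AA}_2$ in $\PP_\AA(\RR)$ must be induced either by such an intragenic arrow or by the lift of a single GRN arrow between members of $\til{\AA}_1$ and $\til{\AA}_2$, so the listed arrows exhaust the possibilities coming from $\til{\AA}_1\to\til{\AA}_2$.

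The only non-mechanical step I expect is bookkeeping in cases (i) and (ii), where one must verify that the intragenic edge $\tau_1^R\to\tau_1^P$ (resp.\ $\tau_2^R\to\tau_2^P$) actually yields an arrow of $\PP_\AA(\RR)$ rather than collapsing within a single appendage component. This is exactly what Theorem \ref{thm: homeo subnetwork in PRN}(d) guarantees: a single appendage node of $\GG$ splits into two distinct appendage components of $\RR$, so $\til{\AA}_1^R\neq\til{\AA}_1^P$ and the arrow between them is genuine in the condensation defining $\PP_\AA(\RR)$.
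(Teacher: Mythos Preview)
Your proposal is correct and follows essentially the same route as the paper: for (a) both you and the paper invoke Lemma~\ref{super simple order} to order the super-simple nodes of $\RR$ and then read off the feedforward chain from Definition~\ref{def: structural pattern network}; for (b) both arguments unpack the hypothesis $\til{\AA}_1\to\til{\AA}_2$ via Definition~\ref{def: appendage component} to obtain a GRN arrow $\tau_1\to\tau_2$, lift it to the PRN arrow $\tau_1^P\to\tau_2^R$, and case-split on whether each component is a single appendage node. The only cosmetic difference is that the paper cites Lemma~\ref{subnetwork correspondence} (the technical lemma underlying Theorem~\ref{thm: homeo subnetwork in PRN}) where you cite the theorem itself, and your write-up is slightly more explicit about why $\til{\AA}_1^R\neq\til{\AA}_1^P$ and why no extraneous arrows appear---points the paper leaves implicit.
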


\begin{proof}
(a) From Definition \ref{def: structural pattern network} the arrows of $\PP_\sS (\GG)$ satisfy
\begin{equation} 
\iota \to \tLL_1 \to \rho_{2} \to \tLL_2 \to \cdots \to \tLL_{q} \to o 
\end{equation}
where $\iota = \rho_1 \prec \rho_2 \prec \ldots \prec \rho_{q+1} = o$ are the super-simple nodes in $\GG$.
Applying Lemma \ref{super simple order}, we get the super-simple nodes in $\RR$ under the following order, 
\begin{equation} 
\rho^R_1 \prec \rho^P_1  \prec \cdots \prec \rho^R_{q+1} \prec \rho^P_{q+1}
\end{equation}
Again from Definition \ref{def: structural pattern network}, we get \eqref{arrows of the structural pattern network in R}.

\smallskip
\noindent
(b) For \eqref{arrows of appendage component in R part1}, it follows from $\til{\AA}_1 \in \{ \til{\AA}'_i \}^{r}_{i=1}$ and $\til{\AA}_2 \in \{ \til{\AA}_j \}^{s}_{j=1}$ that we can assume $\til{\AA}_1 = \{\tau_1 \}$ with $\tau_1$ a single appendage node in $\GG$. 
Then we get the corresponding appendage components in $\PP_{\AA}(\RR)$ as
\[
\til{\AA}^{R}_1 = \{\tau^R_1\}, \; \til{\AA}^{P}_1 = \{\tau^P_1\}, \; \til{\AA}^{\RR}_2
\]
From $\til{\AA}_1 \to \til{\AA}_2$ in $\PP_{\AA} (\GG)$, there exists a node $\sigma \in \til{\AA}_2$, such that $\tau_1 \to \sigma$.
Using Lemma \ref{subnetwork correspondence}, we have $\{ \sigma^R, \sigma^P \} \in \til{\AA}^{\RR}_2$.
Under the gene coupling in the PRN, we get arrows $\tau^{R}_1 \to \tau^{P}_1$ and $\tau^{P}_1 \to \sigma^R$, which implies that
\[
\til{\AA}^{R}_1 \to \til{\AA}^{P}_1 \to \til{\AA}^{\RR}_2
\]

\smallskip
\noindent
For \eqref{arrows of appendage component in R part2}, since $\til{\AA}_2 \in \{ \til{\AA}'_i \}^{r}_{i=1}$ and $\til{\AA}_1 \in \{ \til{\AA}_j \}^{s}_{j=1}$, we assume $\til{\AA}_2 = \{\tau_2 \}$ with $\tau_2$ is a single appendage node in $\GG$. Then we get the corresponding appendage components in $\PP_{\AA} (\RR)$ as
\[
\til{\AA}^{\RR}_1, \; \til{\AA}^{R}_2 = \{\tau^R_2\}, \; \til{\AA}^{P}_2 = \{\tau^P_2\}
\]
Again from $\til{\AA}_1 \to \til{\AA}_2$ in $\PP_{\AA} (\GG)$, there exists a node $\sigma \in \til{\AA}_1$, such that $\sigma \to \tau_2$.
Using Lemma \ref{subnetwork correspondence}, we have $\{ \sigma^R, \sigma^P \} \in \til{\AA}^{\RR}_1$.
Under the gene coupling in the PRN, we get arrows $\tau^{R}_2 \to \tau^{P}_2$ and $\sigma^R \to \sigma^P \to \tau^{R}_2$, and obtain 
\[
\til{\AA}^{\RR}_1 \to \til{\AA}^{R}_2 \to \til{\AA}^{P}_2
\]
We omit the proofs of \eqref{arrows of appendage component in R part3} and \eqref{arrows of appendage component in R part4}, since they follow directly from the above.
\end{proof}

\begin{lemma} \label{arrow between structural pattern network and appendage component 1}
Let $\GG$ be an input-output GRN and $\RR$ be the associated input-output PRN.
Consider $\til{\AA} \in \PP_{\AA} (\GG)$ and $\VV \in \PP_{\sS} (\GG)$. 
Suppose $\til{\AA} \to \VV \in \PP (\GG)$, then the corresponding arrows in $\PP (\RR)$ are
\begin{enumerate}[{\rm (a)}]
\item If $\til{\AA} \in \{ \til{\AA}'_i \}^{r}_{i=1}$ and $\VV = \rho_i $, then
\begin{equation}
\label{arrows from appendage to structural in R part1}
\til{\AA}^{R} \to \til{\AA}^{P} \to  \rho^R_i
\end{equation}

\item If $\til{\AA} \in \{ \til{\AA}_j \}^{s}_{j=1}$ and $\VV = \tLL_i$, then
\begin{equation}
\label{arrows from appendage to structural in R part2}
\til{\AA}^{\RR} \to \tLL (\rho_i^P, \rho_{i+1}^R)
\end{equation}

\item If $\til{\AA} \in \{ \til{\AA}_j \}^{s}_{j=1}$ and $\VV = \rho_i$, then
\begin{equation}
\label{arrows from appendage to structural in R part3}
\til{\AA}^{\RR} \to  \rho^R_i
\end{equation}

\item If $\til{\AA} \in \{ \til{\AA}'_i \}^{r}_{i=1}$ and $\VV = \tLL_i$, then
\begin{equation}
\label{arrows from appendage to structural in R part4}
\til{\AA}^{R} \to \til{\AA}^{P} \to \tLL (\rho_i^P, \rho_{i+1}^R)
\end{equation}
\end{enumerate}
\end{lemma}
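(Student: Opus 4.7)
The plan is to reduce the statement to the lifting of appendage paths. By the definition just above the lemma, an arrow $\til{\AA}\to\VV$ in $\PP(\GG)$ records the existence of a node $\tau\in\til{\AA}$, a simple node $\sigma\in\VV$, and an appendage path $\tau\to\tau_1\to\cdots\to\tau_m\to\sigma$ in $\GG$ with $\VV$ the most downstream structural-pattern node admitting such a path. I would first extend Lemma \ref{appendage path in PRN} (which handles the case of two appendage endpoints) to allow $\sigma$ to be simple: the same construction shows that this path lifts uniquely to
\[
\tau^P\to\tau_1^R\to\tau_1^P\to\cdots\to\tau_m^R\to\tau_m^P\to\sigma^R
\]
in $\RR$, and by Lemma \ref{node correspondence}(c) every intermediate node $\tau_i^R,\tau_i^P$ is appendage, so this is a valid appendage path in $\RR$ from $\tau^P$ to $\sigma^R$.

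Next I would run the four cases using Theorem \ref{thm: homeo subnetwork in PRN}, Lemma \ref{subnetwork correspondence}, and Lemma \ref{node in homeostasis pattern network} to identify the images of $\til{\AA}$ and $\VV$ in $\PP(\RR)$. In case (a) ($\til{\AA}=\{\tau\}$ single, $\VV=\rho_i$ super-simple), Theorem \ref{thm: homeo subnetwork in PRN}(d) splits $\til{\AA}$ into $\til{\AA}^R=\{\tau^R\}$ and $\til{\AA}^P=\{\tau^P\}$; the intrinsic PRN arrow $\tau^R\to\tau^P$ supplies $\til{\AA}^R\to\til{\AA}^P$ in $\PP_\AA(\RR)$, and the lifted appendage path $\tau^P\to\cdots\to\rho_i^R$ supplies $\til{\AA}^P\to\rho_i^R$, yielding \eqref{arrows from appendage to structural in R part1}. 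In case (b) ($\til{\AA}$ non-single, $\VV=\tLL_i$), Theorem \ref{thm: homeo subnetwork in PRN}(c) gives a single component $\til{\AA}^{\RR}$ containing $\tau^P$, and by Lemma \ref{subnetwork correspondence}(a) we have $\sigma^R\in\tLL(\rho_i^P,\rho_{i+1}^R)$; the lifted path therefore gives \eqref{arrows from appendage to structural in R part2}. Cases (c) and (d) are proved by combining the relevant pieces of (a) and (b): in (c) the non-single $\til{\AA}^{\RR}$ reaches the super-simple $\rho_i^R$ directly, and in (d) the single split $\til{\AA}^R\to\til{\AA}^P$ is followed by a lifted appendage path into $\tLL(\rho_i^P,\rho_{i+1}^R)$.

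The main obstacle I anticipate is verifying the \emph{most downstream} condition, i.e. that the structural-pattern node in $\PP_\sS(\RR)$ singled out by each of \eqref{arrows from appendage to structural in R part1}--\eqref{arrows from appendage to structural in R part4} really is the unique target chosen by the rule in the definition preceding the lemma. For this I would use Lemma \ref{super simple order}, which shows the ordering of super-simple nodes lifts bijectively from $\GG$ to $\RR$ with $\rho_i^R\prec\rho_i^P$, together with the reverse direction of the path-lift: any appendage path in $\RR$ from $\tau^P$ to a simple $\RR$-node projects (by collapsing the $2$-node blocks $j^R\to j^P$) to an appendage path in $\GG$ ending at the corresponding simple $\GG$-node. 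Hence the set of structural-pattern nodes of $\PP_\sS(\RR)$ reachable from $\til{\AA}^{\RR}$ (or $\til{\AA}^P$) corresponds under the bijection of Lemma \ref{node in homeostasis pattern network}(b) to the set of structural-pattern nodes of $\PP_\sS(\GG)$ reachable from $\til{\AA}$, so the maxima match. The small technical wrinkle is that $\til{\AA}^R=\{\tau^R\}$ can only emit arrows through $\til{\AA}^P$ (since $\tau^R\to\tau^P$ is its only outgoing edge), which is exactly why the arrow from $\til{\AA}^R$ in cases (a) and (d) is routed through $\til{\AA}^P$ rather than drawn straight to the structural side; I would record this observation explicitly to close the argument.
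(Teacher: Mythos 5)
Your proposal is correct and follows essentially the same route as the paper's proof: identify the images of $\til{\AA}$ and $\VV$ in $\PP(\RR)$ via Lemma \ref{subnetwork correspondence} and Theorem \ref{thm: homeo subnetwork in PRN}, lift the appendage path to $\tau^P \to \cdots \to \sigma^R$, insert the gene-coupling arrow $\tau^R \to \tau^P$ for single appendage nodes, and transfer the extremality condition using the appendage-path correspondence and the super-simple ordering of Lemma \ref{super simple order}. Your explicit extension of Lemma \ref{appendage path in PRN} to paths with a simple endpoint, and your reverse-projection argument for the extremal (most downstream) target, merely make precise two steps the paper's proof invokes implicitly, so no substantive difference remains.
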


\begin{proof}
(a) From $\til{\AA} \in \{ \til{\AA}'_i \}^{r}_{i=1}$ and $\VV = \{ \rho_i \}$, we may assume that $\til{\AA} = \{\tau_1 \}$ with $\tau_1$ is a single appendage node in $\GG$. Then we get the corresponding appendage components in $\PP_{\AA} (\RR)$ as
\[
\til{\AA}^{R} = \{\tau^R_1 \}, \; \til{\AA}^{P} = \{\tau^P_1\}
\]
and the nodes in $\PP_\sS (\RR)$ as
\[
\rho^R_i, \; \til{\LL} (\rho^R_i, \rho^P_i), \; \rho^P_i
\]
From the arrow $\tau_1 \to \rho_i$, $\rho_i$ is the most upstream simple node in $\GG$ that allows an appendage path from $\tau_1$.
Using appendage paths correspondence in Corollary \ref{appendage path in PRN}, we obtain that $\rho_i^R$ is the most upstream simple node in $\RR$ which allows an appendage path from $\til{\AA}^{P}$.
Together with the gene coupling in the PRN, we obtain arrows 
\[
\til{\AA}^{R} \to \til{\AA}^{P} \to \rho^R_i
\]

\smallskip
\noindent
(b) From $\til{\AA} \in \{ \til{\AA}_j \}^{s}_{j=1}$ and $\tLL = \tLL_i$, we get the corresponding appendage component $\til{\AA}^{\RR}$ in $\PP_{\AA} (\RR)$ 
and the backbone node $\tLL (\rho_i^P, \rho_{i+1}^R)$ in $\PP_\sS (\RR)$.
Moreover, from the arrow $\til{\AA} \to \tLL_i$ in $\PP (\GG)$, there exists two gene nodes $\tau \in \til{\AA}$ and $\sigma \in \tLL_i$, such that 
\begin{equation} \notag
\tau \to \sigma
\end{equation}
where $\sigma$ is the most upstream simple node which allows an appendage path from nodes in appendage component $\til{\AA}$. 
Using Lemma \ref{subnetwork correspondence} and the gene coupling in PRN, we have 
\begin{equation} \notag
\{ \tau^R, \tau^P \} \in \til{\AA}^{\RR}, \; \{ \sigma^R, \sigma^P \} \in \tLL (\rho_i^P, \rho_{i+1}^R), \; \text{and } \tau^P \to \sigma^R
\end{equation}
Again from Corollary \ref{appendage path in PRN}, we obtain that $\sigma^R$ is the most upstream simple node in $\RR$ which allows an appendage path from 
$\til{\AA}^{\RR}$. Thus, we conclude that
\begin{equation} \notag
\til{\AA}^{\RR} \to \tLL (\rho_i^P, \rho_{i+1}^R)
\end{equation}
The remaining items follow directly from the first two items, so we omit the proof.
\end{proof}

\begin{lemma} \label{arrow between structural pattern network and appendage component 2}
Let $\GG$ be an input-output GRN and $\RR$ be the associated input-output PRN.
Consider $\VV \in \PP_{\sS} (\GG)$ and $\til{\AA} \in \PP_{\AA} (\GG)$. 
Suppose $\VV \to \til{\AA} \in \PP (\GG)$, then the corresponding arrows in $\PP (\RR)$ are
\begin{enumerate}[{\rm (a)}]
\item If $\VV = \rho_i$ and $\til{\AA} \in \{ \til{\AA}'_i \}^{r}_{i=1}$, then
\begin{equation}
\label{arrows from structural to appendage in R part1}
\rho^P_i \to \til{\AA}^{R} \to \til{\AA}^{P}
\end{equation}

\item If $\VV = \tLL_i$ and $\til{\AA} \in \{ \til{\AA}_j \}^{s}_{j=1}$, then
\begin{equation}
\label{arrows from structural to appendage in R part2}
\tLL (\rho_i^P, \rho_{i+1}^R) \to \til{\AA}^{\RR}
\end{equation}

\item If $\VV = \rho_i$ and $\til{\AA} \in \{ \til{\AA}_j \}^{s}_{j=1}$, then
\begin{equation}
\label{arrows from structural to appendage in R part3}
\rho^P_i \to \til{\AA}^{\RR}
\end{equation}

\item If $\VV = \tLL_i$ and $\til{\AA} \in \{ \til{\AA}'_i \}^{r}_{i=1}$, then
\begin{equation}
\label{arrows from structural to appendage in R part4}
\tLL (\rho_i^P, \rho_{i+1}^R) \to \til{\AA}^{R} \to \til{\AA}^{P}
\end{equation}
\end{enumerate}
\end{lemma}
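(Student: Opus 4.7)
The plan is to mirror the proof of Lemma \ref{arrow between structural pattern network and appendage component 1}, reversing the direction of arrows and replacing ``most upstream simple node'' by ``most downstream simple node'' throughout. The four items split into the same two essential cases: items (a), (c) concern an arrow out of a super-simple node $\rho_i$, and items (b), (d) concern an arrow out of a backbone node $\tLL_i$; the single-appendage case further splits off its lifted target into the pair $\til{\AA}^R\to\til{\AA}^P$ exactly as in Lemma \ref{arrow between structural pattern network and appendage component 1}. I will write out (a) and (b) in detail and then remark that (c) and (d) are obtained by combining the arguments of the two.

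For (a), write $\til{\AA}=\{\tau\}$ with $\tau$ a single appendage node of $\GG$, so $\til{\AA}^R=\{\tau^R\}$ and $\til{\AA}^P=\{\tau^P\}$ are the corresponding appendage components in $\PP_\AA(\RR)$ by Lemma \ref{node in homeostasis pattern network}. By Definition of $\PP(\GG)$ the arrow $\rho_i\to\til{\AA}$ says that $\rho_i$ is the most downstream simple node in $\GG$ from which there is an appendage path to $\tau$; in particular there is a GRN arrow $\rho_i\to\tau_1\to\cdots\to\tau_m\to\tau$ through appendage nodes (possibly $m=0$). Lifting via the PRN recipe replaces $\rho_i\to\tau_1$ by $\rho_i^P\to\tau_1^R$ and each further $\tau_j\to\tau_{j+1}$ by $\tau_j^R\to\tau_j^P\to\tau_{j+1}^R$, so we obtain an appendage path in $\RR$ starting at $\rho_i^P$ and ending at $\tau^R$. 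By Lemma \ref{node correspondence} and the appendage-path correspondence in Lemma \ref{appendage path in PRN}, together with the fact that $\rho_i^R$ only couples to $\rho_i^P$ in $\RR$, the node $\rho_i^P$ is the most downstream simple node in $\RR$ from which there is an appendage path reaching $\til{\AA}^R$. Combining this with the gene coupling $\tau^R\to\tau^P$ yields the claimed arrows $\rho_i^P\to\til{\AA}^R\to\til{\AA}^P$.

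For (b), the arrow $\tLL_i\to\til{\AA}$ in $\PP(\GG)$ yields nodes $\sigma\in\tLL_i$ and $\tau\in\til{\AA}$ with a GRN arrow $\sigma\to\tau$, where $\sigma$ is the most downstream simple node in $\tLL_i$ admitting an appendage path into $\til{\AA}$. By Lemma \ref{subnetwork correspondence} we have $\{\sigma^R,\sigma^P\}\subset\tLL(\rho_i^P,\rho_{i+1}^R)$ and $\{\tau^R,\tau^P\}\subset\til{\AA}^{\RR}$, and the GRN arrow lifts to the PRN arrow $\sigma^P\to\tau^R$. Invoking Lemma \ref{appendage path in PRN} again, $\sigma^P$ is the most downstream simple node of $\RR$ admitting an appendage path into $\til{\AA}^{\RR}$, so by the definition of $\PP(\RR)$ we get $\tLL(\rho_i^P,\rho_{i+1}^R)\to\til{\AA}^{\RR}$. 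Items (c) and (d) now follow by concatenating this argument with the internal $\tau^R\to\tau^P$ edge on whichever side $\til{\AA}$ is a single appendage.

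The only genuine subtlety I anticipate is verifying that the ``most downstream simple node in $\RR$'' condition is satisfied by the correct endpoint of the gene pair, namely $\rho_i^P$ (not $\rho_i^R$) in (a) and $\sigma^P$ (not $\sigma^R$) in (b). This is forced by the bipartite structure: every GRN arrow out of a gene is lifted to a PRN arrow out of its protein node, and the internal edge $\rho_i^R\to\rho_i^P$ is not part of any appendage path into $\til{\AA}$. Once this is noted, the uniqueness of the lift and the injectivity of the appendage-path correspondence in Lemma \ref{appendage path in PRN} make the ``most downstream'' property transfer unambiguously from $\GG$ to $\RR$, completing the argument.
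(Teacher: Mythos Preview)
Your proposal is correct and follows essentially the same argument as the paper: mirror Lemma \ref{arrow between structural pattern network and appendage component 1} with arrows reversed, use Lemma \ref{subnetwork correspondence} to locate the lifted nodes, use Lemma \ref{appendage path in PRN} to transfer the ``most downstream simple node'' property from $\GG$ to $\RR$, and then observe that the bipartite structure forces the arrow to emanate from the protein node ($\rho_i^P$ or $\sigma^P$). The paper treats (a) and (b) in detail and dismisses (c), (d) as immediate, exactly as you do; your extra paragraph on why the protein node rather than the mRNA node realizes the ``most downstream'' condition is a helpful clarification that the paper leaves implicit.
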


\begin{proof}
(a) Similarly as in Lemma \ref{arrow between structural pattern network and appendage component 1}, we assume $\til{\AA} = \{\tau_1 \}$ with $\tau_1$ is a single appendage node in $\GG$, and get the corresponding appendage components in $\PP_{\AA} (\RR)$ as
\[
\til{\AA}^{R} = \{\tau^R_1 \}, \; \til{\AA}^{P}_1 = \{\tau^P_1\}
\] 
and the nodes in $\PP_\sS (\RR)$ as
\[
\rho^R_i, \; \til{\LL} (\rho^R_i, \rho^P_i), \; \rho^P_i
\]
From the arrow $\rho_i \to \tau_1$, $\rho_i$ is the most downstream simple node in $\GG$ that allows an appendage path to $\tau_1$.
Using appendage paths correspondence in Corollary \ref{appendage path in PRN}, we obtain that $\rho_i^P$ is the most downstream simple node in $\RR$ that allows an appendage path to $\til{\AA}^{R}$.
Together with the gene coupling in the PRN, we obtain arrows 
\[
\rho^P_i \to \til{\AA}^{R} \to \til{\AA}^{P}
\]

\smallskip
\noindent
(b) Follows from Lemma \ref{arrow between structural pattern network and appendage component 1}, we get the corresponding appendage component
$\til{\AA}^{\RR}$ in $\PP_{\AA} (\RR)$ 
and the backbone node $\tLL (\rho_i^P, \rho_{i+1}^R)$ in $\PP_\sS (\RR)$.
Moreover, from the arrow $\tLL_i \to \til{\AA}$ in $\PP (\GG)$, there exists two gene nodes $\sigma \in \tLL_i$ and $\tau \in \til{\AA}$, such that 
\begin{equation} \notag
\sigma \to \tau
\end{equation}
where $\sigma$ is the most downstream simple node in $\GG$ that allows an appendage path to nodes in appendage component $\til{\AA}$. 
Using Lemma \ref{subnetwork correspondence} and the gene coupling in PRN, we have 
\begin{equation} \notag
\{ \sigma^R, \sigma^P \} \in \tLL (\rho_i^P, \rho_{i+1}^R), \; \{ \tau^R, \tau^P \} \in \til{\AA}^{\RR}, \; \text{and } \sigma^P \to \tau^R
\end{equation}
Again from Corollary \ref{appendage path in PRN}, we obtain that $\sigma^P$ is the most downstream simple node in $\RR$ that allows an appendage path to 
$\til{\AA}^{\RR}$. Thus, we conclude that
\begin{equation} \notag
\tLL (\rho_i^P, \rho_{i+1}^R) \to \til{\AA}^{\RR}
\end{equation}
Items (c) and (d) follow directly from the above.
\end{proof}

\subsection{Homeostasis Inducing in GRN and PRN}
\label{SS: homeo inducing in the GRN}

Before stating the second main result of the paper we need more terminology.

\begin{definition} \normalfont
Let $\GG$ be a GRN and $\RR$ the associated PRN.
Let $\KK$ be a homeostasis subnetwork or a super-simple node of $\GG$.
\begin{enumerate}[(a)]
\item We say that $\KK$ is \emph{homeostasis inducing} if $h_{\KK^{\RR}} \equiv \det(B^{\RR}) = 0$ at $\II_0$, where $\KK^{\RR}$ and $B^{\RR}$ are the corresponding homeostasis subnetwork and homeostasis block in the PRN.
If $\KK = \{\tau\}$, for a single appendage node $\tau$, then $\KK^{\RR}=\{\tau^R\}$ and $B^{\RR}=[f_{\tau^R,\tau^R}]$.
\item If $\KK = \{\rho\}$ is a super-simple node of $\GG$, let $\rho \Rightarrow \nu \in \GG$ denote that nodes $\nu^R, \nu^P \in \RR$ are generically homeostatic whenever $\rho$ is homeostasis inducing. 
Given a subset of nodes $\NN\subset\GG$, $\rho \Rightarrow \NN$ if for every node $\nu \in \NN$, $\rho \Rightarrow \nu$.
\item If $\KK$ is a homeostasis subnetwork of $\GG$, let $\widetilde{\KK}$ be the corresponding node in $\PP (\GG)$. 
Then, let $\widetilde{\KK} \Rightarrow \nu \in \GG$ denote that nodes $\nu^R, \nu^P \in \RR$ are generically homeostatic whenever $\KK$ is homeostasis inducing.
Given a subset of nodes $\NN\subset\GG$, $\widetilde{\KK} \Rightarrow \NN$ if for every node $\nu \in \NN$, $\widetilde{\KK} \Rightarrow \nu$.
\END
\end{enumerate}

\end{definition}

Now we are ready to use the homeostasis pattern network $\PP(\GG)$ to characterize the homeostasis patterns in both the GRN $\GG$ and the associated PRN $\RR$. 

\begin{theorem} 
\label{thm: patterns of homeostasis in RPN}
Let $\GG$ be a GRN and $\PP(\GG)$ its homeostasis pattern network.
Suppose that $\til{\AA} \in \PP_{\AA} (\GG)$ is an appendage component, and $\VV_s \in \PP_{\sS}(\GG)$ is a backbone node.
Then
\begin{enumerate}[{\rm (a)}]
\item {\em (Structural Homeostasis $\Rightarrow$ Structural Subnetwork)}
$\VV_s$ induces precisely every node of $\PP_{\sS}(\GG)$ strictly downstream from $\VV_s$. 
Moreover, if $\VV_s = \{ \rho \}$ is a super-simple node, then $\rho^P \in \RR$ is also homeostatic.

\item {\em (Structural Homeostasis $\Rightarrow$ Appendage Subnetwork)}
Let $\VV \to \til{\AA} \in \PP(\GG)$ with $\VV \in \PP_\sS(\GG)$. 
$\VV_s \Rightarrow \til{\AA}$ if and only if $\VV$ is strictly downstream from $\VV_s$.
If $\VV_s = \{ \rho \}$ is a super-simple node, then 
$\rho \Rightarrow \til{\AA}$ if and only if $\VV$ is downstream from $\VV_s$.

\item {\em (Appendage Homeostasis $\Rightarrow$ Structural Subnetworks)}
Let $\til{\AA} \to \VV \in \PP (\GG)$ with $\VV \in \PP_\sS (\GG)$. 
$\til{\AA} \Rightarrow \VV_s$ if and only if $\VV$ is strictly upstream from $\VV_s$.
If $\VV_s = \{ \rho \}$ is a super-simple node, then 
$\til{\AA} \Rightarrow \rho$ if and only if $\VV$ is upstream from $\rho$.

\item {\em (Appendage Homeostasis $\Rightarrow$ Appendage Subnetworks)}
Let $\til{\AA}_1, \til{\AA}_2 \in \PP_\AA (\GG)$ be distinct appendage components. 
Let $\til{\AA}_1 \to \VV_1, \VV_2 \to \til{\AA}_2 \in \PP (\GG)$ with $\VV_1, \VV_2 \in \PP_\sS (\GG)$.
$\til{\AA}_1 \Rightarrow \til{\AA}_2$ if and only if 
$\til{\AA}_1$ is upstream from $\til{\AA}_2$ and every path from $\til{\AA}_1$ to $\til{\AA}_2$ in $\PP (\GG)$ contains a super-simple node which is downstream from $\VV_1$ and upstream from $\VV_2$.
Moreover, if $\til{\AA}_1 = \{ \tau \}$ is a single appendage node, then P-null-degradation induces $\tau^R \in \RR$, but R-null-degradation does not induce $\tau^P \in \RR$.
\end{enumerate}
\end{theorem}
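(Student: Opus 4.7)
The plan is to reduce each of (a)--(d) to the corresponding case of Theorem \ref{thm:pattern of homeo} applied to the PRN $\RR$, and then translate the resulting PRN pattern statements back to GRN-level statements using the correspondences established in Section \ref{SS: homeo inducing in the RPN}. Concretely, given $\VV_s\in\PP_\sS(\GG)$ or $\til{\AA}\in\PP_\AA(\GG)$, Theorem \ref{thm:main1} identifies the associated homeostasis subnetwork(s) in $\RR$, while Lemma \ref{node in homeostasis pattern network} and Lemmas \ref{arrow of structural pattern network or appendage component}--\ref{arrow between structural pattern network and appendage component 2} describe the corresponding nodes and arrows in $\PP(\RR)$. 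Since $\VV_s\Rightarrow\nu$ is defined to mean that both $\nu^R$ and $\nu^P$ lie in the PRN pattern, it suffices in each case to apply the PRN theorem and check which pairs $\{\nu^R,\nu^P\}$ are forced.

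For case (a), I apply Theorem \ref{thm:pattern of homeo}(a) to $\RR$. When $\VV_s=\tLL_i$ is a non-trivial backbone, it corresponds to the PRN backbone $\tLL(\rho_i^P,\rho_{i+1}^R)$, and by Lemma \ref{subnetwork correspondence} every simple node $\nu$ in the downstream GRN backbones contributes both $\nu^R$ and $\nu^P$ to the PRN pattern, giving exactly ``strictly downstream from $\VV_s$'' in $\PP_\sS(\GG)$. When $\VV_s=\{\rho\}$ is super-simple, it corresponds to the $\RR$-Haldane $\tLL(\rho^R,\rho^P)$; the PRN-downstream sequence begins with $\rho^P$ (but not $\rho^R$), which yields the extra ``moreover'' clause, while $\tLL(\rho^P,\rho_{i+1}^R)$ and everything beyond are fully GRN-homeostatic. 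Cases (b) and (c) use the same template: for (b), the arrow $\til{\VV}\to\til{\AA}$ in $\PP(\GG)$ is transported by Lemma \ref{arrow between structural pattern network and appendage component 2} into an arrow $\til{\VV}'\to\til{\AA}^\RR$ (or $\til{\VV}'\to\til{\AA}^R\to\til{\AA}^P$ in the single-appendage case) in $\PP(\RR)$, and the PRN biconditional ``$\til{\VV}'$ strictly downstream from the PRN block'' translates under Lemma \ref{node in homeostasis pattern network} to the GRN condition. The non-strict ``upstream/downstream'' in the super-simple subcase of (b) and (c) arises because $\VV=\rho\in\PP_\sS(\GG)$ lifts to $\til{\VV}'=\rho^R$ (for outgoing appendage arrows) or $\til{\VV}'=\rho^P$ (for incoming appendage arrows), and these are already non-strictly downstream / upstream of the $\RR$-Haldane $\tLL(\rho^R,\rho^P)$. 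Case (d) follows by applying Theorem \ref{thm:pattern of homeo}(d) to $\RR$ and checking, via Lemma \ref{arrow of structural pattern network or appendage component}(b), that every path between two non-single GRN appendage components lifts to a path in $\PP(\RR)$ passing through the corresponding super-simple intermediates, so the path-obstruction condition transfers verbatim.

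The additional statement in (d) concerning single-appendage $\til{\AA}_1=\{\tau\}$ is handled separately because of the asymmetric forced arrow $\til{\AA}^R\to\til{\AA}^P$ in $\PP_\AA(\RR)$ that comes from $\tau^R\to\tau^P$. A direct Cramer's rule cofactor computation for $\tau^R$ factors out $f_{\tau^P,\tau^P}$, so P-null-degradation induces $\tau^R$; the cofactor for $\tau^P$ does not contain $f_{\tau^R,\tau^R}$, so R-null-degradation does not generically induce $\tau^P$. This asymmetry is mirrored structurally in $\PP(\RR)$ by the one-way arrow $\til{\AA}^R\to\til{\AA}^P$ together with the fact that $\til{\AA}^P$ has no incoming structural pattern arrow but $\til{\AA}^R$ typically does.

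The main technical obstacle is the bookkeeping at the boundary between GRN-homeostasis (which requires \emph{both} $\nu^R$ and $\nu^P$ to be PRN-homeostatic) and the PRN pattern network. The clean translation works for simple GRN nodes because $\LL(\rho_i^P,\rho_{i+1}^R)$ always contains the pair $\{\nu^R,\nu^P\}$ (Lemma \ref{subnetwork correspondence}), forcing them into the pattern simultaneously; the delicate points are precisely the super-simple nodes, where only $\rho^P$ can be forced without $\rho^R$, and the single appendage nodes, where R- and P-null-degradation break the symmetry. Verifying that these are the \emph{only} exceptions, and that they produce the ``moreover'' clauses as stated, is the step that requires the most care, and it is essentially the content of Lemmas \ref{arrow between structural pattern network and appendage component 1} and \ref{arrow between structural pattern network and appendage component 2}, applied case by case to the four subcases of Theorem \ref{thm:pattern of homeo}.
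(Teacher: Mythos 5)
Your overall strategy coincides with the paper's proof: for (a)--(c) and the biconditional in (d) you lift the data of $\PP(\GG)$ to $\PP(\RR)$ via Lemmas \ref{node in homeostasis pattern network}--\ref{arrow between structural pattern network and appendage component 2}, apply Theorem \ref{thm:pattern of homeo} to the PRN, and translate back; your explanation of the super-simple subtlety in (b),(c) (an arrow $\til{\AA}\to\rho$ lifts to an arrow into $\rho^R$, an arrow $\rho\to\til{\AA}$ lifts to an arrow out of $\rho^P$, so the PRN strictness relative to $\tLL(\rho^R,\rho^P)$ becomes non-strictness at the GRN level) is exactly the mechanism the paper uses, as is the treatment of the ``moreover'' clause in (a) via the $\RR$-Haldane block.

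The one place you genuinely deviate is the final claim of (d), where you replace the paper's combinatorial argument by a Cramer/cofactor computation. Your positive half is sound: in the minor governing $(\tau^R)'$ (delete the $\iota^R$-row and $\tau^R$-column of $J$) the $\tau^P$-row retains only the entry $f_{\tau^P,\tau^P}$, so the determinant has $f_{\tau^P,\tau^P}$ as a factor and $P$-null-degradation forces $(\tau^R)'=0$. The negative half, however, is under-justified as stated: showing that the minor governing $(\tau^P)'$ does not contain $f_{\tau^R,\tau^R}$ only proves non-divisibility; to conclude that $R$-null-degradation does \emph{not} induce $\tau^P$ you must also rule out that this minor vanishes generically for structural reasons (if it were identically zero, $\tau^P$ would be homeostatic for every homeostasis type, $R$-null-degradation included). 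The paper sidesteps this by staying inside the pattern formalism: $\tau^R\to\tau^P$ is an appendage path containing no super-simple node, so Theorem \ref{thm:pattern of homeo}(d) gives non-induction directly, while the absence of $\tau^P\to\tau^R$ (no self-coupling on $\tau$) forces every path from $\tau^P$ to $\tau^R$ through a super-simple node, giving the positive half. To close your gap, either expand the $(\tau^P)'$-minor along its $\tau^P$-row to get $\pm f_{\tau^P,\tau^R}$ times a determinant not involving $f_{\tau^R,\tau^R}$ and argue that this determinant is generically nonzero, or simply invoke Theorem \ref{thm:pattern of homeo}(d) as the paper does.
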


\begin{proof}
Here we apply Lemmas \ref{node in homeostasis pattern network} - \ref{arrow between structural pattern network and appendage component 2} to obtain the nodes and arrows in $\PP (\RR)$, including the arrows in $\PP_\sS (\RR)$ or $\PP_{\AA} (\RR)$, and the arrows between $\PP_\sS (\RR)$ and $\PP_{\AA} (\RR)$. Then we use the results from Theorem \ref{thm:pattern of homeo}.

\smallskip
\noindent
(a) Under assumptions on the GRN $\GG$, Lemmas \ref{node in homeostasis pattern network} and \ref{arrow of structural pattern network or appendage component} show that the nodes of the structural pattern network $\PP_\sS (\RR)$ satisfy
\begin{equation} \notag
\rho^R_1 \to \tLL (\rho_1^R,\rho_1^P) \to \rho^P_1 \to \tLL (\rho_1^P,\rho_2^R) \to \rho^R_2 \to
 \cdots \to \tLL (\rho_{q+1}^R,\rho_{q+1}^P) \to \rho^P_{q+1}
\end{equation}
Suppose $\VV_s$ is homeostasis inducing. Using Theorem \ref{thm:pattern of homeo}$(a)$, we derive that $\VV_s$ induces precisely every node of the structural pattern network strictly downstream from $\VV_s$. 
Moreover, if $\VV_s = \{ \rho \}$ is a super-simple node it follows that homeostasis is induced by $\RR$-Haldane 
$\big[ f_{\rho^P,  \rho^R} \big]$, thus $\rho^P \in \RR$ is also homeostatic.

\smallskip
\noindent
(b) Given $\VV \to \til{\AA} \in \PP (\GG)$, Lemma \ref{arrow between structural pattern network and appendage component 2} shows the corresponding arrows from $\PP_\sS (\RR)$ to $\PP_{\AA} (\RR)$. 
Thus, if $\VV = \{ \rho \}$ is a super-simple node, then
\begin{equation} \notag
\rho^P_i \to \til{\AA}^{R} \to \til{\AA}^{P}
\ \text{ or } \
\rho^P_i \to \til{\AA}^{\RR}
\end{equation}
If  $\VV = \tLL_i$ is a backbone node, then
\begin{equation} \notag
\tLL (\rho_i^P, \rho_{i+1}^R) \to \til{\AA}^{R} \to \til{\AA}^{P}
\ \text{ or } \
\tLL (\rho_i^P, \rho_{i+1}^R) \to \til{\AA}^{\RR}
\end{equation}
Using Lemma \ref{arrow of structural pattern network or appendage component} and Theorem \ref{thm:pattern of homeo}$(b)$, we get $\VV_s \Rightarrow \til{\AA}$ if and only if $\VV$ is strictly downstream from $\VV_s$. Moreover, if $\VV_s = \{ \rho \}$ is a super-simple node, $\VV$ needs only to be downstream from $\rho$.

\smallskip
\noindent
(c) From Lemma \ref{arrow between structural pattern network and appendage component 1}, given $\til{\AA} \to \VV \in \PP (\GG)$, we obtain the corresponding arrows 
from $\PP_{\AA} (\RR)$ to $\PP_\sS (\RR)$. 
Hence, if $\VV = \{ \rho \}$ is a super-simple node, then
\begin{equation} \notag
\til{\AA}^{R} \to \til{\AA}^{P} \to  \rho^R_i 
\ \text{ or } \
\til{\AA}^{\RR} \to  \rho^R_i
\end{equation}
If  $\VV = \tLL_i$ is a backbone node, then
\begin{equation} \notag
\til{\AA}^{R} \to \til{\AA}^{P} \to \tLL (\rho_i^P, \rho_{i+1}^R)
\ \text{ or } \
\til{\AA}^{\RR} \to \tLL (\rho_i^P, \rho_{i+1}^R)
\end{equation}
Together with Lemma \ref{arrow of structural pattern network or appendage component} and Theorem \ref{thm:pattern of homeo}$(c)$, we derive $\til{\AA} \Rightarrow \VV_s$ if and only if $\VV$ is strictly upstream from $\VV_s$. Moreover, if $\VV_s = \{ \rho \}$ is a super-simple node, we need only that $\VV$ is upstream from $\rho$.

\smallskip
\noindent
(d) First, suppose $\til{\AA}_1, \til{\AA}_2 \in \PP_\AA (\GG)$ are distinct appendage components.
From Lemmas \ref{arrow of structural pattern network or appendage component} - \ref{arrow between structural pattern network and appendage component 2}, $\til{\AA}_1$ being upstream from $\til{\AA}_2$ in $\PP (\GG)$ is equivalent to the corresponding node of $\til{\AA}_1$ being upstream from the corresponding node of $\til{\AA}_2$ in $\PP (\RR)$.
In addition, Lemma \ref{simple path correspondence} shows that every path from $\til{\AA}_1$ to $\til{\AA}_2$ in $\PP (\GG)$ containing a super-simple node is equivalent to every path from the corresponding node of $\til{\AA}_1$ to the corresponding node of $\til{\AA}_2$ in $\PP (\RR)$ containing a super-simple node in PRN $\RR$.
Using Theorem \ref{thm:pattern of homeo}$(d)$, we conclude this part.
Second, suppose $\til{\AA}_1 = \{\tau\}$ is a single appendage node and homeostasis inducing, that is, the homeostasis is induced by either R-null-degradation $\big[ f_{\tau^R, \tau^R} \big]$ or P-null-degradation $\big[ f_{\tau^P, \tau^P} \big]$.
Since $\tau^R \to \tau^P \in \PP (\RR)$ is an appendage path, from Theorem \ref{thm:pattern of homeo}$(d)$ we get that R-null-degradation doesn't induce $\tau^P \in \RR$.
On the other side, since $\tau$ is a single appendage node, $\tau^P \to \tau^R \notin \PP (\RR)$. Theorem \ref{thm: homeo subnetwork in PRN} and Lemmas \ref{arrow between structural pattern network and appendage component 1} - \ref{arrow between structural pattern network and appendage component 2} imply that there exists at least one path from $\tau^P$ to $\tau^R$ and every such path must contain a super-simple node in $\PP (\RR)$, therefore P-null-degradation induces $\tau^R \in \RR$.
\end{proof}

The following Lemma shows the connection between the occurrence of homeostasis on the mRNA node and on the protein node associated to the same gene node in GRN.

\begin{lemma} \label{lem: connection on pattern of homeo in the GRN}
Let $\GG$ be an input-output GRN and $\RR$ be the associated input-output PRN.
Suppose that infinitesimal homeostasis occurs in the PRN at $\II_0$, induced by a homeostasis subnetwork $\KK$ of $\RR$.
Then
\begin{enumerate}[{\rm (a)}]
\item If $\nu$ is neither a super-simple nor single appendage node in $\GG$, with $\nu^R,\nu^P\notin\KK$, then $\nu^R$ is homeostatic if and only if $\nu^P$ is homeostatic in $\RR$.

\item If $\rho$ is a super-simple node in $\GG$, with $\rho^R,\rho^P\notin\KK$, then generically $\rho^R$ is homeostatic if and only if $\rho^P$ is homeostatic in $\RR$.

\item If $\tau$ is a single appendage node in $\GG$, with $\tau^R,\tau^P\notin\KK$, then generically $\tau^R$ is homeostatic if and only if $\tau^P$ is homeostatic in $\RR$.
\end{enumerate}
\end{lemma}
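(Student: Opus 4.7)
The plan is to invoke Theorem \ref{thm: patterns of homeostasis in RPN} (equivalently, Theorem \ref{thm:pattern of homeo} applied to the PRN $\RR$) and show that, under the hypothesis $\nu^R,\nu^P \notin \KK$, the nodes $\nu^R$ and $\nu^P$ sit in the pattern network $\PP(\RR)$ so that the inducing conditions for them coincide. The argument splits into three cases along the three bullets (a), (b), (c) of the statement.

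For case (a) I would use Lemma \ref{subnetwork correspondence} to show that $\nu^R$ and $\nu^P$ lie in the same node of $\PP(\RR)$: if $\nu$ is a non-super-simple simple node of $\GG$, part (a) of that lemma puts both in a single backbone node $\tLL(\rho_i^P,\rho_{i+1}^R)\in\PP_\sS(\RR)$; if $\nu$ is an appendage node belonging to a non-single appendage subnetwork of $\GG$, part (d) puts both in a single appendage component $\til{\AA}^{\RR}\in\PP_\AA(\RR)$. In either case the equivalence is immediate from the fact that a homeostasis pattern is constant on a pattern node.

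For case (b), Lemma \ref{super simple order} places $\rho^R\prec\rho^P$ as consecutive super-simple nodes in $\PP_\sS(\RR)$ separated only by the empty backbone $\tLL(\rho^R,\rho^P)=\emptyset$, which corresponds to the $\RR$-Haldane $\LL(\rho^R,\rho^P)$ excluded by the hypothesis. For structural $\KK=\tLL$ other than this $\RR$-Haldane, Theorem \ref{thm:pattern of homeo}(a) gives ``strictly downstream from $\tLL$'' as the common test for $\rho^R$ and $\rho^P$. For appendage $\KK=\til{\AA}$, I would exploit the defining feature of the PRN that the unique incoming arrow to $\rho^P$ is $\rho^R\to\rho^P$: consequently no appendage path can terminate at a simple node in the empty backbone between $\rho^R$ and $\rho^P$ without passing through the super-simple $\rho^R$, so the arrows $\til{\VV}\to\til{\AA}$ and $\til{\AA}\to\til{\VV}$ of Definition \ref{def: appendage component} attach to the same pattern node for both $\rho^R$ and $\rho^P$, and Theorem \ref{thm:pattern of homeo}(c) yields the equivalence generically.

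For case (c), I would observe that the two single appendage subnetworks $\{\tau^R\}$ and $\{\tau^P\}$ of $\RR$ are linked in $\PP_\AA(\RR)$ only by the direct arrow $\til{\AA}^R\to\til{\AA}^P$, because $\tau^P$'s sole incoming PRN arrow is $\tau^R\to\tau^P$ and $\tau^R$'s sole outgoing PRN arrow is the same one. This forces (i) every appendage path ending at $\tau^P$ to have the form $\sigma\pathto\tau^R\to\tau^P$, so the most upstream simple pattern node with an appendage path to $\{\tau^R\}$ agrees with the corresponding one for $\{\tau^P\}$; and symmetrically (ii) every appendage path leaving $\tau^R$ begins $\tau^R\to\tau^P\pathto\sigma$, making the most downstream simple pattern node reached from $\{\tau^R\}$ the same as that reached from $\{\tau^P\}$. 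For any $\KK\notin\{\{\tau^R\},\{\tau^P\}\}$, the conditions of Theorem \ref{thm:pattern of homeo}(b)--(d) then return the same Boolean for induction of $\tau^R$ and of $\tau^P$. The main obstacle I anticipate is the careful bookkeeping in case (b) around the empty backbone $\tLL(\rho^R,\rho^P)$: one must verify that this empty node cannot absorb a pattern-network arrow from any appendage component, and that the ``generic'' qualifier handles the single boundary case $\til{\VV}=\rho^R$ in Theorem \ref{thm:pattern of homeo}(c) where $\rho^R$ is induced trivially and $\rho^P$ follows by being strictly downstream.
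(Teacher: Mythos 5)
Your proposal is correct in substance, but for items (b) and (c) it takes a genuinely different route from the paper. For item (a) you do essentially what the paper does: by Lemma \ref{subnetwork correspondence} (equivalently Lemma \ref{node in homeostasis pattern network}), $\nu^R$ and $\nu^P$ lie in the same homeostasis subnetwork of $\RR$, distinct from $\KK$, so the induction results treat them identically. For (b) and (c), however, the paper does not argue combinatorially at all: it writes the steady-state equation $f_{\rho^P}(\rho^R,\rho^P)=0$, differentiates implicitly to get $f_{\rho^P,\rho^R}\,(\rho^R)'+f_{\rho^P,\rho^P}\,(\rho^P)'=0$, and notes that generically $f_{\rho^P,\rho^P}\neq 0$ (no null-degradation, since $\rho$ is not appendage) and $f_{\rho^P,\rho^R}\neq 0$ (no $\RR$-Haldane, since $\rho^R,\rho^P\notin\KK$), so $(\rho^R)'=0$ if and only if $(\rho^P)'=0$; item (c) is the same computation applied to $f_{\tau^P}(\tau^R,\tau^P)=0$. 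That two-line argument is exactly where the word ``generically'' in the statement comes from, and it settles all positions of $\KK$ uniformly, including the boundary situation you single out ($\til{\VV}=\rho^R$ with appendage $\KK$), where your argument leans on reading ``downstream from $\til{\VV}$'' in Theorem \ref{thm:pattern of homeo}(c) inclusively (``induced trivially''). That inclusive reading is in fact the correct one (compare the non-strict phrasing in Theorem \ref{thm: patterns of homeostasis in RPN}(c) and, e.g., the pattern $\{o^R,o^P\}$ for $R$-null-degradation in Table \ref{T:3node_PRN}), but as stated in the Appendix the theorem is ambiguous on this point, so your route would need either to justify that reading explicitly or to fall back on precisely the implicit-differentiation step the paper uses; similarly, your case (c) bookkeeping (unique arrow into $\{\tau^P\}$, matching attachment nodes $\til{\VV}$) is correct but requires the observation that no appendage path runs from $\tau^P$ back to $\tau^R$ because $\tau$ is a single appendage node without self-coupling. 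What your combinatorial route buys is uniformity with case (a) and no appeal to the admissible equations; what the paper's analytic route buys is brevity, an explicit identification of the nongeneric degeneracies ($\RR$-Haldane and $P$-null-degradation, which reappear in Remarks \ref{rmk:homeo_fail_GRN}), and immunity to the strict/non-strict subtleties of the pattern theorems.
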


\begin{proof}
(a) Using Lemma \ref{node in homeostasis pattern network} and the fact that the GRN-node $\nu$ is neither a super-simple nor single appendage node in $\GG$, we have that $\nu^R, \nu^P \in \RR$ belong to the same homeostasis subnetwork of $\RR$, distinct from $\KK$.
Thus, we conclude the result from Theorem \ref{thm: patterns of homeostasis in RPN}.

\smallskip
\noindent
(b) By Lemma \ref{node in homeostasis pattern network} and Lemma \ref{arrow of structural pattern network or appendage component} and the fact that $\rho$ is a super-simple node in $\GG$, we have  
\begin{equation}
\rho^R \to \tLL (\rho^R,\rho^P) \to \rho^P
\end{equation}
where $\rho^R, \tLL (\rho^R,\rho^P), \rho^P \in \PP_\sS (\RR)$ and $\tLL (\rho^R, \rho^P) = \emptyset$.
Assume that $\rho^R$ is homeostatic in the PRN. 
Then, we have the steady-state equation $f_{\rho^P}(\rho^R,\rho^P)=0$ and implicit differentiation gives
\begin{equation} \label{eq:derivative_of_function}
\frac{d}{d \II} f_{\rho^P} \big|_{\II = \II_0} =   f_{\rho^P, \rho^R} (\rho^R)' +  f_{\rho^P, \rho^P} (\rho^P)' = 0
\end{equation}
Generically, we can assume $f_{\rho^P, \rho^P} \neq 0$ (i.e. no null degradation, since $\rho$ is not appendage), and get
\begin{equation} \notag
(\rho^P)' = 0
\end{equation}
This shows that $\rho^P$ is homeostatic in the PRN.
On the other hand, assume that $\rho^P$ is homeostatic in the PRN. 
Again, \eqref{eq:derivative_of_function} holds.
Generically, we can assume $f_{\rho^P, \rho^R} \neq 0$ (i.e. no Haldane homeostasis, since $\rho^R,\rho^P\notin\KK$), and obtain 
\begin{equation} \notag
(\rho^R)' = 0
\end{equation}
This shows that $\rho^R$ is homeostatic in the PRN.

\smallskip
\noindent
We skip the proof of item (c), since it is analogous to the proof of item (b). 
\end{proof}

\begin{remarks} \normalfont 
\label{rmk:homeo_fail_GRN}
In Lemma \ref{lem: connection on pattern of homeo in the GRN}, items (b) and (c) characterize homeostatic nodes that are not contained in the homeostasis subnetwork $\KK$ inducing homeostasis.
Now, it is easy to see what happens when they do belong to $\KK$.
\begin{enumerate}[(a)]
\item Suppose that $\rho$ is a super-simple node in $\GG$.
On one hand, if $\rho^R\in\KK$ then $\RR$-Haldane homeostasis occurs at $\II_0$ (i.e. $f_{\rho^P,\rho^R} = 0$).
In this case, $\rho^R$ is not homeostatic but $\rho^P$ is homeostatic.
On the other hand, if $\rho^P\in\KK$
then $\rho^R$ and $\rho^P$ fail to be simultaneously homeostatic.
In both cases, $\rho$ is not GRN-homeostatic.

\item Suppose that $\tau$ is a single appendage node in $\GG$.
On one hand, if $\tau^P\in\KK$ then $P$-null-degradation occurs at $\II_0$ (i.e. $f_{\tau^P,\tau^P} = 0$).
In this case, $\tau^P$ is not homeostatic but $\tau^R$ is homeostatic.
On the other hand, if $\tau^R\in\KK$ ($R$-null-degradation) then $\tau^R$ and $\tau^P$ fail to be simultaneously homeostatic.
In both cases, $\tau$ is not GRN-homeostatic.
\END
\end{enumerate}
\end{remarks}

%%%%%%%%%%%%%%%%%%%%%%%%%%%%%%%%
\ignore{
\subsection{Examples of Homeostasis Subnetworks}

We consider Example \ref{E:FFL}, and list some combinatorial properties on the 3-gene feedforward loop GRN and the 6-node associated PRN in Figure~\ref{F:reg_net_1b}:
\[
\begin{array}{ccc} 
\text{Properties }  & \text{GRN} & \text{PRN} \\[2pt] 
\iota o\text{-simple paths} &  \iota \to o, & \iota^R \to \iota^P \to o^R \to o^P, \\ 
& \iota \to \rho \to o & \iota^R \to \iota^P \to \rho^R \to \rho^P \to o^R \to o^P 
\\[5pt] 
\text{simple nodes} & \iota, \rho, o  & \iota^R, \iota^P, \rho^R, \rho^P, o^R, o^P \\[5pt] 
\text{super-simple nodes} &  \iota, o  & \iota^R, \iota^P, o^R, o^P \\[5pt]
\text{appendage nodes} &  \text{none}  & \text{none} 
\end{array} 
\]
From \eqref{det_homeo_feedforward}, we obtain that GRN has one super-simple structural subnetwork
\[
\LL(\iota, o) = \{ \iota, \rho, o \},
\]
and PRN has three super-simple structural subnetworks
\[
\LL(\iota^R, \iota^P) = \{ \iota^R, \iota^P \}, \ \
\LL(\iota^P, o^R) = \{ \iota^P, \rho^R, \rho^P, o^R \}, \ \
\LL(o^R, o^P) = \{ o^R, o^P \}.
\]

We also revisit Example \ref{E:FBI}. Some combinatorial properties on 3-gene feedback inhibition GRN and the associated PRN in Figure~\ref{F:3node} are listed below:
\[
\begin{array}{ccc} 
\text{Properties }  & \text{GRN} & \text{PRN} \\[2pt] 
\iota o\text{-simple paths} &  \iota \to o & \iota^R \to \iota^P \to o^R \to o^P \\[5pt] 
\text{simple nodes} & \iota, o  & \iota^R, \iota^P, o^R, o^P \\[5pt] 
\text{super-simple nodes} &  \iota, o  & \iota^R, \iota^P, o^R, o^P \\[5pt]
\text{appendage nodes} &  \tau  & \tau^R, \tau^P 
\end{array} 
\]
From \eqref{det_homeo_feedback}, we deduce that GRN has one super-simple structural subnetwork
\[
\LL(\iota, o) = \{ \iota, o \},
\]
and one appendage homeostasis subnetwork
\[
\AA = \{ \tau \},
\]
with $\tau$ is a single appendage node in the GRN.
Meanwhile, PRN has three super-simple structural subnetworks
\[
\LL(\iota^R, \iota^P) = \{ \iota^R, \iota^P \}, \ \
\LL(\iota^P, o^R) = \{ \iota^P, o^R \}, \ \
\LL(o^R, o^P) = \{ o^R, o^P \},
\]
and two appendage homeostasis subnetworks
\[
\AA_1 = \{ \tau^R \}, \ \ 
\AA_2 = \{ \tau^P \}.
\]

\subsection{Examples of Homeostasis Patterns}

\begin{example} \normalfont
Consider the GRN in Figure~\ref{F:3nodeGRN} with associated. 
There are five types of infinitesimal homeostasis leading to the different homeostasis patterns listed in \eqref{e:PoH}.

\begin{figure}
\centerline{%
\includegraphics[width=.35\textwidth]{AGRN_3.pdf} \qquad\qquad\qquad
\includegraphics[width=.35\textwidth]{AGRN_3d.pdf}}
\caption{(Left) A $3$-node core GRN input-output network. The super-simple nodes are $\iota, o$; the appendage node is  $\tau$. (Right) The corresponding $6$-node PRN. The super-simple nodes are $\iota^R, \iota^P, o^R, o^P$ and the appendage nodes are $\tau^R, \tau^P$.}
\label{F:3nodeGRN}
\end{figure}

\begin{equation} \label{e:PoH}
\begin{array}{cccc}
\text{Type} & \text{Homeostasis Inducing} & \text{Pattern in PRN} & \text{Pattern in GRN} \\
\text{null degradation} & f_{\tau^R;\tau^R} &  o^R, o^P & o \\
\text{null degradation} &  f_{\tau^P;\tau^P} & \tau^R, o^R, o^P & o  \\
\text{Haldane} & f_{o^R;\iota^P} & \tau^R, \tau^P, o^R, o^P & \tau, o  \\
\text{Haldane} & f_{\iota^P;\iota^R} & \iota^P, \tau^R, \tau^P,  o^R, o^P  & \tau, o  \\
\text{Haldane} & f_{o^P;o^R} & \tau^R, \tau^P,  o^P  & \tau
\end{array}
\end{equation}

The equilibria for the differential equations for the  satisfy:
\[
\begin{array}{rclcl}
\dot{\iota}^R & = & f_{\iota^R}(\iota^R,\II) & = & 0 \\
\dot{\iota}^P & = & f_{\iota^P}(\iota^R,\iota^P)  & = & 0 \\
\dot{\tau}^R & = & f_{\tau^R}(\tau^R,o^P)  & = & 0  \\
\dot{\tau}^P & = & f_{\tau^P}(\tau^R,\tau^P)  & = & 0  \\
\dot{o}^R & = & f_{o^R}(\iota^P,\tau^P,o^R)  & = & 0 \\
\dot{o}^P & = & f_{o^P}(o^R,o^P)  & = & 0 
\end{array}
\]
Expansion around a homeostatic equilibrium where $(o^P)' = 0$ leads to 
\begin{equation} \label{e:lin_homeo}
\begin{array}{clcl}
(a) & f_{\iota^R}(\iota^R,\II) & = & f_{\iota^R; \iota^R} (\iota^R)' + f_{\iota^R; \II} + \cdots  \\
(b) & f_{\iota^P}(\iota^R,\iota^P) & = & f_{\iota^P; \iota^R} (\iota^R)'  +  f_{\iota^P; \iota^P} (\iota^P)'  + \cdots \\
(c) & f_{\tau^R}(\tau^R,o^P) & = & f_{\tau^R; \tau^R}(\tau^R)' + \cdots \\
(d) & f_{\tau^P}(\tau^R,\tau^P) & = &  f_{\tau^P; \tau^R}(\tau^R)' +  f_{\tau^P; \tau^P}(\tau^P)' + \cdots  \\
(e) & f_{o^R}(\iota^P,\tau^P,o^R) & = &  f_{o^R; \iota^P}(\iota^P)' +  f_{o^R; \tau^P}(\tau^P)'  + f_{o^R; o^R}(o^R)' + \cdots  \\
(f) & f_{o^P}(o^R,o^P)  & =  & f_{o^P; o^R}(o^R)' + \cdots 
\end{array}
\end{equation}
Hence \eqref{e:lin_homeo}(f) implies that generically $(o^R)' = 0$ in addition to $(o^P)' = 0$.  Thus:
\begin{equation} \label{e:lin_homeo_gen}
\begin{array}{clcl}
(a) & f_{\iota^R}(\iota^R,\II) & = & f_{\iota^R; \iota^R} (\iota^R)' + f_{\iota^R; \II}   \\
(b) & f_{\iota^P}(\iota^R,\iota^P) & = & f_{\iota^P; \iota^R} (\iota^R)'  +  f_{\iota^P; \iota^P} (\iota^P)'  \\
(c) & f_{\tau^R}(\tau^R,o^P) & = & f_{\tau^R; \tau^R}(\tau^R)' \\
(d) & f_{\tau^P}(\tau^R,\tau^P) & = &  f_{\tau^P; \tau^R}(\tau^R)' +  f_{\tau^P; \tau^P}(\tau^P)'  \\
(e) & f_{o^R}(\iota^P,\tau^P,o^R) & = &  f_{o^R; \iota^P}(\iota^P)' +  f_{o^R; \tau^P}(\tau^P)'   \\
\end{array}
\end{equation}

\paragraph{Null-degradation where $f_{\tau^R;\tau^R} = 0$}
Thus \eqref{e:lin_homeo_gen} reduces to \eqref{e:lin_homeoD}:
 \begin{equation} \label{e:lin_homeoD}
\begin{array}{lcl}
f_{\iota^R}(\iota^R,\II) & = & f_{\iota^R; \iota^R} (\iota^R)' + f_{\iota^R; \II}  \\
f_{\iota^P}(\iota^R,\iota^P) & = & f_{\iota^P; \iota^R} (\iota^R)'  +  f_{\iota^P; \iota^P} (\iota^P)'  \\
%f_{\tau^R}(\tau^R,o^P) & = &   f_{\tau^R; o^P}(o^P)' \\
f_{\tau^P}(\tau^R,\tau^P) & = &  f_{\tau^P; \tau^R}(\tau^R)' +  f_{\tau^P; \tau^P}(\tau^P)'  \\
f_{o^R}(\iota^P,\tau^P,o^R) & = &  f_{o^R; \iota^P}(\iota^P)' +  f_{o^R; \tau^P}(\tau^P)'  \\
%f_{o^P}(o^R,o^P)  & =  & f_{o^P; o^R}(o^R)' 
\end{array}
\end{equation}
Thus the homeostasis pattern is $(o^R)' = 0$ and $(o^P)' = 0$ since the remaining $'$ derivatives are generically nonzero.

\paragraph{Null-degradation given by $f_{\tau^P;\tau^P} = 0$}  

In this case \eqref{e:lin_homeo} reduces to \eqref{e:lin_homeoG}:
 \begin{equation} \label{e:lin_homeoG}
\begin{array}{clcl}
(a) & f_{\iota^R}(\iota^R,\II) & = & f_{\iota^R; \iota^R} (\iota^R)' + f_{\iota^R; \II}  \\
(b) & f_{\iota^P}(\iota^R,\iota^P) & = & f_{\iota^P; \iota^R} (\iota^R)'  +  f_{\iota^P; \iota^P} (\iota^P)'  \\
(c) & f_{\tau^R}(\tau^R,o^P) & = &  f_{\tau^R; \tau^R}(\tau^R)' +  f_{\tau^R; o^P}(o^P)' \\
(d) & f_{\tau^P}(\tau^R,\tau^P) & = &  f_{\tau^P; \tau^R}(\tau^R)'  \\
(e) & f_{o^R}(\iota^P,\tau^P,o^R) & = &  f_{o^R; \iota^P}(\iota^P)' +  f_{o^R; \tau^P}(\tau^P)'   \\
(f) & f_{o^P}(o^R,o^P)  & =  & f_{o^P; o^R}(o^R)' 
\end{array}
\end{equation}
Hence \eqref{e:lin_homeoG}(f) implies $(o^R)' = 0$, \eqref{e:lin_homeoG}(d) implies that generically  $(\tau^R)' = 0$, and \eqref{e:lin_homeoG}(c) implies generically that $(o^P)' = 0$.  We can now reduce \eqref{e:lin_homeoG} to \eqref{e:lin_homeoH}:
\begin{equation} \label{e:lin_homeoH}
\begin{array}{lcl}
f_{\iota^R}(\iota^R,\II) & = & f_{\iota^R; \iota^R} (\iota^R)' + f_{\iota^R; \II}  \\
f_{\iota^P}(\iota^R,\iota^P) & = & f_{\iota^P; \iota^R} (\iota^R)'  +  f_{\iota^P; \iota^P} (\iota^P)'  \\
f_{o^R}(\iota^P,\tau^P,o^R) & = &  f_{o^R; \iota^P}(\iota^P)' +  f_{o^R; \tau^P}(\tau^P)'   \\
\end{array}
\end{equation}
It follows that the remaining $'$ derivatives are nonzero and the homeostasis pattern is $(o^P)' = (o^R)' = (\tau^R)' = 0$.

\paragraph{Haldane Homeostasis given by $f_{o^R;\iota^P} = 0$}
Equation \eqref{e:lin_homeo} implies  $(o^R)' = 0$,  $(o^P)' = 0$ and
\begin{equation} \label{e:lin_homeo_HAL}
\begin{array}{clcl}
(a) & f_{\iota^R}(\iota^R,\II) & = & f_{\iota^R; \iota^R} (\iota^R)' + f_{\iota^R; \II}   \\
(b) & f_{\iota^P}(\iota^R,\iota^P) & = & f_{\iota^P; \iota^R} (\iota^R)'  +  f_{\iota^P; \iota^P} (\iota^P)'  \\
(c) & f_{\tau^R}(\tau^R,o^P) & = & f_{\tau^R; \tau^R}(\tau^R)' \\
(d) & f_{\tau^P}(\tau^R,\tau^P) & = &  f_{\tau^P; \tau^R}(\tau^R)' +  f_{\tau^P; \tau^P}(\tau^P)'  \\
(e) & f_{o^R}(\iota^P,\tau^P,o^R) & = &  f_{o^R; \tau^P}(\tau^P)'   \\
\end{array}
\end{equation}
Equations \eqref{e:lin_homeo_HAL}(d,e) imply that $(\tau^R)'  = (\tau^P)'  = 0$. Thus, this kind of 
homeostasis leads to a homeostasis pattern where four nodes are approximately constant.

\paragraph{Haldane Homeostasis given by $f_{\iota^P;\iota^R} = 0$}

Expansion around a homeostatic equilibrium $(o^P)' = 0$ leads to 
\begin{equation} \label{e:lin_homeo_HAL_2}
\begin{array}{clcl}
(a) & f_{\iota^R}(\iota^R,\II) & = & f_{\iota^R; \iota^R} (\iota^R)' + f_{\iota^R; \II}  \\
(b) & f_{\iota^P}(\iota^R,\iota^P) & = &  f_{\iota^P; \iota^P} (\iota^P)'  \\
(c) & f_{\tau^R}(\tau^R,o^P) & = & f_{\tau^R; \tau^R}(\tau^R)' \\
(d) & f_{\tau^P}(\tau^R,\tau^P) & = &  f_{\tau^P; \tau^R}(\tau^R)' +  f_{\tau^P; \tau^P}(\tau^P)'  \\
(e) & f_{o^R}(\iota^P,\tau^P,o^R) & = &  f_{o^R; \iota^P}(\iota^P)' +  f_{o^R; \tau^P}(\tau^P)'  + f_{o^R; o^R}(o^R)'   \\
(f) & f_{o^P}(o^R,o^P)  & =  & f_{o^P; o^R}(o^R)'
\end{array}
\end{equation}
Equations \eqref{e:lin_homeo_HAL_2}(b,c,f) imply that generically $ (\iota^P)' = (\tau^R)' = (o^R)' = 0$. Hence:
\begin{equation} \label{e:lin_homeo_HAL_2a}
\begin{array}{lcl}
f_{\iota^R}(\iota^R,\II) & = & f_{\iota^R; \iota^R} (\iota^R)' + f_{\iota^R; \II}  \\
f_{\tau^P}(\tau^R,\tau^P) & = &   f_{\tau^P; \tau^P}(\tau^P)'  \\
f_{o^R}(\iota^P,\tau^P,o^R) & = &  f_{o^R; \tau^P}(\tau^P)'   \\
\end{array}
\end{equation}
It follows that $(\tau^P)' = 0$. Thus the homeostasis pattern is $\iota^P, \tau^R, \tau^P,  o^R$, and $o^P$.

\paragraph{Haldane Homeostasis given by $f_{o^P;o^R} = 0$}

Expansion around a homeostatic equilibrium $(o^P)' = 0$ leads to 
\begin{equation} \label{e:lin_homeo_HAL_3}
\begin{array}{clcl}
(a) & f_{\iota^R}(\iota^R,\II) & = & f_{\iota^R; \iota^R} (\iota^R)' + f_{\iota^R; \II}  \\
(b) & f_{\iota^P}(\iota^R,\iota^P) & = &  f_{\iota^P; \iota^R} (\iota^R)' +  f_{\iota^P; \iota^P} (\iota^P)'  \\
(c) & f_{\tau^R}(\tau^R,o^P) & = & f_{\tau^R; \tau^R}(\tau^R)' \\
(d) & f_{\tau^P}(\tau^R,\tau^P) & = &  f_{\tau^P; \tau^R}(\tau^R)' +  f_{\tau^P; \tau^P}(\tau^P)'  \\
(e) & f_{o^R}(\iota^P,\tau^P,o^R) & = &  f_{o^R; \iota^P}(\iota^P)' +  f_{o^R; \tau^P}(\tau^P)'  + f_{o^R; o^R}(o^R)'   \\
(f) & f_{o^P}(o^R,o^P)  & =  &  0 %f_{o^P; o^R}(o^R)'
\end{array}
\end{equation}
Equation \eqref{e:lin_homeo_HAL_3}(c) implies that generically $(\tau^R)' = 0$ and \eqref{e:lin_homeo_HAL_3}(d) implies that $(\tau^P)' = 0$.  Hence:
\begin{equation} \label{e:lin_homeo_HAL_3a}
\begin{array}{clcl}
(a) & f_{\iota^R}(\iota^R,\II) & = & f_{\iota^R; \iota^R} (\iota^R)' + f_{\iota^R; \II}  \\
(b) & f_{\iota^P}(\iota^R,\iota^P) & = & f_{\iota^P; \iota^R} (\iota^R)' +  f_{\iota^P; \iota^P} (\iota^P)'  \\
(e) & f_{o^R}(\iota^P,\tau^P,o^R) & = &   f_{o^R; \iota^P}(\iota^P)' +  f_{o^R; o^R}(o^R)'   \\
\end{array}
\end{equation}
 Thus, generically the homeostasis pattern is $\tau^R, \tau^P, o^P$.
\END 
\end{example}
}

%%%%%%%%%%%%%%%%%%%%%%%%%%%%%%%%

\ignore{
\begin{theorem} \label{thm: H det}
Suppose the Homeostasis matrix $H$ in gene regulatory network with
\begin{equation}
\det (H) = \sum\limits_{\sigma \in S_{n+1}} \Big( sgn (\sigma) \prod\limits^{o}_{i = \rho_1} f_{i, \sigma_i} \Big).
\end{equation}
Then the determinant of Homeostasis matrix $H^{\RR}$ in  is
\begin{equation}
\det (H^{\RR}) = \sum\limits_{\sigma \in S_{2n+3}} \Big( sgn (\sigma) \prod\limits^{o^{P}}_{i = \iota^{P}} f^{\RR}_{i, \sigma_i} \Big),
\end{equation}
where 
\begin{equation}
f^{\RR}_{i, \sigma_i} =
\begin{cases}
f^{\RR}_{i^R, i^R} f^{\RR}_{i^P, i^P} - f^{\RR}_{i^R, i^P} f^{\RR}_{i^P, i^R}, & \text{if } i = {\sigma_i} \\
f^{\RR}_{i^R, \sigma^{P}_i} f^{\RR}_{i^P, i^R}, & \text{if } i \neq {\sigma_i}
\end{cases}.
\end{equation}
\end{theorem}
\begin{proof}
It is easy to check that each term in $\det (H)$ or $\det (H^{\RR})$ can be written as the multiplication with $f_{i, \sigma_i}$ or $f^{\RR}_{i^P, i^R}$ and $f^{\RR}_{i^R, \sigma^P_i}$. Moreover, suppose that $f_{i, \sigma_i} \neq 0$ (i.e. $f_i$ depends on $x_{\sigma_i}$). If $i = {\sigma_i}$, it is corresponding to the self-coupling on $x_i$. Otherwise, it is corresponding to the coupling between $x_i$ and $x_{\sigma_i}$.
Now we first claim that for each term $\prod\limits^{o}_{i = \rho_1} f_{i, \sigma_i}$ in $\det (H)$, there exists the corresponding term $\prod\limits^{o^{P}}_{i = \iota^{P}} f^{\RR}_{i, \sigma_i}$ in $\det (H^{\RR})$.
For $i = \rho_1, \cdots, o$, if $i = {\sigma_i}$, this implies that gene regulatory network gives the self-coupling on $x_i$. Thus in the , we can also obtain the self-coupling on both $x^R_i$ and $x^P_i$, which leads to $f_{i^R, i^R} f_{i^P, i^P} - f_{i^R, i^P}$. Moreover, if $x^R_i$ and $x^P_i$ don't have coupling, this term can be simplified as $f_{i^R, i^R} f_{i^P, i^P}$.
Otherwise, if $i \neq {\sigma_i}$, this implies that there exists the path $x_{\sigma_i} \rightarrow x_{i}$ in gene regulatory network. Following the setting, there exists the following path in the : 
\begin{equation} \notag
x^P_{\sigma_i} \rightarrow x^R_i \rightarrow x^P_i.
\end{equation}
Thus, we deduce $f_{i^R, \sigma^{P}_i} f_{i^P, i^R}$, and prove the claim.
Next, we claim that $\det (H^{\RR})$ equals the sum of $\prod\limits^{o^{P}}_{i = \iota^{P}} f^{\RR}_{i, \sigma_i}$.
Again in the PRN, each non-self coupling path between $x^R_i$ and $x^P_{\sigma_i}$ must in the following way:
\begin{equation} \notag
x^P_{\sigma_i} \rightarrow x^R_i \rightarrow x^P_i.
\end{equation}
Thus, it gives the term $f_{i^R, \sigma^{P}_i} f_{i^P, i^R}$.
Otherwise, for every self-coupling within $x^R_i$ and $x^P_i$, it is straightforward to derive $f_{i^R, i^R} f_{i^P, i^P} - f_{i^R, i^P} f_{i^P, i^R}$.
Thus, we prove the second claim.
Finally, we add the front sign as $sgn (\sigma)$ on all terms, and conclude the theorem.
\end{proof}

\begin{lemma} \label{lem: factorization}
Suppose the gene regulatory network has nodes $\iota, \rho_{i_1}, \cdots, \rho_{i_l}, o$ and the PRN has nodes $\iota^R, \iota^P, \rho^R_{i_1}, \cdots, \rho^P_{i_l}, o^R, o^P$, then
\begin{enumerate} [(a)]
\item If gene regulatory network has the Haldane homeostasis $f_{a, b}$, then PRN has the Haldane homeostasis $f_{a^P, a^R}$ and $f_{b^R, a^P}$.
\item If gene regulatory network has the null-degradation homeostasis $f_{a, a}$, then PRN (without self-coupling) has the null-degradation homeostasis $f_{a^R, a^R}$ and $f_{a^P, a^P}$.
%\item If gene regulatory network has the structural homeostasis, then ?
\end{enumerate}
\end{lemma}

\begin{proof}
For the first part, suppose gene regulatory network has the Haldane homeostasis $f_{a, b}$, which implies that $f_{a, b}$ is the factor of $\det (H)$. 
Using Theorem \ref{thm: H det}, we have both $f_{a^P, a^R}$ and $f_{a^R, b^P}$ are the factors of $\det (H)$.
For the second part, suppose gene regulatory network has the null-degradation homeostasis $f_{a, a}$, which implies that $f_{a, a}$ is the factor of $\det (H)$. 
Again from Theorem \ref{thm: H det}, we have both $f_{a^R, a^R}$ and $f_{a^P, a^P}$ are the factors of $\det (H^d)$.
\end{proof}

\begin{corollary} \label{cor: orop homeo}
Suppose the PRN has infinitesimal homeostasis on $o^R$, then it also has infinitesimal homeostasis on $o^P$.
\end{corollary}
\begin{proof}
From the PRN setting, we have
\begin{equation}
\dot{x}_{o^P} = f_{o^P} (o^R, o^P) = 0,
\end{equation}
where $o^R = o^R (\II)$ and $o^P = o^P (\II)$.
From the above equality, we get
\begin{equation}
f_{o^P, o^R} \times \frac{d o^R (\II)}{d \II} + f_{o^P, o^P} \times \frac{d o^P (\II)}{d \II} = 0.
\end{equation}
Since $(X_0,\II_0)$ is the stable equilibrium and $o^P$ is independent of all other variables , thus we deduce $f_{o^P, o^P} \neq 0$. 
Using the fact the PRN has infinitesimal homeostasis on $o^R$ (i.e. $\frac{d o^R (\II)}{d \II} \big|_{\II = \II_0} = 0$), we conclude the result.
\end{proof}
}

%%%%%%%%%%%%%%%%%%%%%%%%%%%%%%%%%%%%%%

\newpage


\begin{thebibliography}{99}

\bibitem{AGS18}
F. Antoneli, M. Golubitsky, I. Stewart. 
Homeostasis in a feed forward loop gene regulatory network motif, 
{\em J. Theor. Biol.} {\bf 445} (2018) 103--109. \\
\doi{10.1016/j.jtbi.2018.02.026}

\bibitem{BGMN20}
R. Barbuti, R. Gori, P. Milazzo, L. Nasti.
A survey of gene regulatory networks modelling methods: from differential equations, to Boolean and qualitative bioinspired models,
{\em J. Membr. Comput.} {\bf 2} (2020) 207--226. 
\doi{10.1007/s41965-020-00046-y}

\bibitem{BGHS21}
C. Bick, E. Gross, H. A. Harrington, M. T. Schaub. 
What are higher-order networks?
{\em arXiv}:~2104.11329 [cs.SI] (2021).
\doi{10.48550/arXiv.2104.11329}

\bibitem{BJNJO23}
F. Bocci, D. Jia, Q. Nie, M. Jolly, J. Onuchic.
Theoretical and computational tools to model multistable gene regulatory networks,
{\em arXiv}:~2302.07401 [q-bio.MN] (2023). \\
\doi{10.48550/arXiv.2302.07401}

\bibitem{C39}
W. B. Cannon.
{\em The wisdom of the body}.
Norton \& Company, 1939.

\bibitem{CPS15}
M. Chaplain, M. Ptashnyk, M.  Sturrock.
Hopf bifurcation in a gene regulatory network model: Molecular movement causes oscillations,
{\em Math. Models Methods Appl. Sci.} {\bf 25} (6) (2015) 1179--1215.
\doi{10.1142/S021820251550030X}

\bibitem{CH10}
L. Culp and R. Hammack.
An isomorphism theorem for digraphs,
{\em Australas. J. Combin.} {\bf 48} (2012) 205--211.
\href{https://ajc.maths.uq.edu.au/pdf/48/ajc_v48_p205.pdf}{ajc\_v48\_p205}

\bibitem{DTAC13}
J. Dresch, M. Thompson, D. Arnosti, C. Chiu.
Two-Layer Mathematical Modeling Of Gene Expression: Incorporating DNA-Level Information And System Dynamics.
{\em SIAM J. Appl. Math.} {\bf 73} (2) (2013) 804--826.
\doi{10.1137/120887588}

\bibitem{DABGNRS23}
W. Duncan, F. Antoneli, J. Best, M. Golubitsky, J. Jin, F. Nijhout, M. Reed, I. Stewart.
Homeostasis patterns, 
{\em arXiv}:~2306.15145 [math.DS] (2023).
\doi{10.48550/arXiv.2306.15145}

\bibitem{DABGNRS23b}
W. Duncan, F. Antoneli, J. Best, M. Golubitsky, J. Jin, F. Nijhout, M. Reed, I. Stewart.
Homeostasis mode interaction, 2023.
In preparation.

\bibitem{EMMD15}
R. Edwards, A. Machina, G. McGregor, P. van den Driessche. 
A Modelling Framework for Gene Regulatory Networks Including Transcription and Translation. 
{\em Bull. Math. Biol.} {\bf 77} (6) (2015) 953--83. 
\doi{10.1007/s11538-015-0073-9}

\bibitem{ELSS02}
M. Elowitz, A. Levine, E. Siggia, P. Swain.
Stochastic gene expression in a single cell,
{\em Science} {\bf 297} (2002) 1183--1186.
\doi{10.1126/science.107091}

%\bibitem{F19}
%M. Feinberg. 
%{\em Foundations of Chemical Reaction Network Theory},
%Springer Cham, Switzerland, 2019.
%\doi{10.1007/978-3-030-03858-8}

\bibitem{F16}
J. Ferrell. 
Perfect and near-perfect adaptation in cell signaling,
{\em Cell Sys.} {\bf 2} (2) (2016) 62--67.
\doi{10.1016/j.cels.2016.02.006}

\bibitem{GK70}
L. Glass and S. A. Kauffman.
Cooperative components, spatial localization and oscillatory cellular dynamics, 
{\em J. Theor. Biol.} {\bf 34} (1970) 219--237. \\
\doi{10.1016/0022-5193(72)90157-9}

\bibitem{G68a}
J. Griffith. 
Mathematics of cellular control processes. I. Negative feed-back to one gene,
{\em J. Theor. Biol.} {\bf 20} (2) (1968) 202--208.
\doi{10.1016/0022-5193(68)90189-6}

\bibitem{G68b}
J. Griffith. 
Mathematics of cellular control processes. II. Positive feed-back to one gene,
{\em J. Theor. Biol.} {\bf 20} (2) (1968) 209--216.
\doi{10.1016/0022-5193(68)90190-2}

\bibitem{G63}
B. Goodwin.
{\em Temporal Organization In Cells: A Dynamic Theory Of Cellular Control Processes},
Academic Press, New York, 1963.

\bibitem{GS17}
M. Golubitsky and I. Stewart. 
Homeostasis, singularities, and networks, 
{\em J. Math. Biol.} {\bf 74} (2017) 387--407.
\doi{10.1007/s00285-016-1024-2}

\bibitem{GS18}
M. Golubitsky and I. Stewart. 
Homeostasis with multiple inputs,
{\em SIAM J. Appl. Dynam. Sys.} {\bf 17} (2) (2018) 1816--1832.
\doi{10.1137/17M115147X}

\bibitem{GS23}
M. Golubitsky and I. Stewart.
{\em Dynamics and Bifurcation in Networks}, SIAM, Philadelphia, 2023.
\doi{10.1137/1.9781611977332}

\bibitem{GW20}
M. Golubitsky and Y. Wang. 
Infinitesimal homeostasis in three-node input-output networks,
{\em J. Math. Biol.} {\bf 80} (2020) 1163--1185.
\doi{10.1007/s00285-019-01457-x}

\bibitem{GA13}
D. Gonze, W. Abou-Jaoud\'e.
The Goodwin Model: Behind the Hill Function, 
{\em PLoS ONE} {\bf 8} (8) (2013) e69573. \doi{10.1371/journal.pone.0069573}

\bibitem{KEBC05}
M. Kaern, T. Elston, W. Blake, J. Collins. 
Stochasticity in gene expression: from theories to phenotypes,
{\em Nat. Genet. Rev.} {\bf 6} (2005) 451--62.
\doi{10.1038/nrg1615}

\bibitem{KHH20}
X. Kang, B. Hajek, Y. Hanzawa. 
From graph topology to ODE models for gene regulatory networks,
{\em PLoS ONE} {\bf 15} (6) (2020) e0235070. 
\doi{10.1371/journal.pone.0235070}

\bibitem{KBZDA08}
S. Kaplan, A. Bren, A. Zaslaver, E. Dekel, U. Alon. 
Diverse two-dimensional input functions control bacterial sugar genes. 
{\em Mol. Cell} {\bf 29} (6) (2008) 786--792. \\
\doi{10.1016/j.molcel.2008.01.021}

\bibitem{KS08}
G. Karlebach and R. Shamir. 
Modelling and analysis of gene regulatory networks,
{\em Nat. Rev. Mol. Cell Biol.} {\bf 9} (2008) 770--780. 
\doi{10.1038/nrm2503}

\bibitem{K69}
S. Kauffman. 
Metabolic stability and epigenesis in randomly constructed genetic nets,
{\em J. Theor. Biol.} {\bf 22} (3) (1969) 437--467.
\doi{10.1016/0022-5193(69)90015-0}

\bibitem{KE01}
T. Kepler and T. Elston.
Stochasticity in transcriptional regulation: origins, consequences, and mathematical representations,
{\em Biophys. J.} {\bf 81} (2001) 3116--3136. \\
\doi{10.1016/S0006-3495(01)75949-8}

\bibitem{K21}
M. Khammash.
Perfect adaptation in biology,
{\em Cell Sys.} {\bf 12} (6) (2021) 509--521. \\
\doi{10.1016/j.cels.2021.05.020}

\bibitem{K04}
H. Kitano. 
Biological robustness,
{\em Nat. Rev. Genet.} {\bf 5} (2004) 826--837. \\
\doi{10.1038/nrg1471}

\bibitem{K07}
H. Kitano.
Towards a theory of biological robustness,
{\em Mol. Syst. Biol.} {\bf 3} (2007) 137. \\
\doi{10.1038/msb4100179}

\bibitem{K91}
M. Ko.
A stochastic model for gene induction,
{\em J. Theor. Biol.} {\bf 153} (1991) 181--194. \\
\doi{10.1016/S0022-5193(05)80421-7}

\bibitem{LWV16}
X. Lai, O. Wolkenhauer, J. Vera. 
Understanding microRNA-mediated gene regulatory networks through mathematical modelling, 
{\em Nucleic Acids Res.} {\bf 44} (13) (2016) 6019--6035. 
\doi{10.1093/nar/gkw550}

\bibitem{MTELT09}
W. Ma, A. Trusina, H. El-Samad, W. A. Lim, C. Tang.
Defining network topologies that can achieve biochemical adaptation,
{\em Cell} {\bf 138} (2009) 760--773. \\
\doi{10.1016/j.cell.2009.06.013}

\bibitem{MSTZ16}
M. C. Mackey, M. Santill\'an, M. Tyran-Kami\'nska, E. S. Zeron.
{\em Simple Mathematical Models of Gene Regulatory Dynamics},
Springer Cham, Switzerland, 2016. \\
\doi{10.1007/978-3-319-45318-7}

\bibitem{MF21}
J. L. O. Madeira and F. Antoneli.
Homeostasis in Networks with Multiple Input Nodes and Robustness in Bacterial Chemotaxis,
{\em J. Nonlinear Sci.} {\bf 32} (2022) 37. \\
\doi{10.1007/s00332-022-09793-x}

\bibitem{MF22}
J. L. O. Madeira and F. Antoneli.
Homeostasis in Networks with Multiple Inputs,
{\em bioRxiv}: 519500 (2022).
\doi{10.1101/2022.12.07.519500}

\bibitem{MMRS23}
H. A. Makse, S. D. S. Reis, I. Stewart.
Dynamics and Bifurcations in Genetic Circuits
with Fibration Symmetries, 2023.
In preparation.

\bibitem{MJT08}
A. Mart\'{\i}nez-Antonio, S. Janga, D. Thieffry.
Functional organisation of {\em Escherichia coli} transcriptional regulatory network,
{\em J. Mol. Biol.} {\bf 381} (2008) 238--247. \\
\doi{10.1016/j.jmb.2008.05.054}

\bibitem{MA97}
H. McAdams and A. Arkin. 
Stochastic mechanisms in gene expression,
{\em Proc. Natl Acad. Sci. USA} {\bf 94} (1997) 814--819.
\doi{10.1073/pnas.94.3.814}

\bibitem{M05}
A. Mochizuki.
An analytical study of the number of steady states in gene regulatory networks,
{\em J. Theor. Biol.} {\bf 236} (3) (2005) 291--310. 
\doi{10.1016/j.jtbi.2005.03.015}

\bibitem{M08}
A. Mochizuki.
Structure of regulatory networks and diversity of gene expression patterns,
{\em J. Theor. Biol.} {\bf 250} (2) (2008) 307--321.
\doi{10.1016/j.jtbi.2007.09.019}

\bibitem{M03}
N. A. M. Monk. 
Oscillatory expression of Hes1, p53, and NF-$\kappa$B driven by transcriptional time delays, 
{\em Curr. Biol.} {\bf 13} (2003) 1409--1413. \\
\doi{10.1016/S0960-9822(03)00494-9}

\bibitem{NBR14}
F. Nijhout, J. Best, M. Reed.  
Escape from homeostasis, 
{\em Math. Biosci.} {\bf 257} (2014) 104--110.
\doi{10.1016/j.mbs.2014.08.015}

\bibitem{PKPBMB18}
G. Pavlopoulos, P. Kontou, A. Pavlopoulou, C. Bouyioukos, E. Markou, P. Bagos.
Bipartite graphs in systems biology and medicine: a survey of methods and applications, 
{\em GigaScience} {\bf 7} (4) (2018) 1--31.
\doi{10.1093/gigascience/giy014}

\bibitem{PY95}
J. Peccoud and B. Ycart. 
Markovian modeling of gene-product synthesis,
{\em Theor. Popul. Biol.} {\bf 48} (1995) 222--234.
\doi{10.1006/tpbi.1995.1027}

\bibitem{PHB09}
A. Polynikis, S. Hogan, M. di Bernardo.
Comparing different ODE modelling approaches for gene regulatory networks,
{\em J. Theor. Biol.} {\bf 261} (2009) 511--530. \\
\doi{10.1016/j.jtbi.2009.07.040}

\bibitem{RS17a}
I. Rajapakse and S. Smale. 
Mathematics of the genome,
{\em Found. Comput. Math.} {\bf 17} (2017) 1195--1217.
\doi{10.1007/s10208-016-9316-x}

\bibitem{RS17b}
I. Rajapakse and S. Smale.
Emergence of function from coordinated cells in a tissue. 
{\em Proc. Natl. Acad. Sci. USA} {\bf 114} (7) (2017) 1462--1467. 
\doi{10.1073/pnas.1621145114}

\bibitem{RBGSN17}
M. Reed, J. Best, M. Golubitsky, I. Stewart, F. Nijhout. 
Analysis of homeostatic mechanisms in biochemical networks, 
{\em Bull. Math. Biol.} {\bf 79} (2017) 2534--2557. \\
\doi{10.1007/s11538-017-0340-z}

\bibitem{S08}
M. Santill\'an.
On the use of the Hill functions in mathematical models of gene regulatory networks,
{\em Math. Model. Nat. Phenom.} {\bf 3} (2008) 85--97.
\doi{10.1051/mmnp:2008056}

\bibitem{ST23}
R. Singhania, J. Tyson. 
Evolutionary Stability of Small Molecular Regulatory Networks That Exhibit Near-Perfect Adaptation,
{\em Biology} {\bf 12} (2023) 841. \\
\doi{10.3390/biology12060841}

\bibitem{SMXZT17}
W. Shi, W. Ma, L. Xiong, M. Zhang, C. Tang.
Adaptation with transcriptional regulation. 
{\em Sci. Rep.} {\bf 7} (2017) 42648.
\doi{10.1038/srep42648}

\bibitem{SFHNABM07}
Y. Shimoni, G. Friedlander, G. Hetzroni, G. Niv, S. Altuvia, O. Biham, H. Margalit. 
Regulation of gene expression by small non-coding RNAs: a quantitative view, 
{\em Mol. Syst. Biol.} {\bf 3} (2007) 138. 
\doi{10.1038/msb4100181}

\bibitem{S87}
H. Smith. 
Oscillations and multiple steady states in a cyclic gene model with repression,
{\em J. Math. Biol.} {\bf 25} (1987) 169--190.
\doi{10.1007/BF00276388}

\bibitem{SGT17}
T. Snowden, P. van der Graaf, M. Tindall.
Methods of model reduction for large-scale biological systems: a survey of current methods and trends,
{\em Bull. Math. Biol.} {\bf 79} (2017) 1449--1486.
\doi{10.1007/s11538-017-0277-2}

\bibitem{SS96}
R. Somogyi and C. Sniegoski.
Modeling the complexity of genetic networks: understanding multigenic and pleiotropic regulation,
{\em Complexity} {\bf 1} (6) (1996) 45--63. \\
\doi{10.1002/cplx.6130010612}

\bibitem{SGP03}
I. Stewart, M. Golubitsky, M. Pivato.
Symmetry groupoids and patterns of synchrony in coupled cell networks,
{\em SIAM J. Appl. Dynam. Sys.} {\bf 2} (2003) 609--646. \\
\doi{10.1137/S1111111103419896}

\bibitem{TM16}
Z. Tang and D. McMillen. 
Design principles for the analysis and construction of robustly homeostatic biological networks,
{\em J. Theor. Biol.} {\bf 408} (2016) 274--289. \\
\doi{10.1016/j.jtbi.2016.06.036}

\bibitem{T73}
R. Thomas.
Boolean formalization of genetic control circuits,
{\em J. Theor. Biol.} {\bf 42} (3) (1973) 563--585.
\doi{10.1016/0022-5193(73)90247-6}

\bibitem{TD90}
R. Thomas and R. D'Ari.
{\em Biological Feedback},
CRC Press, Boca Raton, 1990. \\
\href{https://hal.science/hal-00087681}{hal.science-00087681}

\bibitem{TO78}
J. Tyson and H Othmer.
{\em The dynamics of feedback control circuits in biochemical pathways}.
In: Progress in Biophysics, vol. 5, eds. R. Rosen and F. Snell. Academic Press, New York (1978) 1--62.
\doi{10.1016/C2013-0-07494-4}

\bibitem{TCN03}
J. Tyson, K. Chen, B. Nov\'ak.
Sniffers, buzzers, toggles and blinkers: dynamics of regulatory and signaling pathways in the cell,
{\em Curr. Opin. Cell Biol.} {\bf 15} (2) (2003) 221--231
\doi{10.1016/S0955-0674(03)00017-6}

\bibitem{TB10}
J. Tyson and B. Nov\'ak. 
Functional motifs in biochemical reaction networks,
{\em Annu. Rev. Phys. Chem.} {\bf 61} (2010) 219--240.
\doi{10.1146/annurev.physchem.012809.103457}

\bibitem{WHAG21}
Y. Wang, Z. Huang, F. Antoneli, M. Golubitsky.
The structure of infinitesimal homeostasis in input-output networks,  
{\em J. Math. Biol.} {\bf 62} (2021) 2--43. \\ \doi{10.1007/s00285-021-01614-1}

\end{thebibliography}
\end{document}